\newcommand{\rTV}{\right \|_{TV}}
\newcommand{\lTV}{\left \|}
\newcommand{\rE}{\right \|_{2}}
\newcommand{\lE}{\left \|}
\newcommand\gzcom[1]{\textcolor{red}{#1}}
\newtheorem{theorem}{Theorem}
\newtheorem{proposition}{Proposition}
\newtheorem{corollary}{Corollary}
\newtheorem{lemma}{Lemma}
\newtheorem{remark}{Remark}
\definecolor{Gray}{gray}{0.85}
\newcolumntype{a}{>{\columncolor{Gray}}X}
\def\sN {\mathcal{N}}
\def\sH {\mathcal{H}}
\def\sP {\mathcal{P}}
\def\sO {\mathcal{O}}
\def \T {\mathbb{T}}
\def \W {\mathbb{W}}
\def \L {\mathcal{L}}
\def \sX {\mathcal{X}}
\def \sY {\mathcal{Y}}
\def \R {\mathbb{R}}
\def \P {\mathbb{P}}
\def\d {\text{d}}
\def\x {\textbf{x}}
\def\y {\textbf{y}}
\def\z {\textbf{z}}
\def\w {\textbf{w}}
\def \I {\mathbb{I}}
\def \Fisher {\mathcal{I}}
\def \W {\textbf{W}}
\def \simiid {\overset{\text{iid}}{\sim}}
\title{Scalability of Metropolis-within-Gibbs schemes for high-dimensional Bayesian models}
\author{Filippo Ascolani\thanks{Department of Statistical Science, Duke University, \url{filippo.ascolani@duke.edu}}\; , Gareth O. Roberts\thanks{Department of Statistics, University of Warwick, \url{gareth.o.roberts@warwick.ac.uk}}\; and Giacomo Zanella\thanks{Department of Decision Sciences and BIDSA, Bocconi University, \url{giacomo.zanella@unibocconi.it}
}}
\begin{document}
\maketitle

\abstract{
We study general coordinate-wise MCMC schemes (such as Metropolis-within-Gibbs samplers), which are commonly used to fit Bayesian non-conjugate hierarchical models.
We relate their convergence properties to the ones of the corresponding (potentially not implementable) random scan Gibbs sampler through the notion of conditional conductance.
This allows us to study the performances of popular Metropolis-within-Gibbs schemes for non-conjugate hierarchical models, in high-dimensional regimes where both number of datapoints and parameters increase. 
Given random data-generating assumptions, we establish dimension-free convergence results, which are in close accordance with numerical evidences. 
Applications to Bayesian models for 
binary regression with unknown hyperparameters and discretely observed diffusions are also discussed.
Motivated by such statistical applications, auxiliary results of independent interest on approximate conductances and perturbation of Markov operators are provided.
}

\section{Introduction}\label{sec:introduction}

Over 30 years ago, the Gibbs sampler (GS) and more general {\em coordinate-wise samplers} (often termed {\em Metropolis-within-Gibbs (MwG) samplers}) were introduced as powerful tools to enable Bayesian inference for structured data (featuring spatial, hierarchical or temporal dependence), for example see \cite{gelfand1990sampling,SmiRob93,besag1993spatial,C92,B11}.
The substantial impact of these methods has spread across a wide variety of scientific fields, for example see \cite{G13,G15,MFR23}.

Coordinate-wise Markov chain Monte Carlo (MCMC) schemes work by partitioning the vector of parameters in different blocks and updating them one at the time, conditional on all the others.
Working in such a coordinate-wise manner can be computationally beneficial in many cases (see e.g.\ Section \ref{sec:motivating}) and it has been observed empirically \citep{SPZ23, luu2024gibbs} that such samplers often have extremely good convergence properties. However theoretical understanding of this phenomenon has proved difficult, as the analysis of the resulting algorithms is often subtle and case-specific. Substantial progress has now been made on understanding the pure GS (see Section \ref{sec:literature}), but implementation of these algorithms is generally restricted to contexts with specialised conditional conjugacy properties and therefore it is important to understand more general coordinate-wise samplers which remain comparatively under-studied.

In this work we study general coordinate-wise samplers (which include MwG as a special case), with a particular focus on relating their convergence properties to the ones of exact GS. A key theoretical result (Corollary \ref{cor:bound_conductance}) states that the performances of a generic coordinate-wise scheme, measured in terms of the conductance of the associated operator, differ from the ones of GS by a multiplicative factor, which we explicitly control through the goodness of the conditional updates. As motivated in the next section, we apply such findings to high-dimensional non-conjugate Bayesian models, where we provide theoretical justification for the empirically observed good performances of coordinate-wise samplers.

\subsection{Motivating example: non-conjugate hierarchical models}\label{sec:motivating}
Our motivating example is given by classical and widely used Bayesian hierarchical models \citep{GH07,G13}, where the observed dataset $Y_{1:J}=(Y_j)_{j=1,\dots,J}$ is divided into $J$ groups, each featuring some local (i.e.\ group-specific) parameters $\theta_j$. 
Consider for example a hierarchical logistic model defined as
\begin{equation}\label{eq:one_level_nested_intro}
\begin{aligned}
Y_j
\mid \theta_j & \sim f(y \mid \theta_j) = \binom{m}{y}\frac{e^{y\theta_j}}{(1+e^{\theta_j})^m} & j = 1, \dots, J,\\
 \theta_j\mid \mu, \tau &\overset{\text{iid}}{\sim} N(\theta \mid \mu, \tau^{-1})& j = 1, \dots, J,\\
 (\mu, \tau)&\sim p_0(\cdot)\,,
\end{aligned}
\end{equation}
with $y \in \{0,\dots, m\}$ and $m$ being a positive integer. Thus, conditional on $\theta_j$, for each group $j$ a sequence of $m$ independent Bernoulli experiments are performed with success probability $e^{\theta_j}/(1+e^{\theta_j})$. Under model \eqref{eq:one_level_nested_intro}, the conditional distribution of $\bm{\theta}$ given $(Y_{1:J},\mu, \tau)$
 factorizes as $\L(\d\bm{\theta}|Y_{1:J},\mu, \tau)=\otimes_{j=1}^J\L(\d\theta_j|Y_j,\mu, \tau)$, where $\otimes$ denotes the product of independent distributions. 
This makes model  \eqref{eq:one_level_nested_intro}
 particularly well-suited for coordinate-wise samplers, since the high-dimensional update of $\bm{\theta}$ given $(\mu, \tau)$ decouples into $J$ independent low-dimensional updates (see Section \ref{sec:complexity} for more discussion on the computational implications of this). Figure \ref{fig:binom_simulations} compares the efficiency of the resulting samplers with two gradient-based MCMC methods. The target is the joint posterior distribution $\L(\d\bm{\theta}, \d \mu,\d \tau|Y_{1:J})$, and we consider the high-dimensional regime where $J\to\infty$, so that both the number of datapoints and parameters, i.e. $n=Jm$ and $p=J + 2$ respectively, diverge. 
Full details on the simulation set-up, including algorithmic and prior specifications, are postponed to Section \ref{sec:discrete_data}. 
 \begin{figure}
\centering
\includegraphics[width=.42\textwidth]{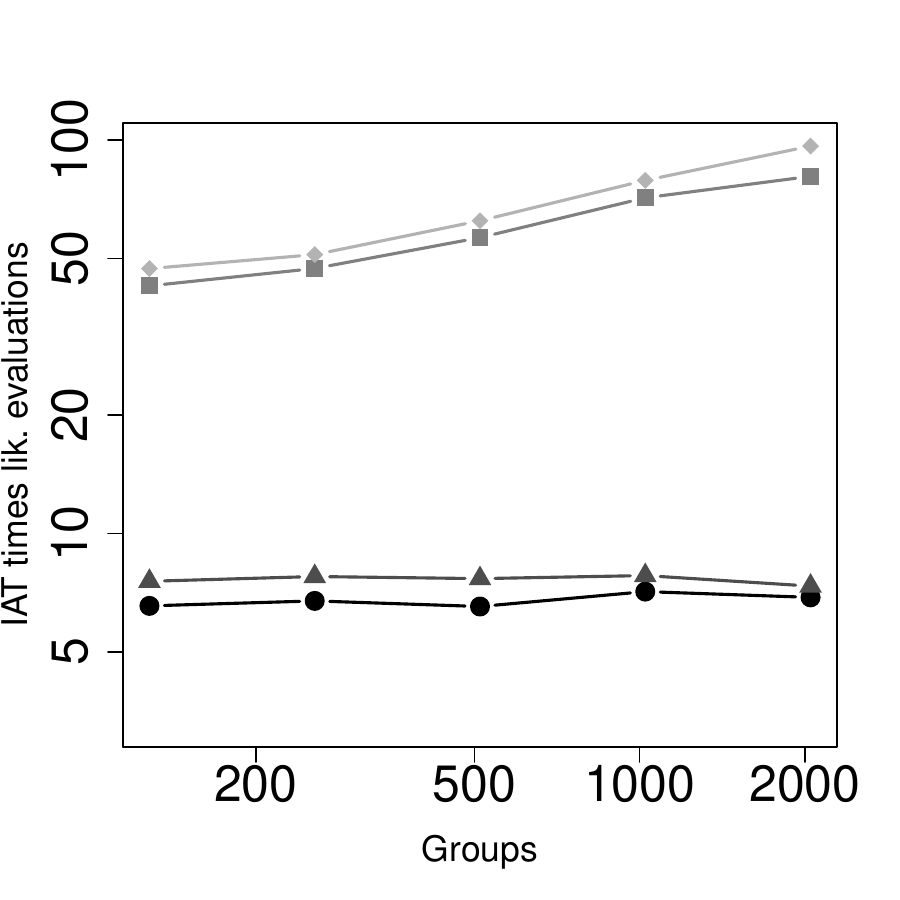} \quad
\includegraphics[width=.48\textwidth]{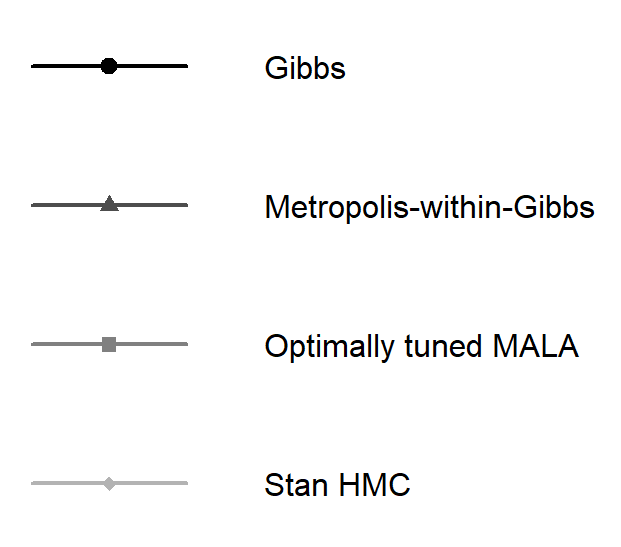}
 \caption{\small{
Median of the integrated autocorrelation times multiplied by the average number of likelihood evaluations per iterations  (on log-scale) for four MCMC schemes targeting the posterior distribution of model \eqref{eq:one_level_nested_intro}, as a function of $J$ (number of groups).
The median refers to repetitions over datasets randomly generated according to the model with true parameters $\mu^* = \tau^* = 1$. See Section \ref{sec:discrete_data} for more details.
  }}
 \label{fig:binom_simulations}
\end{figure}
Note that $\L(\d\theta_j|Y_j,\mu, \tau)$ is not available in closed form for model \eqref{eq:one_level_nested_intro}. %
Exact sampling is possible through an Adaptive Rejection sampler \citep{GW92}, so that exact GS can still be used, but this is potentially hard to implement and its computational cost scales badly with the dimensionality of $\theta_j$. Instead, it is much more common and computationally convenient to implement a MwG scheme performing a $\L(\d\theta_j|Y_j,\mu, \tau)$-invariant Metropolis update of $\theta_j$. 
Both GS and MwG are implemented in Figure \ref{fig:binom_simulations}.

As Figure \ref{fig:binom_simulations} suggests, coordinate-wise samplers can provide state of the art performances for hierarchical models.
In particular, both GS and MwG exhibit dimension-free convergence properties (i.e.\ the number of iterations per effective sample does not grow with $J$) and the slowdown of MwG relative to GS seems to be constant with respect to dimensionality. 
Intuitively, we attribute the good empirical performances of coordinate-wise schemes to the sparse conditional independence structure of hierarchical models, which allows to perform computationally cheap high-dimensional block updates.
This is not a peculiar feature of \eqref{eq:one_level_nested_intro} but rather a common phenomenon occurring in many Bayesian models \cite{SPZ23}.

Whilst the phenomenon of good convergence properties for coordinate-based samplers for hierarchical models has been long recognised, see e.g.\ \cite{SmiRob93}, there has until recently been very little theory to explain this
phenomenon (see Section \ref{sec:literature}). 
In this paper we reduce this gap by proving dimension-free convergence of the mixing times associated to MwG schemes targeting hierarchical models (see Corollary \ref{mixing_times_warm_hier} in Section \ref{sec:dim_free}).

\subsection{Objective and structure of the paper}

The main contributions of the paper can be divided in two parts. 
In the first one, we provide bounds on the approximate conductance of a generic coordinate-wise scheme in terms of the corresponding quantity for the GS (Corollary \ref{cor:bound_conductance}). Working with the approximate version of the conductance 
is crucial for our purposes and subsequent applications (see e.g.\ Remark \ref{asymptotic_cond_Gibbs}).
The general theory naturally applies to MwG schemes, such as those targeting conditionally log-concave distributions (Proposition \ref{prop:RWM_logconcave}).
In the second part, we analyze performances of coordinate-wise samplers for relevant statistical applications, combining the bounds discussed above with specific model properties, statistical asymptotics and some novel auxiliary results on approximate conductances and perturbation of Markov operators. Much emphasis is placed on coordinate-wise schemes for generic two-levels hierarchical models with non-conjugate likelihood, such as \eqref{eq:one_level_nested_intro}, for which we are able to prove dimension-free behaviour of total variation mixing times, under warm (Corollary \ref{mixing_times_warm_hier}) and feasible (Proposition \ref{prop:feasible}) starts. MwG schemes for Bayesian logistic regression with unknown hyper-parameters and inference on discretely observed diffusions are also discussed in some detail.

The structure of the paper is as follows. 
Section \ref{sec:mixing_conductance} briefly recalls the notion of conductance and its connection to mixing times of Markov chains.
Section \ref{sec:main} introduces the class of coordinate-wise schemes that we consider (which include random scan GS and MwG) and provides general results relating the conductance of coordinate-wise schemes to the one of GS (Theorem \ref{main} and Corollary \ref{cor:bound_conductance}). Section \ref{sec:mwg} consider the specific case of MwG schemes and the example of conditionally log-concave targets. 
After discussing independent results about approximate conductances and perturbations of Markov operators (Section \ref{sec:auxiliary}), we move to statistical applications, where we consider Bayesian hierarchical models (Section \ref{sec:hierarchical}), logistic regression with unknown hyperparameters (Section \ref{sec:log_regr_hyper}) and inference for discretely observed diffusion processes (Section \ref{sec:diffusions}).
Mathematical proofs, together with some additional examples and results, are postponed to the Appendix.
R code to replicate all the numerical experiments can be found at \url{https://github.com/gzanella/Metropolis-within-Gibbs/}.

\subsection{Related literature}\label{sec:literature}

Compared to other MCMC schemes (such as gradient-based ones), there are relatively few quantitative theoretical results for coordinate-wise sampling schemes. Moreover, most works focus on the exact Gibbs sampler, e.g.\ \cite{R95,amit1996convergence,R97,K09,K14,Y17,Q19,JH21,Q22,qin2022telescope,CLM23}, for which specific techniques exploiting the exact updating structure and the associated alternating projection representation can be exploited \cite{R01,D08,diaconis2010stochastic,AZ23, ascolani2024entropy}. 
Results for general coordinate-wise MCMC, including MwG, are instead quite limited. Exceptions include \cite{roberts1997geometric,neath2009variable,johnson2013component,jones2014convergence,qin2022convergence}, which mostly focus on geometric or uniform ergodicity, and \cite{tong2020mala}, which provides some analysis of MALA-within-Gibbs schemes.

While working on our manuscript, it came to our attention that \cite{qin2023spectral} independently
and concurrently developed results relating the spectral Gap of coordinate-wise samplers to the ones of exact GS. 
Their results are similar in spirit to the ones we develop in Sections \ref{sec:main} and \ref{sec:mwg}. On the other hand, the specific results and the type of applications are significantly different. 
For example, \cite{qin2023spectral} study the spectral gap, while in this work we work with the approximate version of the conductance defined in \eqref{s_conductance}, which is crucial for the subsequent applications we consider (see e.g.\ Remarks \ref{asymptotic_cond_Gibbs} and \ref{rmk:approx_2}). Also, we consider and develop in some detail various statistical applications (Sections \ref{sec:hierarchical}, \ref{sec:log_regr_hyper} and \ref{sec:diffusions}) where we combine our techniques with posterior asymptotics,  dimensionality reduction and perturbation arguments.

\section{Mixing times and conductance}\label{sec:mixing_conductance}

Let $P$ be a $\pi$-invariant Markov transition kernel on $\sX$, where $\pi \in \mathcal{P}(\sX)$ and $ \mathcal{P}(\sX)$ denotes the collection of probability measures on a space $\sX$. When studying the convergence of a Markov chain to its invariant distribution, a typical object of interest is given by the total variation mixing times starting from $\mu$, defined as
\begin{align*}
t_{mix}(P, \epsilon,\mu)&=\inf\left\{t\geq 0\,:\, \lTV \mu P^t-\pi \rTV<\epsilon \right\},
&\epsilon\in[0,1],\,\mu\in \mathcal{P}(\sX)\,,
\end{align*}
where $P^t$ denotes the $t$-th power of $P$, $\mu P^t(A) = \int_{\sX} P^t(\x,A)\mu(\d \x)$ for any $A\subseteq \sX$ and $\|\cdot\|_{TV}$ denotes the total variation norm. By definition, the mixing times quantify the number of Markov chain's iterations required to obtain a sample from the target distribution $\pi$ up to error $\epsilon$. We will focus on worst-case mixing times with respect to $M$-warm starts. The latter are starting distributions defined as
\begin{align}\label{N_class}
\sN\left(\pi, M \right)=&
\left\{\mu\in\mathcal{P}(\sX)\,:\,\mu(A)\leq M\pi(A) \hbox{ for all }A\subseteq \sX\right\}, &M \geq 1,\,\pi \in \mathcal{P}(\sX) \,.
\end{align}
The associated worst-case mixing times for $P$ targeting $\pi$ are
\begin{equation}\label{absolute_warm_start}
t_{mix}(P, \epsilon, M)=\sup_{\mu\in \sN \left(\pi, M \right)} t_{mix}(P, \epsilon,\mu).
\end{equation}
In order to give explicit quantitative bounds on \eqref{absolute_warm_start}, we will assume that $P$ is $\pi$-reversible, i.e. $\pi(\d \x)P(\x, \d \y) = \pi(\d \y) P(\y, \d \x)$, and positive semi-definite, meaning that
\[
\int \left(Pf(\x)\right)f(\x)\pi(\d \x) \geq 0, \quad Pf(\x) = \int f(\y) P(\x, \d \y),
\]
for every $f \, : \, \sX \, \to \, \sX$ such that $\int f^2(\x) \pi(\d \x) < \infty$.  The latter are common assumptions, which are often satisfied in practice. A typical strategy to determine the convergence properties of a reversible and positive semi-definite Markov chain is to study the \textbf{$s$-conductance} of $P$, i.e.
\begin{align}\label{s_conductance}
\Phi_s(P) &= \inf \left\{\frac{P(\partial A)
}{\pi(A) }; \, s < \pi(A) \leq \frac{1}{2}, A \subset \sX \right\}\,,
&
P(\partial A)=\int_A P(\x, A^c)\pi(\d \x)
\end{align}
with $s \in[0,1/2)$. 
If $s = 0$, we write $\Phi(P) := \Phi_0(P)$ and call it \textbf{conductance}. Also, $P(\partial A)$ is sometimes called the flux of $P$ through $A$ and $P(\partial A)/\pi(A)$ coincides with the probability that the Markov chain exits $A$ in one step, given that it starts from $\pi$ restricted to $A$. It is well known (see e.g. Corollary $1.5$ in \cite{LS93}) that a strictly positive conductance implies exponential convergence of $P^t$ to $\pi$ and thus a bound on the mixing times. This is summarized in the following lemma.
\begin{lemma}\label{conductance_convergence}
Let $P$ be a $\pi$-reversible and positive semi-definite Markov transition kernel. For every $s \in [0, 1/2)$, $t \geq 0$, $M \geq 1$ and $\mu \in \sN\left(\pi, M \right)$, it holds
\[
\lTV \mu P^t - \pi \rTV \leq Ms + M\left(1-\frac{\Phi^2_s(P)}{2} \right)^t\,.
\]
In particular, if $s = \frac{\epsilon}{2M}$ we have
\[
t_{mix}(P, \epsilon, M) \leq \frac{\log(2M)-\log(\epsilon)}{-\log \left(1-\frac{\Phi^2_s(P)}{2} \right)}.
\]
\end{lemma}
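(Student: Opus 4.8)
The plan is to follow the profile (concave majorant) technique of Lovász and Simonovits, adapting it to the $s$-conductance and to the positive semi-definiteness hypothesis. Write $\sigma_t = \mu P^t$ and, since $\mu \in \sN(\pi,M)$ guarantees $\sigma_t \ll \pi$ with bounded density, let $g_t = \d \sigma_t/\d\pi$ and let $g_t^*$ be its decreasing rearrangement with respect to $\pi$. Define the profile $h_t:[0,1]\to[0,1]$ by $h_t(x)=\int_0^x g_t^*(u)\,\d u$, so that $h_t$ is concave, nondecreasing, satisfies $h_t(0)=0$, $h_t(1)=1$, $h_t(x)\ge x$, and
\[
h_t(x)=\sup\{\sigma_t(A):\pi(A)=x\}.
\]
The whole point of this object is that the quantity we want to bound is its maximal deviation above the diagonal, namely $\lTV \sigma_t-\pi\rTV=\sup_{x\in[0,1]}(h_t(x)-x)$. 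Hence it suffices to show that this deviation decays geometrically up to the additive $Ms$ correction.

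The core step is a one-step contraction of the profile. Using the reversibility of $P$ and the definition of the flux $P(\partial A)$, one shows that for every $x\in[s,1-s]$
\[
h_{t+1}(x)\;\le\;\tfrac12\,h_t\!\big(x-\delta_x\big)+\tfrac12\,h_t\!\big(x+\delta_x\big),
\]
for a half-width $\delta_x$ proportional to $\Phi_s(P)\,\min(x,1-x)$. The intuition is that, by the definition of $\Phi_s$, any set $A$ with $s<\pi(A)\le 1/2$ leaks at least a $\Phi_s(P)$-fraction of its mass across its boundary in one step, and reversibility lets one treat $A$ and $A^c$ symmetrically, so the updated mass on the optimal set is squeezed between the profile values at $x\mp\delta_x$. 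Here the positive semi-definiteness of $P$ plays exactly the role of the laziness assumption in \cite{LS93}: it rules out the oscillatory behaviour (e.g.\ near-periodicity) that would otherwise destroy the averaging form above and forces the contraction to act with the clean $\tfrac12/\tfrac12$ weights. I expect this step, together with justifying that positive semi-definiteness may substitute for laziness, to be the main obstacle.

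Given the recursion, the geometric rate is extracted by exploiting concavity of $h_t$. Near the point where $h_t-x$ is maximal, averaging $h_t$ over the symmetric interval of half-width $\delta_x\propto\Phi_s(P)$ decreases its value by a curvature term of order $\delta_x^2$, which turns the inequality into $\max_x(h_{t+1}(x)-x)\le(1-\Phi_s^2(P)/2)\max_x(h_t(x)-x)$ on the range $x\in[s,1-s]$. Iterating from $t=0$, and using the warm-start bound $h_0(x)=\int_0^x g_0^*\le Mx$ (since $g_0=\d\mu/\d\pi\le M$) to control both the initial deviation and the two boundary strips $x<s$ and $x>1-s$, yields
\[
\lTV \mu P^t-\pi\rTV\;\le\;Ms+M\Big(1-\tfrac{\Phi_s^2(P)}{2}\Big)^t,
\]
which is the first claim.

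Finally, the mixing-time bound is immediate from the first inequality: choosing $s=\epsilon/(2M)$ makes the additive term equal to $Ms=\epsilon/2$, so that $\lTV \mu P^t-\pi\rTV<\epsilon$ as soon as $M(1-\Phi_s^2(P)/2)^t\le \epsilon/2$, i.e.\ as soon as $t\ge(\log(2M)-\log\epsilon)/(-\log(1-\Phi_s^2(P)/2))$; taking the supremum over $\mu\in\sN(\pi,M)$ then gives the stated bound on $t_{mix}(P,\epsilon,M)$.
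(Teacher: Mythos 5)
Your proposal takes a genuinely different route from the paper: the paper's proof is a two-line reduction to Corollary 1.5(b) of Lov\'asz and Simonovits (1993), which gives the stated bound with the standard $s$-conductance $\tilde{\Phi}_s(P)$ (the version with denominator $\pi(A)-s$, recalled in the paper's Appendix), followed by the observation that $\Phi_s(P)\leq\tilde{\Phi}_s(P)$, so the weaker bound in terms of $\Phi_s$ follows by monotonicity of $u\mapsto (1-u^2/2)^t$. You instead attempt to reconstruct the Lov\'asz--Simonovits profile argument from scratch, and as it stands the reconstruction has two genuine gaps. First, the entire analytic content of the lemma is the one-step inequality $h_{t+1}(x)\le\tfrac12 h_t(x-\delta_x)+\tfrac12 h_t(x+\delta_x)$, and you do not prove it: you state it and then explicitly defer the justification that positive semi-definiteness can replace laziness, calling it ``the main obstacle.'' That obstacle is essentially the whole lemma, so leaving it open leaves the proof open. (For the record, positive semi-definiteness does suffice, via the inequality $\int_A P(\x,A)\,\pi(\d\x)\ge\pi(A)^2$ --- Lemma 1.1 of Lov\'asz--Simonovits, which the paper uses elsewhere --- but this has to be carried through the profile computation, not just announced.)

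Second, the extraction of the geometric rate is incorrect as described. Concavity of $h_t$ alone produces no quantitative loss under symmetric averaging (a linear profile is concave and loses nothing), so the claimed recursion $\max_x(h_{t+1}(x)-x)\le(1-\Phi_s^2(P)/2)\max_x(h_t(x)-x)$ does not follow from the one-step inequality and is false in general: from $h_{t+1}(x)\le\tfrac12 h_t(x-\delta_x)+\tfrac12 h_t(x+\delta_x)$ and $h_t(y)-y\le c_t$ one only recovers $h_{t+1}(x)-x\le c_t$ with no contraction. The actual Lov\'asz--Simonovits argument runs an induction on the coefficient $c_t$ in a comparison bound of the form $h_t(x)\le x+Ms+c_t\min\left(\sqrt{x-s},\sqrt{1-s-x}\right)$, where the strict concavity of the square root is what converts the half-width $\delta_x\asymp\Phi_s(P)\min(x,1-x)$ into the multiplicative factor $1-\Phi_s^2(P)/2$. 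Without introducing this auxiliary function your induction does not close. The warm-start initialization $h_0(x)\le Mx$ and the final mixing-time computation from the first display are fine.
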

\begin{remark}
The usual definition of $s$-conductance (see e.g. \cite{LS93,D19}) is slightly different with respect to the one given in \eqref{s_conductance}. 
Appendix \ref{alternative_definition} provides results for the alternative definition.
Definition \eqref{s_conductance} is less tight but leads to neater formulas (see e.g. Theorem \ref{main}). 
\end{remark}
\begin{remark}
While being common in the literature, see e.g. \cite{D17, D19, T22}, the warm start assumption can be quite stringent especially in a high dimensional context. Thus it is often of interest to provide mixing times bounds for \emph{feasible} starts, which can be explicitly sampled from (e.g. \cite{D19}). In Section \ref{sec:hierarchical} we provide a feasible start for hierarchical models. 
\end{remark}

\section{Coordinate-wise MCMC}\label{sec:main}
Let $\sX = \sX_1 \times \dots \times \sX_d$ be a product space. Given $\x = (x_1, \dots, x_d)\in\sX$, we denote by $\x_{-i}=(x_j)_{j\neq i}\in \sX_{-i}=\times_{j\neq i}\sX_j $ the vector $\x$ without the $i$-th element and by $\pi_i (\d x_i \mid \x_{-i} )$ the conditional distribution of $x_i$ given $\x_{-i}$ under $\x\sim \pi$.

In this paper we focus on \emph{coordinate-wise kernels}, defined as
\begin{equation}\label{eq:coordinate_wise}
P=\frac{1}{d} \sum_{i=1}^dP_i, \quad P_i(\x, \d \y) = P_i^{\x_{-i}}(x_i, \d y_i)\delta_{\x_{-i}}\left( \d\y_{-i}\right),
\end{equation}
where $P_i^{\x_{-i}}$ is a $\pi_i (\cdot | \x_{-i} )$-invariant Markov kernel on $\sX_i$ and thus, by construction, $P_i$ is invariant with respect to $\pi$. 
At each iteration, a Markov chain evolving according to $P$ picks  uniformly at random a coordinate $i$ in $\{1,\dots,d\}$ and then updates $x_i$ sampling from $P_i^{\x_{-i}}(x_i,\cdot)$, while leaving $\x_{-i}$ unchanged. 
Lemma \ref{lemma:rev_pos_supp} in the Appendix shows that reversibility and positive semi-definiteness of $P$ are implied by the ones of $P_i$, which are in turn equivalent to $\pi(\cdot \mid \x_{-i})$-reversibility and positive semi-definiteness of $P_i^{\x_{-i}}$ for almost every $\x_{-i}$. Thus, in the following we will always silently assume that $P_i^{\x_{-i}}$ is reversible and positive semi-definite, for every $i$ and $\x_{-i}$. The latter are common assumptions which are often satisfied: for example they hold for the operators $P^{\x_{-i}}$ defined in Section \ref{section_log_concave}. Moreover, every reversible operator $Q$ can be made positive semi-definite by considering its \emph{lazy} version, i.e. $\tilde{Q} = \frac{1}{2}I+\frac{1}{2}Q$ with $I$ being the identity operator \citep{LS93}.

An important special case of \eqref{eq:coordinate_wise} is the so-called random scan Gibbs sampler, whose kernel is defined as $G = \frac{1}{d}\sum_{i = 1}^dG_{i}$ where $G_i$ is the kernel that performs exact sampling from $\pi_i (\cdot | \x_{-i} )$, i.e.\
\begin{equation}\label{eq:exact_updates}
G_{i}\left(\x, \d \y\right) = G_i^{\x_{-i}}(x_i, \d y_i)\delta_{\x_{-i}}\left( \y_{-i}\right), \quad G_i^{\x_{-i}}(x_i, \d y_i) = \pi_i(\d y_i \mid \x_{-i}).
\end{equation}
\begin{remark}
Another popular version of coordinate-wise methods, called \emph{deterministic scan}, updates the coordinates according to a pre-specified order. The relationship between different deterministic and random orders is not univocal: even if we expect them to behave in a similar way in most cases, for some specific problems the associated mixing times can significantly vary (see \cite{roberts2015surprising, he2016scan, gaitonde2024comparison}). In this work we restrict to random scan schemes, which somewhat simplify the mathematical treatment: for example, random scan Gibbs as in \eqref{eq:exact_updates} is always positive semi-definite.
\end{remark}
In order to relate $\Phi_s(P)$ to $\Phi_s(G)$ for general coordinate-wise kernels $P$, we introduce the following notion of \emph{conditional conductance} of $P$.
We define
\begin{align}
&
\kappa(P, K) = \min_{i \in \{1, \dots, d\}} \, \kappa_i(P_i, K)
\,,\quad 
\kappa_i(P_i, K) = \inf_{\x \in K}\, \kappa\left(P_i^{\x_{-i}}\right) 
&K\subseteq \sX
\label{minimum_conditional_conductance}
\end{align}
where $\kappa\left(P_i^{\x_{-i}}\right)$ is the conductance of the kernel $P_i^{\x_{-i}}$ on $\sX_{i}$, defined as
\begin{align}
&\kappa\left(P_i^{\x_{-i}}\right) =  
\inf \left\{ \frac{\int_{B} P^{\x_{-i}}_i(x_i, B^c) \pi_i (dx_i|\x_{-i})
}{\pi_i(B \mid \x_{-i} )}; \,  \pi_i(B \mid \x_{-i}) \in\left(0,\frac{1}{2}\right], B \subset \sX_i \right\}
&\x_{-i}\in\sX_{-i}\,.
\label{conditional_conductance}
\end{align}
The latter measures how much the invariant update of $P_i^{\x_{-i}}$ is close to exact sampling as in $G_i^{\x_{-i}}$.

 \begin{remark}
$\kappa_i(P_i,\sX)$ should not be confused with the conductance of $P_i$ on $\sX$. Indeed the latter is always equal to $0$, since $P_i$ leaves $\x_{-i}$ unchanged.
Similarly, $\kappa(P,\sX)$ is not the conductance of $P$ on $\sX$, since it only measures the quality of the conditional updates, not directly the convergence speed of $P^t$ to $\pi$.
\end{remark}

The next theorem provides a connection between the flux of $G$ and the flux of $P$, for an arbitrary coordinate-wise operator $P$, in terms of the conditional conductance. 
\begin{theorem}\label{main}
Let $P$ be as in \eqref{eq:coordinate_wise}.
For every $A,K \subset \sX$ and  $i = 1, \dots, d$ we have
\[
P_{i}(\partial A) \geq \kappa_i(P_i, K)\left(G_{i}(\partial A)-\pi(A \cap K^c)\right)\,.
\]
Moreover, $G_i(\partial A) \geq P_{i}(\partial A)$.
\end{theorem}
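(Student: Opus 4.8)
The plan is to reduce both inequalities to fibre-wise (i.e.\ conditional) statements on $\sX_i$, obtained by disintegrating $\pi$ as $\pi(\d\x)=\pi_i(\d x_i\mid \x_{-i})\,\pi_{-i}(\d\x_{-i})$, where $\pi_{-i}$ denotes the marginal law of $\x_{-i}$. For a set $A\subset\sX$ I write its section $A_{\x_{-i}}=\{x_i\in\sX_i:(x_i,\x_{-i})\in A\}$. First I would record the fibre decomposition, which follows from the fact that $P_i$ only updates the $i$-th coordinate:
\[
P_i(\partial A)=\int_{\sX_{-i}}P_i^{\x_{-i}}(\partial A_{\x_{-i}})\,\pi_{-i}(\d\x_{-i}),
\qquad
P_i^{\x_{-i}}(\partial B)=\int_{B}P_i^{\x_{-i}}(x_i,B^c)\,\pi_i(\d x_i\mid\x_{-i}).
\]
The same identity holds for $G_i$, and because the exact update \eqref{eq:exact_updates} does not depend on $x_i$ one gets $G_i^{\x_{-i}}(\partial B)=\pi_i(B\mid\x_{-i})\,\pi_i(B^c\mid\x_{-i})$. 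Likewise $\pi(A\cap K^c)=\int_{\sX_{-i}}\pi_i\big(A_{\x_{-i}}\cap (K_{\x_{-i}})^c\mid\x_{-i}\big)\,\pi_{-i}(\d\x_{-i})$. It then suffices to prove both claims fibre-by-fibre and integrate.

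For the first inequality I would fix $\x_{-i}$ and set $a=\pi_i(A_{\x_{-i}}\mid\x_{-i})$. By $\pi_i(\cdot\mid\x_{-i})$-reversibility the flux is symmetric, $P_i^{\x_{-i}}(\partial A_{\x_{-i}})=P_i^{\x_{-i}}(\partial A_{\x_{-i}}^c)$, so applying \eqref{conditional_conductance} to whichever of $A_{\x_{-i}},A_{\x_{-i}}^c$ has conditional mass $\le 1/2$ yields $P_i^{\x_{-i}}(\partial A_{\x_{-i}})\ge \kappa(P_i^{\x_{-i}})\min(a,1-a)$. Two cases then arise. If the section $K_{\x_{-i}}$ is nonempty, then $\kappa(P_i^{\x_{-i}})\ge\kappa_i(P_i,K)$ by \eqref{minimum_conditional_conductance}, and the elementary bound $\min(a,1-a)\ge a(1-a)\ge a(1-a)-\pi_i(A_{\x_{-i}}\cap(K_{\x_{-i}})^c\mid\x_{-i})$ gives the desired fibre inequality. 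If instead $K_{\x_{-i}}=\emptyset$, the whole section lies in $K^c$, so $\pi_i(A_{\x_{-i}}\cap (K_{\x_{-i}})^c\mid\x_{-i})=a$, the target $a(1-a)-a=-a^2$ is non-positive, and the inequality holds trivially from $P_i^{\x_{-i}}(\partial A_{\x_{-i}})\ge 0$ and $\kappa_i(P_i,K)\ge 0$. Integrating over $\x_{-i}$ and matching the three integrals identified above proves $P_i(\partial A)\ge\kappa_i(P_i,K)\,(G_i(\partial A)-\pi(A\cap K^c))$.

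For the reverse inequality $G_i(\partial A)\ge P_i(\partial A)$, it suffices to show $P_i^{\x_{-i}}(\partial B)\le a(1-a)=G_i^{\x_{-i}}(\partial B)$ on each fibre, with $B=A_{\x_{-i}}$ and $a=\pi_i(B\mid\x_{-i})$. Writing $f=\mathbbm{1}_B$ and using $P_i^{\x_{-i}}\mathbbm{1}=\mathbbm{1}$ together with invariance of $\pi_i(\cdot\mid\x_{-i})$ gives $P_i^{\x_{-i}}(\partial B)=a-\langle f,P_i^{\x_{-i}}f\rangle_{\pi_i(\cdot\mid\x_{-i})}$. Decomposing $f=a+g$ with $g=f-a$ of zero conditional mean and using that the constant is fixed by $P_i^{\x_{-i}}$, the cross terms vanish and $\langle f,P_i^{\x_{-i}}f\rangle=a^2+\langle g,P_i^{\x_{-i}}g\rangle\ge a^2$, where the last step is exactly positive semi-definiteness of $P_i^{\x_{-i}}$. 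Hence $P_i^{\x_{-i}}(\partial B)\le a-a^2=a(1-a)$, and integrating concludes.

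The fibre decomposition is routine bookkeeping; the genuinely delicate point is the treatment of $K$ in the first inequality, namely recognising that the correction $\pi(A\cap K^c)$ is precisely what absorbs the fibres on which no lower bound on the conditional conductance is available: on such fibres the bracket becomes non-positive, so the crude bound $P_i^{\x_{-i}}(\partial A_{\x_{-i}})\ge 0$ suffices, while on the remaining fibres the convexity inequality $\min(a,1-a)\ge a(1-a)$ bridges the gap between conductance (which compares to $\min(a,1-a)$) and the Gibbs flux (which equals $a(1-a)$). For the second claim, the only non-obvious input is using positive semi-definiteness to obtain $\langle f,P_i^{\x_{-i}}f\rangle\ge(\mathbb{E}_{\pi_i}f)^2$.
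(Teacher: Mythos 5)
Your proof is correct and follows essentially the same route as the paper's: the same fibre decomposition, the same reversibility-based case analysis reducing the conditional flux to the conditional conductance times $\pi_i(A_{\x_{-i}}\mid\x_{-i})\pi_i(A_{\x_{-i}}^c\mid\x_{-i})$, and positive semi-definiteness (your direct computation $\langle f,P_i^{\x_{-i}}f\rangle\ge(\int f\,\d\pi_i)^2$ is exactly the content of Lemma 1.1 of Lov\'asz--Simonovits that the paper cites) for the reverse inequality. The only cosmetic difference is in the treatment of $K$: you do a fibre-wise case split on whether the section $K_{\x_{-i}}$ is empty, whereas the paper restricts the integral over $A$ to $A\cap K$ and bounds the discarded contribution by $\pi(A\cap K^c)$; both yield the identical bound.
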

\begin{remark}
Theorem \ref{main} with $K = \sX$ implies
$$G_i(\partial A) \geq P_{i}(\partial A) \geq \kappa_i(P_i, \sX)G_{i}(\partial A)\,,$$ so that $\kappa(P_i)$ controls 
how much flux is lost when passing from exact to invariant updates on the $i$-th coordinate. The extension to a generic $K\subseteq \sX$ allows to ignore ``bad'' sets which have low probability under $\pi$ (see also Remark \ref{asymptotic_cond_Gibbs}). This turns out to be crucial in some statistical applications, see e.g.\ Section \ref{sec:hierarchical} for the case of hierarchical models.
\end{remark}
We can use Theorem \ref{main} to bound $\Phi_s(P)$ in terms of $\Phi_s(G)$, as detailed in the next corollary.
\begin{corollary}\label{cor:bound_conductance}
Let $P$ be as in \eqref{eq:coordinate_wise}. Then for every $s \in [0, 1/2)$ and $K\subseteq \sX$ we have
\begin{align}\label{eq:cond_bound}
\Phi_s(P)& \geq 
\kappa(P, \sX)\Phi_{s}(G) \quad \text{and} \quad \Phi_s(P) \geq 
\kappa(P,K)\Phi_{s}(G)-\frac{\pi(K^c)}{s}\left(\frac{1}{d}\sum_{i = 1}^d\kappa_i(P_i,K)\right)\,,
\end{align}
with $\kappa(P, K)$ as in \eqref{minimum_conditional_conductance}.
Moreover, $\Phi_s(G) \geq \Phi_s(P)$.
\end{corollary}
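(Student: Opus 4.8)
The plan is to derive everything from Theorem \ref{main} by exploiting the additivity of the flux over the coordinate updates. Since $P=\frac1d\sum_{i=1}^d P_i$, the definition $P(\partial A)=\int_A P(\x,A^c)\pi(\d\x)$ immediately gives $P(\partial A)=\frac1d\sum_{i=1}^d P_i(\partial A)$, and likewise $G(\partial A)=\frac1d\sum_{i=1}^d G_i(\partial A)$. I would also observe at the outset that the first inequality in \eqref{eq:cond_bound} is just the $K=\sX$ instance of the second: taking $K=\sX$ gives $\pi(K^c)=0$ and $\kappa(P,\sX)$ as the leading coefficient, so it suffices to prove the general-$K$ statement and read off the first bound as a special case.

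For the general-$K$ bound, I would fix a set $A$ with $s<\pi(A)\le \tfrac12$, apply Theorem \ref{main} to each coordinate $i$, and average:
\[
P(\partial A)=\frac1d\sum_{i=1}^d P_i(\partial A)\;\ge\;\frac1d\sum_{i=1}^d\kappa_i(P_i,K)\,G_i(\partial A)\;-\;\pi(A\cap K^c)\,\frac1d\sum_{i=1}^d\kappa_i(P_i,K).
\]
In the first (nonnegative) sum I use $G_i(\partial A)\ge 0$ together with $\kappa_i(P_i,K)\ge \kappa(P,K)=\min_i\kappa_i(P_i,K)$ from \eqref{minimum_conditional_conductance}, which lower-bounds it by $\kappa(P,K)\,G(\partial A)$. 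In the subtracted term I keep the average $\frac1d\sum_i\kappa_i(P_i,K)$ untouched — this is what makes the final constant sharp — and only enlarge $\pi(A\cap K^c)\le \pi(K^c)$ to remove the dependence on $A$, obtaining $P(\partial A)\ge \kappa(P,K)\,G(\partial A)-\pi(K^c)\,\frac1d\sum_{i=1}^d\kappa_i(P_i,K)$.

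It remains to normalise by $\pi(A)$ and pass to the infimum. Dividing the previous bound by $\pi(A)$ and using $G(\partial A)/\pi(A)\ge \Phi_s(G)$ — valid because the $s$-conductance constraint $s<\pi(A)\le\tfrac12$ in \eqref{s_conductance} depends only on $\pi$, hence $A$ is admissible for $G$ as well — gives
\[
\frac{P(\partial A)}{\pi(A)}\;\ge\;\kappa(P,K)\,\Phi_s(G)\;-\;\frac{\pi(K^c)}{\pi(A)}\Big(\frac1d\sum_{i=1}^d\kappa_i(P_i,K)\Big).
\]
The one step that genuinely needs care — and the reason the \emph{approximate} conductance is the right object here — is the uniform control of the error term: since every admissible $A$ satisfies $\pi(A)>s$, I may replace $1/\pi(A)$ by the $A$-independent bound $1/s$, after which the right-hand side no longer depends on $A$ and taking $\inf_A$ yields the second inequality in \eqref{eq:cond_bound}. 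With $\Phi_0$ (i.e. $s=0$) this error term would be uncontrolled, so restricting to $s>0$ is essential.

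Finally, the reverse inequality $\Phi_s(G)\ge \Phi_s(P)$ follows from the second assertion of Theorem \ref{main}: from $G_i(\partial A)\ge P_i(\partial A)$ for every $i$, averaging gives $G(\partial A)\ge P(\partial A)$ for all $A$, whence $G(\partial A)/\pi(A)\ge P(\partial A)/\pi(A)\ge \Phi_s(P)$ on the admissible sets, and taking the infimum over such $A$ closes the argument. Overall I expect no serious obstacle: the substance is entirely in Theorem \ref{main}, and the corollary is a matter of averaging, separating the Gibbs-flux and error contributions, and exploiting $\pi(A)>s$ to bound the error uniformly.
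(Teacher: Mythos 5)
Your proposal is correct and follows essentially the same route as the paper: average the coordinate-wise bound of Theorem \ref{main}, lower-bound the positive term via $\kappa_i(P_i,K)\ge\min_i\kappa_i(P_i,K)$ while keeping the average of the $\kappa_i$'s in the error term, enlarge $\pi(A\cap K^c)$ to $\pi(K^c)$, divide by $\pi(A)>s$, and take the infimum; the reverse inequality follows from $G_i(\partial A)\ge P_i(\partial A)$ exactly as in the paper. The only cosmetic difference is that you present the first inequality as the $K=\sX$ case of the second (which works since the error term vanishes identically before any division by $s$ is needed), whereas the paper derives it directly.
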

The inequalities in \eqref{eq:cond_bound} quantify the loss of efficiency incurred by substituting an exact Gibbs update with a $\pi_i (\cdot | \x_{-i} )$-invariant one, provided the conditional conductance is uniformly bounded away from zero. Crucially, the bound is informative even if the conditional conductance is controlled only on a set $K$, provided $\pi(K^c)/s$ is small. 
\begin{remark}
The first inequality in \eqref{eq:cond_bound} is tight, in the sense that for every $c \in [0,1]$ there exists an operator $P$ such that $\kappa(P, \sX)=c$ and $\Phi_s(P) = c\,\Phi_{s}(G)$. A simple example is given by $P_i = cG_i+(1-c)I$ for all $i=1,\dots,d$. See Section \ref{tightness_example} in the Appendix for details.
\end{remark}
\begin{remark}
The results of Corollary \ref{cor:bound_conductance} extend to the case where the update probabilities in \eqref{eq:coordinate_wise} are not uniform. See Corollary \ref{cor:bound_conductance_nonuniform} in the Appendix for a precise statement.
\end{remark}
The bound in \eqref{eq:cond_bound} shares some similarity with the ones in \cite{madras2002markov}, where they decompose a general Markov chain in different, easier to analyze, pieces (see in particular their Theorem $1.1$). Their context, though, is very different, since they are motivated by multi-modal problems and do not consider coordinate-wise schemes.

\section{Applications to Metropolis-within-Gibbs schemes}\label{sec:mwg}
As a first application of the theory developed above, and in particular of Corollary \ref{cor:bound_conductance}, we consider Metropolis-within-Gibbs (MwG) schemes. 
MwG are popular MCMC algorithms that replace the exact conditional updates of Gibbs schemes with $\pi_i (\cdot | \x_{-i} )$-invariant Metropolis-Hastings (MH) updates, thus making coordinate-wise MCMC algorithms applicable to general models as opposed to only conditionally conjugate ones. 
Random-scan MwG kernels take the form of \eqref{eq:coordinate_wise}, with $P^{\x_{-i}}_i$ defined as
\begin{align}\label{eq:operator_MH}
P^{\x_{-i}}_i(x_i, A) 
&=
 \int_A\alpha_i^{\x_{-i}}(x_i, y_i)Q_i^{\x_{-i}}(x_i,\d y_i)+\delta_{x_i}(A)\int \left[1-\alpha_i^{\x_{-i}}(x_i, y_i)\right]Q_i^{\x_{-i}}(x_i,\d y_i)
& A \subset \sX_i\,,
\end{align}
where $\alpha_i^{\x_{-i}}$ is the MH acceptance probability with target $\pi_i (\cdot | \x_{-i} )$ and proposal $Q_i^{\x_{-i}}$, i.e.\ 
\begin{align*}
\alpha^{\x_{-i}}_i(x_i, y_i) 
&=
 \min \left\{ 1, \frac{ 
 \pi_i(\d y_i \mid \x_{-i}) Q_i^{\x_{-i}}(y_i, dx_i)
 }{
 \pi_i(\d x_i \mid \x_{-i}) Q_i^{\x_{-i}}(x_i, dy_i)
 }\right\}
 &x_i,y_i\in\sX_i\,.
\end{align*}
MwG schemes are an instance of coordinate-wise kernels and thus \eqref{eq:cond_bound} applies to them. Moreover, it is well-known that any operator as in \eqref{eq:operator_MH} is reversible (e.g.\ Theorem $7.2$ in \cite{robert1999monte}). We now consider two instances of MwG schemes where the conditional conductance can be controlled, namely independent MH conditional updates and conditionally log-concave target distributions.

\subsection{Conditional updates with independent Metropolis-Hastings}\label{sec:ind}
We first consider MwG schemes with conditional updates performed using Independent Metropolis-Hastings (IMH), meaning that the MH proposal kernel does not depend on $x_i$. Thus we assume (with a slight abuse of notation)
\begin{align}\label{eq:IMH}
Q_i^{\x_{-i}}(x_i,\d y_i)&=Q_i^{\x_{-i}}(\d y_i)&\x\in\sX,\,i=1,\dots,d
\end{align}
for some $Q_i^{\x_{-i}} \in \mathcal{P}(\sX_i)$.
Despite its simplicity, MwG with IMH proposals is routinely used in various contexts, e.g.\ Bayesian inference for diffusions example of Section \ref{sec:diffusions} \citep{roberts2001inference, beskos2006retrospective, papaspiliopoulos2013data}. Moreover, a MwG operator as in \eqref{eq:IMH} is always positive semi-definite, see Lemma $3.1$ in \cite{baxendale2005renewal}.
The next proposition shows that an upper bound on the Radon-Nykodym derivative between $\pi_i(\cdot \mid \x_{-i})$ and $Q_i^{\x_{-i}}$, i.e.\
\begin{align}\label{eq:bounded_deriv}
\frac{ \pi_i(\d x_i \mid \x_{-i})}{Q_i^{\x_{-i}}(\d x_i)} &\leq M
&\x\in\sX,\,i=1,\dots,d\,,
\end{align}
directly implies a lower bound on the conductance of $P = \frac{1}{d}\sum_{i = 1}^dP_i$.
\begin{proposition}\label{prop:conductance_ind}
Define $P_i^{\x_{-i}}$ and $Q_i^{\x_{-i}}$ as in \eqref{eq:operator_MH} and \eqref{eq:IMH}. If \eqref{eq:bounded_deriv} holds, then $\Phi(P) \geq \frac{1}{M}\Phi(G)$.
\end{proposition}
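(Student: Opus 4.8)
The plan is to prove the sharp set-by-set flux comparison $P(\partial A) \geq \frac{1}{M}G(\partial A)$ for every $A \subset \sX$, from which the claim follows immediately: dividing by $\pi(A)$ and taking the infimum over sets with $0 < \pi(A) \leq \frac{1}{2}$ gives $\Phi(P) \geq \frac{1}{M}\Phi(G)$ by the definition \eqref{s_conductance} with $s=0$. I would deliberately \emph{not} route this through the conditional conductance $\kappa(P,\sX)$ of Corollary \ref{cor:bound_conductance}: as explained at the end, that bound only yields the constant $\frac{1}{2M}$, so extracting the extra factor of $2$ is the actual content of the proposition.

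First I would reduce to a single coordinate. Since $P = \frac{1}{d}\sum_i P_i$ and each $P_i$ leaves $\x_{-i}$ unchanged, conditioning on $\x_{-i}$ yields $P_i(\partial A) = \int P_i^{\x_{-i}}(\partial A_{\x_{-i}})\,\pi_{-i}(\d \x_{-i})$, where $A_{\x_{-i}} = \{x_i \in \sX_i : (x_i,\x_{-i}) \in A\}$ is the slice of $A$ and $\pi_{-i}$ is the marginal of $\x_{-i}$; the identical identity holds verbatim for $G_i$. Hence it suffices to establish, for every $i$, every $\x_{-i}$ and every $B \subseteq \sX_i$, the conditional flux bound $P_i^{\x_{-i}}(\partial B) \geq \frac{1}{M}\,G_i^{\x_{-i}}(\partial B)$, and then integrate over $\x_{-i}$ and average over $i$.

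For the conditional bound, set $\rho = \pi_i(\cdot \mid \x_{-i})$, $\nu = Q_i^{\x_{-i}}$ and $w = \d\rho/\d\nu$, so that \eqref{eq:bounded_deriv} reads $w \leq M$. The key computation is that under \eqref{eq:IMH} the acceptance ratio in \eqref{eq:operator_MH} collapses to $\alpha(x_i,y_i) = \min\{1, w(y_i)/w(x_i)\}$; since a move from $x_i \in B$ into $B^c$ cannot arise from a rejection, the flux simplifies to the symmetric form
\[
P_i^{\x_{-i}}(\partial B) = \int_B \int_{B^c} \min\{w(x_i), w(y_i)\}\, \nu(\d x_i)\,\nu(\d y_i).
\]
I would then apply the elementary inequality $\min\{w(x_i), w(y_i)\} \geq w(x_i)w(y_i)/M$, valid because the larger factor is at most $M$. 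This factorises the double integral into $\frac{1}{M}\big(\int_B w\,\d\nu\big)\big(\int_{B^c} w\,\d\nu\big) = \frac{1}{M}\rho(B)\rho(B^c)$, and recognising $\rho(B)\rho(B^c) = G_i^{\x_{-i}}(\partial B)$ as the flux of the exact update \eqref{eq:exact_updates} closes the conditional step.

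The main subtlety, and the reason this is not a one-line corollary, is precisely this last comparison. The Gibbs flux carries the extra factor $\rho(B^c) \geq \frac{1}{2}$, so passing to conductances by dividing by $\rho(B)$ and bounding $\rho(B^c) \geq \frac{1}{2}$ would give only $\kappa(P_i^{\x_{-i}}) \geq \frac{1}{2M}$ — and this is genuinely attained as $\rho(B) \to \frac{1}{2}$ when $M=1$ — whence Corollary \ref{cor:bound_conductance} loses a factor of $2$. Keeping the comparison at the level of fluxes, where the $\rho(B^c)$ factor appears on both sides and cancels, is exactly what preserves the sharp constant $\frac{1}{M}$. The only point needing mild care is the change of measure $\rho(\d x) = w(x)\nu(\d x)$ used in the flux identity, but \eqref{eq:bounded_deriv} already presupposes $\rho \ll \nu$, so $w$ is well defined and the manipulation is justified.
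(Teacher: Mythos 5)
Your proof is correct, and it takes a genuinely different route from the paper's. The paper bounds the acceptance probability from below by $\alpha_i^{\x_{-i}}(x_i,y_i)\geq \frac{1}{M}\frac{\d \pi_i(\cdot\mid\x_{-i})}{\d Q_i^{\x_{-i}}}(y_i)$, deduces the uniform minorization $P_i^{\x_{-i}}(x_i,B^c)\geq\frac{1}{M}\pi_i(B^c\mid\x_{-i})$, asserts that this gives $\kappa(P_i^{\x_{-i}})\geq 1/M$, and concludes via Corollary \ref{cor:bound_conductance}. You instead write the flux of the IMH update as the symmetric double integral $\int_B\int_{B^c}\min\{w(x_i),w(y_i)\}\,\nu(\d x_i)\nu(\d y_i)$ and compare it directly to the Gibbs flux $\rho(B)\rho(B^c)$ via $\min\{w(x),w(y)\}\geq w(x)w(y)/M$, never passing through the conditional conductance. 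The two key inequalities are essentially the same (multiply the paper's bound on $\alpha$ by $w(x_i)$), so the computational core is shared; the substantive difference is where the comparison is made. Your worry about the factor of $2$ is well founded: with the paper's definition \eqref{conditional_conductance}, dividing the flux bound by $\pi_i(B\mid\x_{-i})$ alone yields only $\kappa(P_i^{\x_{-i}})\geq \pi_i(B^c\mid\x_{-i})/M\geq \frac{1}{2M}$, and indeed $\kappa(P_i^{\x_{-i}})$ equals $1/2$ already for exact sampling (the case $M=1$), so the intermediate claim $\kappa(P_i^{\x_{-i}})\geq 1/M$ does not follow from the displayed minorization and fails for $M$ near $1$; applied literally, the Corollary route returns $\Phi(P)\geq\frac{1}{2M}\Phi(G)$. (The paper's step reads as if $\kappa$ were normalized by $\pi_i(B\mid\x_{-i})\pi_i(B^c\mid\x_{-i})$, an alternative convention under which exact sampling has conditional conductance $1$ and the step would be valid.) Your flux-level argument, where the $\rho(B^c)$ factor appears on both sides and cancels, is what actually delivers the constant $\frac{1}{M}$ in the statement as written, at the modest cost of redoing the slicing argument that Theorem \ref{main} would otherwise package for you.
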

Proposition \ref{prop:conductance_ind} is a consequence of Corollary \ref{cor:bound_conductance} and the fact that \eqref{eq:bounded_deriv} implies a bound on the conditional conductance of the independent Metropolis-Hastings updates.

While it is well-known that IMH suffers from the curse of dimensionality (i.e.\ performances often deteriorate exponentially with $d$ if the update is jointly applied to all coordinates, see e.g.\ Figure \ref{fig:binom_increasingcovariates}, or \cite[Prop.2]{deligiannidis2018ergodic} for lower bounds on IMH asymptotic variances), Proposition \ref{prop:conductance_ind} implies that MwG schemes with IMH proposals only pay a constant slowdown relative to exact Gibbs if the dimensionality of each $\sX_i$ is fixed. This result will be used in Section \ref{sec:diffusions} to provide a lower bound on the conductance of a data augmentation scheme for discretely observed diffusions.

\subsection{Conditionally log-concave distributions}\label{section_log_concave}
In this section we consider the case where the conditional distributions $\pi_i(\cdot \mid \x_{-i})$ are 
strongly log-concave and smooth. 
More specifically, we take $\sX=\R^d$ and assume that the target distribution admits a density $\pi(\x)$ with respect to the Lebesgue measure such that 
$f(\x) := -\log \pi(\x)$ satisfies
\begin{align}\label{eq:m_convex_L_smooth}
\frac{m_i(\x_{-i})}{2}\lE x_i - y_i \rE^2&\leq f(\y)-f(\x)-\nabla f(\x)^\top(\y-\x) \leq \frac{L_i(\x_{-i})}{2}\lE x_i - y_i \rE^2
\end{align}
for all $\x,\y \in \R^d$ such that $\x_{-i}=\y_{-i}$.
The ratio $c_i(\x_{-i}) = L_i(\x_{-i})/m_i(\x_{-i})$ is the condition number of the conditional distribution $\pi_i(\cdot \mid \x_{-i})$.

Log-concavity and smoothness are common assumptions in the MCMC literature \cite{D17, DM17}, under which bounds on the conductance of various MCMC algorithms are available, see e.g.\ \citep{D19, AL22}. 
Combining bounds available in the literature with Corollary \ref{cor:bound_conductance} one can bound the conductance of MwG in terms of the Gibbs one. 

Consider for example targeting $\pi_i(\cdot \mid \x_{-i})$ using the IMH algorithm 
with proposal distribution
\begin{equation}\label{proposal_ind_logconcave}
Q_i^{\x_{-i}}(\d y_i) = N\left(y_i ; x_i^*(\x_{-i}), m_i(\x_{-i})^{-1}\I_{d_i}\right) \d y_i,
\end{equation}
where $N(y;\mu,\Sigma)$ denotes the density of a Gaussian with mean $\mu$ and covariance $\Sigma$ evaluated at $y$, 
$\I_d$ denotes the $d\times d$ identity matrix and $x_i^*(\x_{-i})$ denotes the mode of $\pi_i(\cdot \mid \x_{-i})$. The existence and uniqueness of $x_i^*(\x_{-i})$ follows by the log-concavity of $\pi_i(\cdot \mid \x_{-i})$.
\begin{proposition}\label{prop:IMH_logconcave}
Assume \eqref{eq:m_convex_L_smooth}. 
For  
$P_i^{\x_{-i}}$ and $Q_i^{\x_{-i}}$ defined as in 
\eqref{eq:operator_MH} and \eqref{proposal_ind_logconcave},
we have
\begin{align}\label{cond_indMH_logconcave}
\kappa \left(P_i^{\x_{-i}} \right) &\geq c_i(\x_{-i})^{-\frac{d_i}{2}}
&\x\in\R^d\,,
\end{align}
which implies
\[
\Phi(P) \geq \left[\min_{i}\inf_{\x_{-i}}\, c_i(\x_{-i})^{-\frac{d_i}{2}}\right]\Phi(G).
\]
\end{proposition}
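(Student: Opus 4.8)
The plan is to derive the global bound on $\Phi(P)$ from the conditional bound \eqref{cond_indMH_logconcave}, and then to prove \eqref{cond_indMH_logconcave} by reducing it, via the machinery already set up for independent Metropolis--Hastings, to a bound on a single Radon--Nikodym derivative. The second claim is immediate from the first: by definition \eqref{minimum_conditional_conductance} we have $\kappa(P,\sX)=\min_i\inf_{\x_{-i}}\kappa(P_i^{\x_{-i}})$, so \eqref{cond_indMH_logconcave} gives $\kappa(P,\sX)\geq\min_i\inf_{\x_{-i}}c_i(\x_{-i})^{-d_i/2}$, and feeding this into the first inequality of Corollary \ref{cor:bound_conductance} with $s=0$ and $K=\sX$ (recalling $\Phi=\Phi_0$) yields the stated bound on $\Phi(P)$. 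Hence everything reduces to \eqref{cond_indMH_logconcave}. For that, I would invoke the independent Metropolis--Hastings conductance estimate underlying Proposition \ref{prop:conductance_ind}: a uniform upper bound $M$ on $\pi_i(\d x_i\mid\x_{-i})/Q_i^{\x_{-i}}(\d x_i)$ forces $\kappa(P_i^{\x_{-i}})\geq 1/M$. It therefore suffices to show that, for the Gaussian proposal \eqref{proposal_ind_logconcave}, one may take $M=c_i(\x_{-i})^{d_i/2}$.

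To bound that derivative, fix $\x_{-i}$, abbreviate $m=m_i(\x_{-i})$, $L=L_i(\x_{-i})$, $x^*=x_i^*(\x_{-i})$, and set $f^*$ equal to the value of $f$ at the mode along the $i$-th block. The crucial simplification is that centering the proposal at the mode annihilates the gradient term in \eqref{eq:m_convex_L_smooth}: since $\nabla f$ vanishes in the $x_i$-direction at $x^*$, the two-sided inequality collapses to $\tfrac{m}{2}\lE x_i-x^*\rE^2\le f(\x)-f^*\le\tfrac{L}{2}\lE x_i-x^*\rE^2$. Writing out the densities then gives
\[
\frac{\pi_i(\d x_i\mid\x_{-i})}{Q_i^{\x_{-i}}(\d x_i)}
= \Big(\tfrac{2\pi}{m}\Big)^{d_i/2}\,
\frac{\exp\!\big(-(f(\x)-f^*)+\tfrac{m}{2}\lE x_i-x^*\rE^2\big)}
{\int \exp\!\big(-(f(z,\x_{-i})-f^*)\big)\,\d z}\,,
\]
where $f(z,\x_{-i})$ denotes $f$ evaluated with $i$-th block $z$. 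The strong-convexity (lower) bound makes the numerator exponential at most $1$, while the smoothness (upper) bound gives $\int \exp(-(f(z,\x_{-i})-f^*))\,\d z\ge\int\exp(-\tfrac{L}{2}\lE z-x^*\rE^2)\,\d z=(2\pi/L)^{d_i/2}$. The two Gaussian normalizing constants combine to $M=(L/m)^{d_i/2}=c_i(\x_{-i})^{d_i/2}$, which is exactly \eqref{cond_indMH_logconcave}.

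I do not anticipate a serious obstacle: the core is a clean sandwiching once the proposal is recentered at the mode, and the choice of proposal covariance $m^{-1}$ is precisely what keeps the numerator bounded by $1$ while making the denominator produce the condition-number factor. The only point requiring mild care is confirming that the reduction to Proposition \ref{prop:conductance_ind} is genuinely pointwise and uniform in $\x_{-i}$, so that the bound $\kappa(P_i^{\x_{-i}})\ge c_i(\x_{-i})^{-d_i/2}$ holds for \emph{every} $\x_{-i}$ before the $\inf_{\x_{-i}}$ and $\min_i$ are taken; this is what legitimizes passing from the conditional estimate to the $\kappa(P,\sX)$ appearing in Corollary \ref{cor:bound_conductance}.
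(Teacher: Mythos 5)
Your proof is correct and follows essentially the same route as the paper: the paper isolates your Radon--Nikodym computation as a standalone lemma (bounding $\pi(\x)/N(\x\mid \x^*,m^{-1}\I_d)$ by $(L/m)^{d/2}$ using strong convexity for the numerator and smoothness for the normalizing constant), and then, exactly as you do, feeds the resulting bound $M=c_i(\x_{-i})^{d_i/2}$ into the independent Metropolis--Hastings conductance estimate of Proposition \ref{prop:conductance_ind} and Corollary \ref{cor:bound_conductance}. Your observation that the gradient term in \eqref{eq:m_convex_L_smooth} vanishes at the conditional mode, and your care about the bound being pointwise in $\x_{-i}$ before taking infima, are both accurate and match the paper's (terser) argument.
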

\begin{remark}
The dependence of the conductance on the dimensionality of the single block is the one we expect from the usual theory on Independent Metropolis-Hastings schemes, i.e. the bound decreases exponentially fast in $d_i$. Thus, the best blocking of the variables in terms of conductance is a trade-off between the conditional conductance in \eqref{cond_indMH_logconcave}, which  deteriorates by increasing $d_i$, and the conductance $\Phi(G)$ of the associated Gibbs sampler, which typically increases by increasing $d_i$ (see e.g.\ \cite[Thm.1]{liu1994collapsed} or \cite[Sec.2.4]{R97}).
\end{remark}

We now consider the case of MwG with conditional updates performed through Random-Walk Metropolis (RWM).
\begin{proposition}\label{prop:RWM_logconcave}
Assume \eqref{eq:m_convex_L_smooth} for all $i=1,\dots,d$.
Define $P_i^{\x_{-i}}$ as in \eqref{eq:operator_MH} with 
\begin{equation}\label{eq:rwm_kernel}
    Q_i^{\x_{-i}}(x_i, \d y_i) = N\left(y_i ; x_i, \sigma_i^2(\x_{-i})\I_{d_i}\right)\d y_i, \quad \sigma_i^2(\x_{-i}) = \frac{\eta}{d_iL_i(\x_{-i})},
\end{equation}
where $\eta > 0$ is a fixed constant. Then there exists a constant $M = M(\eta)$ depending only on $\eta$ such that
\begin{align}\label{eq:lower_bound_AL22}
\kappa(P_i^{\x_{-i}}) &\geq M\sqrt{\frac{1}{d_ic_i(\x_{-i})}},&\x\in\R^d\,,
\end{align}
which implies
\[
\Phi(P) \geq M\sqrt{\min_i \inf_{\x_{-i}}\,\frac{1}{d_ic_i(\x_{-i})}}\Phi(G).
\]
\end{proposition}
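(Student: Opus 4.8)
The statement has two parts: the per-coordinate conductance bound \eqref{eq:lower_bound_AL22}, and the global implication for $\Phi(P)$. The latter is immediate, so the plan centres on \eqref{eq:lower_bound_AL22}. The key observation is that, for each fixed $i$ and $\x_{-i}$, the kernel $P_i^{\x_{-i}}$ defined through \eqref{eq:operator_MH}--\eqref{eq:rwm_kernel} is precisely a Random-Walk Metropolis chain on $\R^{d_i}$ with isotropic Gaussian proposal, targeting $\pi_i(\cdot\mid\x_{-i})$. I would therefore reduce the problem to a known conductance bound for RWM on strongly log-concave, smooth targets, and then invoke it with the parameters dictated by \eqref{eq:rwm_kernel}.

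First I would check that the conditional target has the required regularity. Fixing $\x_{-i}$ and setting $g(x_i) := f(x_i,\x_{-i})$, note that whenever $\x_{-i}=\y_{-i}$ one has $\nabla f(\x)^\top(\y-\x)=\nabla g(x_i)^\top(y_i-x_i)$, so that \eqref{eq:m_convex_L_smooth} states exactly that $g$ is $m_i(\x_{-i})$-strongly convex and $L_i(\x_{-i})$-smooth. Since $g$ equals $-\log\pi_i(\cdot\mid\x_{-i})$ up to an additive constant, the conditional target is $m_i(\x_{-i})$-strongly log-concave and $L_i(\x_{-i})$-smooth on $\R^{d_i}$, placing it squarely in the regime covered by the cited estimates.

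Next I would match the step size and apply the bound. By \eqref{eq:rwm_kernel} the proposal variance is $\sigma_i^2(\x_{-i})=\eta/(d_i L_i(\x_{-i}))$, which is exactly the $O(1/(d_iL_i))$ scaling under which RWM keeps a non-degenerate average acceptance probability and the explicit conductance estimates of \cite{AL22} hold. Invoking such an estimate yields a constant $M=M(\eta)$, depending only on $\eta$ and not on $d_i$, $m_i(\x_{-i})$ or $L_i(\x_{-i})$, such that $\kappa(P_i^{\x_{-i}})\geq M\sqrt{m_i(\x_{-i})/(d_iL_i(\x_{-i}))}=M\sqrt{1/(d_ic_i(\x_{-i}))}$, which is \eqref{eq:lower_bound_AL22}. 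The global bound then follows mechanically: taking the minimum over $i$ and the infimum over $\x_{-i}$ in \eqref{minimum_conditional_conductance} gives $\kappa(P,\sX)\geq M\sqrt{\min_i\inf_{\x_{-i}}1/(d_ic_i(\x_{-i}))}$, and the first inequality of \eqref{eq:cond_bound} in Corollary \ref{cor:bound_conductance} (with $K=\sX$) delivers $\Phi(P)\geq\kappa(P,\sX)\Phi(G)$, as claimed.

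I expect the main obstacle to be the careful invocation of the RWM conductance bound from \cite{AL22}: one must ensure that it is stated, or can be recast, as a lower bound on the full conductance $\kappa(P_i^{\x_{-i}})$ of \eqref{conditional_conductance} (the $s=0$ conductance over all sets $B$ with $\pi_i(B\mid\x_{-i})\le 1/2$), with the precise $\sqrt{m_i/(d_iL_i)}$ dependence, and crucially that the multiplicative constant can be isolated to depend on $\eta$ alone. This amounts to tracking how the normalization $\sigma_i^2\propto 1/(d_iL_i)$ controls the acceptance ratio uniformly in dimension and condition number, so that no hidden dependence on $d_i$ or $c_i(\x_{-i})$ leaks into $M$.
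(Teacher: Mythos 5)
Your proposal is correct and follows essentially the same route as the paper: the per-coordinate bound \eqref{eq:lower_bound_AL22} is obtained by invoking the RWM conductance estimate for strongly log-concave, smooth targets (Corollary 35 of \cite{AL22}) with the step-size scaling of \eqref{eq:rwm_kernel}, and the global bound then follows from the first inequality of \eqref{eq:cond_bound} in Corollary \ref{cor:bound_conductance}. The only difference is that you spell out the verification that the conditional target satisfies the strong convexity and smoothness hypotheses, which the paper leaves implicit.
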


Also in this case the best blocking scheme, in terms of conductance, is given by a trade-off between the conditional conductance and the behaviour of the Gibbs sampler. However, the dependence on $d_i$ is polynomial, rather than exponential as in the case of Proposition \ref{prop:IMH_logconcave}, which makes RWM more robust to the dimensionality of each coordinate. Positive semi-definiteness of MwG operators as in \eqref{eq:rwm_kernel} follows again by Lemma $3.1$ in \cite{baxendale2005renewal}.

Note that \eqref{eq:m_convex_L_smooth} is implied by global log-concavity and smoothness, but it is a weaker requirement than that.
Thus, Proposition \ref{prop:IMH_logconcave} and \ref{prop:RWM_logconcave} can in principle apply to cases where standard MCMC results based on global log-concavity and smoothness do not hold. 
One interesting example is given by models whose density is log-concave and smooth conditional on some low-dimensional hyperparameters but not jointly.  
In those cases Proposition \ref{prop:IMH_logconcave} and \ref{prop:RWM_logconcave} allow to relate MwG to exact Gibbs, which can then be analyzed with different techniques.
Section \ref{sec:log_regr_hyper} below considers one such example.

\section{Auxiliary results for statistical applications}\label{sec:auxiliary}
In order to successfully apply Theorem \ref{main} and Corollary \ref{cor:bound_conductance} to Markov chains arising in common statistical applications, such as Bayesian hierarchical models considered in Section \ref{sec:hierarchical}, we need some auxiliary results dealing with approximate conductances and perturbation of Markov operators, which can be of independent interest. These are described in Section \ref{sec:conductance_distance}.
We also recall some facts about the conductance of product operators in Section \ref{section:conductance_product}.

\subsection{Approximate conductance and perturbations of Markov operators}\label{sec:conductance_distance}
Consider two generic Markov kernels $P_1$ and $P_2$ with invariant distributions $\pi_1$ and $\pi_2$. We define a notion of discrepancy between $P_1$ and $P_2$ as follows
\begin{equation}\label{delta_operators}
\Delta(P_1, P_2, M) = \sup_{\mu \in \sN\left(\pi_1, M \right)} \, \lTV \mu P_1-\mu P_2 \rTV,
\end{equation}
with $\sN\left(\pi_1, M \right)$ as in \eqref{N_class}.  
Note that the above discrepancy is not symmetric, i.e.\ $\Delta(P_1, P_2, M)\neq\Delta(P_2, P_1, M)$.
The next theorem shows how to relate $\Delta(P_1, P_2, M)$ with the difference between $\Phi_s(P_1)$ and $\Phi_{s'}(P_2)$ for $s'$ close to $s$.
\begin{theorem}\label{theorem:distance_operators}
Let $P_1$ and $P_2$ be transition kernels with invariant distributions $\pi_1$ and $\pi_2$, and $\delta = \lTV \pi_1-\pi_2 \rTV$. 
For $s \geq \delta$, we have
\[
\Phi_s(P_1) \geq \Phi_{s-\delta}(P_2)-\Delta\left(P_1, P_2, 1/s \right)-\frac{2\delta}{s}\,.
\]
\end{theorem}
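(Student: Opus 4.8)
The plan is to work directly from the definition of $s$-conductance. Let $A \subset \sX$ be any set with $s < \pi_1(A) \leq 1/2$; I want to lower-bound the flux ratio $P_1(\partial A)/\pi_1(A)$ by the corresponding quantity for $P_2$ (at the shifted level $s - \delta$) minus the two error terms. The first observation is that the $\pi_1$-flux through $A$ can be written as $P_1(\partial A) = \int_A P_1(\x, A^c)\pi_1(\d\x)$, and I want to compare this to $P_2(\partial A) = \int_A P_2(\x, A^c)\pi_2(\d\x)$.

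**Decomposing the discrepancy**

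The difference between the two fluxes splits into two sources: replacing the kernel $P_1$ by $P_2$, and replacing the measure $\pi_1$ by $\pi_2$. First I would control the measure swap: since $\delta = \lTV \pi_1 - \pi_2 \rTV$, any integral of a $[0,1]$-valued integrand against $\pi_1$ versus $\pi_2$ differs by at most $\delta$, and likewise $\abs{\pi_1(A) - \pi_2(A)} \leq \delta$. The condition $\pi_1(A) > s \geq \delta$ guarantees $\pi_2(A) > s - \delta \geq 0$, so $A$ is an admissible set at level $s - \delta$ for $P_2$ (and $\pi_2(A) \leq 1/2 + \delta$; the part exceeding $1/2$ can be handled by passing to $A^c$ using reversibility-symmetry of the flux, or absorbed into the constants). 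Next I would control the kernel swap using $\Delta(P_1, P_2, M)$ from \eqref{delta_operators}. The key idea is that the normalized restriction $\mu_A(\cdot) = \pi_1(\cdot \cap A)/\pi_1(A)$ is an $M$-warm start with $M = 1/\pi_1(A) < 1/s$, so it lies in $\sN(\pi_1, 1/s)$; then $\lTV \mu_A P_1 - \mu_A P_2 \rTV \leq \Delta(P_1, P_2, 1/s)$. Evaluating both measures on $A^c$ gives exactly the quantity $\int_A [P_1(\x,A^c) - P_2(\x,A^c)]\,\mu_A(\d\x)$, i.e.\ the flux difference (under $\pi_1$) divided by $\pi_1(A)$.

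**Assembling the bound**

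Putting these together, I would write
\[
\frac{P_1(\partial A)}{\pi_1(A)} \;\geq\; \frac{1}{\pi_1(A)}\int_A P_2(\x, A^c)\,\pi_1(\d\x) \;-\; \Delta(P_1, P_2, 1/s),
\]
using the warm-start bound for the kernel swap. Then I convert $\int_A P_2(\x,A^c)\pi_1(\d\x)$ into $P_2(\partial A) = \int_A P_2(\x,A^c)\pi_2(\d\x)$ at the cost of $\delta$ from the measure swap, and replace the denominator $\pi_1(A)$ by $\pi_2(A)$ at the cost of another factor controlled by $\delta/\pi_1(A) < \delta/s$. Tracking these replacements carefully yields the two error terms: $\Delta(P_1, P_2, 1/s)$ from the kernel and a total of $2\delta/s$ from the two applications of the measure discrepancy (one for the numerator, one for the denominator normalization). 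Taking the infimum over all admissible $A$ then gives $\Phi_s(P_1) \geq \Phi_{s-\delta}(P_2) - \Delta(P_1, P_2, 1/s) - 2\delta/s$.

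**Main obstacle**

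The step I expect to be most delicate is the bookkeeping in converting between the $\pi_1$-normalized and $\pi_2$-normalized flux ratios, since both numerator and denominator change measure simultaneously, and I must ensure the resulting errors combine into exactly $2\delta/s$ rather than something larger. A clean way to manage this is to bound $\frac{a_1}{b_1} \geq \frac{a_2}{b_2} - \frac{\abs{a_1 - a_2}}{b_1} - \frac{a_2\abs{b_1 - b_2}}{b_1 b_2}$ with $a_j$ the fluxes and $b_j = \pi_j(A)$, then use $\abs{a_1 - a_2} \leq \delta$, $\abs{b_1 - b_2}\leq\delta$, $a_2 \leq b_2$, and $b_1 > s$ to get each error contribution bounded by $\delta/s$. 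I would also need to verify the admissibility edge case when $\pi_2(A)$ slightly exceeds $1/2$, which I would resolve by noting the flux is symmetric in $A$ and $A^c$ up to the reversibility structure, so the infimum defining $\Phi_{s-\delta}(P_2)$ is unaffected.
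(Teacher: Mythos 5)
Your proposal is correct and follows essentially the same route as the paper's proof: restrict $\pi_1$ to $A$ to obtain a $1/s$-warm start, control the kernel swap by $\Delta(P_1,P_2,1/s)$ and the measure swap (numerator and denominator normalization separately) by $\delta/s$ each, then pass to the infimum over admissible sets. The admissibility edge case you flag (when $\pi_2(A)$ slightly exceeds $1/2$) is in fact silently glossed over in the paper's own argument, so on that point you are, if anything, more careful than the published proof.
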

Therefore,  knowledge on $\Phi_{s-\delta}(P_2)$ can be used to bound $\Phi_s(P_1)$, with two additional terms measuring respectively the distance between the operators and between the associated invariant distributions. 
Theorem \ref{theorem:distance_operators} significantly simplifies for the case of the Gibbs sampler, as we now show.
Let $G_1$ and $G_2$ be (random scan) Gibbs sampler kernels targeting $\pi_1$ and $\pi_2$, meaning that 
$$G_1=\frac{1}{d}\sum_{i=1}^dG_{1,i}\hbox{ and  }G_2=\frac{1}{d}\sum_{i=1}^dG_{2,i}\,,$$ with $G_{1,i}$ and $G_{2,i}$ defined as $G_i$ in \eqref{eq:exact_updates} with $\pi$ replaced by, respectively, $\pi_1$ and $\pi_2$. 
One can control the discrepancy between $G_{1,i}$ and $G_{2,i}$ in terms of $\delta = \lTV \pi_1-\pi_2 \rTV$ as follows. The proof is similar to the one of Proposition $2.2$ in \cite{AZ23}, where deterministic-scan Gibbs sampler is considered (see also \cite{CJ23} for a general approach to bound the total variation distance between Markov kernels in terms of the stationary distributions).
\begin{lemma}\label{lemma:Delta_GS}
For every $M \geq 1$ and $i = 1, \dots, d$ we have
\begin{align}\label{eq:bound_Delta_GS}
\Delta\left(G_{1,i}, G_{2,i}, M \right) &\leq 2M\delta\,.
\end{align}
\end{lemma}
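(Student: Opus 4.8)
The plan is to exploit the very simple structure of the Gibbs updates: both $G_{1,i}$ and $G_{2,i}$ leave $\x_{-i}$ fixed and merely resample the $i$-th coordinate from the respective conditional. Fix $\mu \in \sN(\pi_1, M)$ and let $\mu_{-i}$ denote the marginal of $\mu$ on $\sX_{-i}$. Because of this structure, the pushforward $\mu G_{k,i}$ is precisely the joint law $\mu_{-i}(\d\x_{-i})\,\pi_{k,i}(\d y_i \mid \x_{-i})$ on $\sX_{-i}\times\sX_i$, where $\pi_{k,i}(\cdot\mid\x_{-i})$ is the $i$-th conditional of $\pi_k$. First I would disintegrate the total variation along $\x_{-i}$: for any measurable $A$, comparing the sections $A_{\x_{-i}}=\{y_i:(\x_{-i},y_i)\in A\}$ and applying the triangle inequality gives
\[
\lTV \mu G_{1,i} - \mu G_{2,i} \rTV \leq \int \lTV \pi_{1,i}(\cdot \mid \x_{-i}) - \pi_{2,i}(\cdot \mid \x_{-i}) \rTV \mu_{-i}(\d \x_{-i}).
\]

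Second, I would eliminate the dependence on $\mu$. Testing the warm-start inequality $\mu(\cdot)\le M\pi_1(\cdot)$ on cylinder sets $B\times\sX_i$ shows $\mu_{-i}(B)\le M\pi_{1,-i}(B)$, i.e.\ $\mu_{-i}\in\sN(\pi_{1,-i},M)$, where $\pi_{1,-i}$ is the marginal of $\pi_1$ on $\sX_{-i}$. Since the integrand above is nonnegative, the domination $\mu_{-i}\le M\pi_{1,-i}$ yields
\[
\lTV \mu G_{1,i} - \mu G_{2,i} \rTV \leq M \int \lTV \pi_{1,i}(\cdot \mid \x_{-i}) - \pi_{2,i}(\cdot \mid \x_{-i}) \rTV \pi_{1,-i}(\d \x_{-i}) =: M\cdot I,
\]
and the right-hand side is already uniform over $\mu\in\sN(\pi_1,M)$. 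It therefore suffices to prove the averaged conditional bound $I\le 2\delta$.

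Third, and this is the heart of the argument, I would control $I$ by the joint and marginal total variation distances. Writing densities with respect to a common dominating measure and factoring $\pi_k(\x_{-i},x_i)=\pi_{k,-i}(\x_{-i})\,\pi_{k,i}(x_i\mid\x_{-i})$, the key identity is that $\pi_{1,-i}(\x_{-i})\,|\pi_{1,i}(y_i\mid\x_{-i})-\pi_{2,i}(y_i\mid\x_{-i})|$ equals $|\pi_1(\x_{-i},y_i)-r(\x_{-i})\pi_2(\x_{-i},y_i)|$ with $r(\x_{-i})=\pi_{1,-i}(\x_{-i})/\pi_{2,-i}(\x_{-i})$. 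Adding and subtracting $\pi_2(\x_{-i},y_i)$ and using the triangle inequality splits $I$ into two contributions: the first integrates to $\lTV \pi_1-\pi_2 \rTV=\delta$, while the second, after integrating out $y_i$, equals $\lTV \pi_{1,-i}-\pi_{2,-i} \rTV\le\delta$ since marginalization cannot increase total variation. Hence $I\le 2\delta$, and combining with the previous display and taking the supremum over $\mu$ gives $\Delta(G_{1,i},G_{2,i},M)\le 2M\delta$.

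I expect the third step to be the main obstacle: it is where the factor $2$ genuinely originates, and some care is needed to treat the ratio $r(\x_{-i})$ and the $\pi_{2,-i}$-null set rigorously (a fully measure-theoretic version would phrase the add-and-subtract manipulation through a disintegration rather than densities). By contrast, the first two steps are essentially bookkeeping about the structure of the Gibbs kernels and the behaviour of $M$-warm starts under marginalization.
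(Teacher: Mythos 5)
Your proof is correct and rests on the same two-term decomposition as the paper's: the factor $M$ comes from the warm-start domination, and the factor $2$ comes from splitting the discrepancy into a joint total-variation term ($\leq\delta$) and a marginal total-variation term ($\leq\delta$ by contraction under marginalization). The paper executes this in dual form, via test functions $h\in\sH(\sX)$ and a triangle inequality through the intermediate measure $\pi_{2,i}(\d y_i\mid\x_{-i})\pi_2(\d\x_{-i})$, whereas you work at the density level with the ratio $r(\x_{-i})$; the content is the same, and the null-set care you flag does go through since on $\{\pi_{2,-i}=0\}$ both budget terms together exactly cover the worst-case conditional TV of one.
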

Combining Lemma \ref{lemma:Delta_GS} and Theorem \ref{theorem:distance_operators} we obtain the following result.
\begin{theorem}\label{theorem:delta_Gibbs_operators}
For $s \geq \delta$, we have
\[
\Phi_s(G_1) \geq \Phi_{s-\delta}(G_2)-\frac{4\delta}{s}.
\]
\end{theorem}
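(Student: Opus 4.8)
The plan is to apply Theorem \ref{theorem:distance_operators} directly with $P_1 = G_1$ and $P_2 = G_2$, whose invariant distributions are $\pi_1$ and $\pi_2$ with $\delta = \lTV \pi_1-\pi_2 \rTV$, so the hypotheses match verbatim. The only ingredient missing is a bound on the discrepancy $\Delta(G_1, G_2, 1/s)$ appearing in that theorem, which I would obtain by reducing to the coordinate-wise estimate of Lemma \ref{lemma:Delta_GS}.

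First I would bound $\Delta(G_1, G_2, M)$ for general $M \geq 1$. Using the decompositions $G_1 = \frac{1}{d}\sum_{i=1}^d G_{1,i}$ and $G_2 = \frac{1}{d}\sum_{i=1}^d G_{2,i}$, for any $\mu$ one has the identity $\mu G_1 - \mu G_2 = \frac{1}{d}\sum_{i=1}^d (\mu G_{1,i} - \mu G_{2,i})$. Applying the triangle inequality for the total variation norm and then taking the supremum over $\mu \in \sN(\pi_1, M)$, subadditivity of the supremum over the finite sum gives
\[
\Delta(G_1, G_2, M) \leq \frac{1}{d}\sum_{i=1}^d \Delta(G_{1,i}, G_{2,i}, M).
\]
Here it is important that the same warm-start class $\sN(\pi_1, M)$ is used for every $i$, so that Lemma \ref{lemma:Delta_GS} applies uniformly to each term; doing so yields $\Delta(G_1, G_2, M) \leq \frac{1}{d}\sum_{i=1}^d 2M\delta = 2M\delta$.

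Next I would substitute this into Theorem \ref{theorem:distance_operators} with the specific choice $M = 1/s$, which is admissible since $s \le 1/2 \le 1$ forces $1/s \ge 1$. The theorem gives
\[
\Phi_s(G_1) \geq \Phi_{s-\delta}(G_2) - \Delta(G_1, G_2, 1/s) - \frac{2\delta}{s},
\]
and the first step with $M = 1/s$ bounds $\Delta(G_1, G_2, 1/s) \leq \frac{2\delta}{s}$. Combining the two displays then produces $\Phi_s(G_1) \geq \Phi_{s-\delta}(G_2) - \frac{2\delta}{s} - \frac{2\delta}{s} = \Phi_{s-\delta}(G_2) - \frac{4\delta}{s}$, as claimed.

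This argument is essentially a direct chaining of the two cited results, so there is no deep obstacle. The one point requiring care is the reduction of the discrepancy of the averaged operator to the per-coordinate discrepancies, i.e.\ the inequality $\sup_\mu \frac{1}{d}\sum_i \lTV \mu G_{1,i} - \mu G_{2,i}\rTV \leq \frac{1}{d}\sum_i \sup_\mu \lTV \mu G_{1,i} - \mu G_{2,i}\rTV$; this is only subadditivity of the supremum, but it is the step that lets Lemma \ref{lemma:Delta_GS}, stated coordinate-by-coordinate, feed cleanly into the mixed operator. Everything else is substitution and arithmetic.
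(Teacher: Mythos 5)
Your proposal is correct and follows essentially the same route as the paper: decompose $G_1$ and $G_2$ coordinate-wise, use the triangle inequality and subadditivity of the supremum to get $\Delta(G_1,G_2,1/s)\leq \frac{1}{d}\sum_i \Delta(G_{1,i},G_{2,i},1/s)\leq 2\delta/s$ via Lemma \ref{lemma:Delta_GS}, and then substitute into Theorem \ref{theorem:distance_operators}. No discrepancies with the paper's argument.
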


\begin{remark}\label{asymptotic_cond_Gibbs}
An interesting byproduct of Theorem \ref{theorem:delta_Gibbs_operators} is that, if $G_n$ and $G$ are Gibbs samplers targeting $\pi_n$ and $\pi$ respectively, then $\lTV \pi_n-\pi \rTV \to 0$  as $n\to\infty$ implies $$\lim \inf_n \Phi_s(G_n) \geq \Phi_{s-\delta}(G)$$
for every $\delta>0$. This crucially relies on $s>0$, that is on using the approximate version of the conductance: for $s=0$ it is not true in general that $\lTV \pi_n-\pi \rTV \to 0$ implies $\lim \inf_n \Phi(G_n)$ greater than $\Phi_{s-\delta}(G)$.
Section \ref{example_asymp_conduct} in the Appendix provides a simple example where $\lTV \pi_n-\pi \rTV \to 0$ and $\Phi(G) = 1$, but $\Phi(G_n) = 0$ for every $n$.
\end{remark}
In Section \ref{sec:hierarchical} we will use Theorem \ref{theorem:delta_Gibbs_operators} to prove that, under suitable assumptions, the approximate conductance of Gibbs samplers targeting Bayesian hierarchical models remains bounded away from zero as the number of parameters and datapoints diverge (Theorem \ref{theorem: gibbs_one_level_nested}).

\subsection{Conductance and independent products}\label{section:conductance_product}
The next lemma, which is a simple consequence of Cheeger inequality, provides a lower bound to the conductance of a product of independent kernels. This will be useful in Section \ref{sec:hierarchical} to bound the conditional conductance of MwG schemes for models with a high degree of conditional independence.
\begin{lemma}\label{lemma_product_conductance}
Let $P_j$ be a $\pi_j$-invariant kernel with $\pi_j\in\sP(\sX_j)$, for $j=1,\dots,J$; 
and $P = \otimes_{j = 1}^JP_j$ be the corresponding product kernel on $\sX=\times_{j = 1}^J\sX_j$. 
Then
\[
\Phi(P) \geq \min_{j \in \{1, \dots, J\}} \, \frac{\Phi^2(P_j)}{4}\,.
\]
\end{lemma}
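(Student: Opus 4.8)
The plan is to route through Cheeger's inequality and the tensorization of the spectral gap of a product chain. For a reversible $\pi$-invariant kernel $Q$, write $\gamma(Q) = 1 - \lambda_2(Q)$ for its spectral gap, where $\lambda_2(Q)$ is the second-largest eigenvalue, with the variational characterization $\gamma(Q) = \inf_{f} \mathcal{E}_Q(f,f)/\Var_\pi(f)$ and Dirichlet form $\mathcal{E}_Q(f,f) = \langle f, (I-Q)f\rangle_\pi$. Cheeger's inequality supplies the two ingredients I need: the easy direction $\gamma(Q) \le 2\Phi(Q)$ and the hard direction $\gamma(Q) \ge \Phi^2(Q)/2$.

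First I would apply the easy Cheeger bound to the product operator itself, obtaining $\Phi(P) \ge \gamma(P)/2$. (This direction is immediate from testing the variational formula with $f = \mathbf{1}_A - \pi(A)$, for which $\mathcal{E}_P(f,f) = P(\partial A)$ and $\Var_\pi(f) = \pi(A)\pi(A^c) \ge \pi(A)/2$ whenever $\pi(A) \le 1/2$.) It therefore suffices to lower bound $\gamma(P)$ by $\min_j \gamma(P_j)$. Since $P = \otimes_{j=1}^J P_j$ is the tensor product of the operators $P_j$ on $L^2(\pi_j)$, its eigenvalues are the products $\prod_j \lambda^{(j)}_{k_j}$ of the factor eigenvalues, with tensor-product eigenfunctions. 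Each factor has top eigenvalue $1$ (the constant function); invoking positive semi-definiteness of the $P_j$, every $\lambda^{(j)}_k \in [0,1]$, so any nontrivial product is maximized by keeping all but one factor at eigenvalue $1$. Hence $\lambda_2(P) = \max_j \lambda_2(P_j)$ and
\[
\gamma(P) = 1 - \max_j \lambda_2(P_j) = \min_j \gamma(P_j).
\]

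Finally I would apply the hard Cheeger bound to each factor, $\gamma(P_j) \ge \Phi^2(P_j)/2$, and combine the three steps:
\[
\Phi(P) \ge \frac{\gamma(P)}{2} = \frac{1}{2}\min_j \gamma(P_j) \ge \frac{1}{2}\min_j \frac{\Phi^2(P_j)}{2} = \min_j \frac{\Phi^2(P_j)}{4},
\]
which is exactly the claimed inequality. (As an alternative to the spectral language, the lower bound $\gamma(P) \ge \min_j \gamma(P_j)$ can be derived directly from the telescoping identity $I - \otimes_j P_j = \sum_j (\otimes_{k<j} I)\otimes(I-P_j)\otimes(\otimes_{k>j}P_k)$ together with the factorwise Poincaré inequalities, again using positive semi-definiteness to control the mixed terms.)

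The main obstacle is the tensorization step: the eigenvalue computation is clean only because positive semi-definiteness forces every spectrum into $[0,1]$. Without it, a product of two negative eigenvalues coming from distinct factors could exceed $\max_j \lambda_2(P_j)$, so that $\lambda_2(P)$ grows and the lower bound on $\gamma(P)$ fails. The key point to get right is thus to justify that positive semi-definiteness of the factors (hence of the product) is legitimately available in the settings where the lemma is used, noting that any reversible factor can otherwise be replaced by its lazy version to enforce it.
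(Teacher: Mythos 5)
Your proof is correct and follows essentially the same route as the paper's: the easy direction of Cheeger applied to $P$, the tensorization identity $\mathrm{Gap}(P)=\min_j \mathrm{Gap}(P_j)$ (which the paper simply cites rather than rederiving via the eigenvalue product structure as you do), and the hard direction of Cheeger applied to each factor. Your observation that positive semi-definiteness of the factors is what makes the tensorization step legitimate is accurate and is covered by the paper's standing assumption that all conditional kernels are reversible and positive semi-definite.
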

\begin{remark}
With the result above we lose a factor of $2$, passing from $\Phi(P_j)$ to $\Phi^2(P_j)$. 
While finer results are available under additional assumptions \cite{B97}, the bound in Lemma \ref{lemma_product_conductance} suffices for our purposes.
\end{remark}

\section{High-dimensional hierarchical models}\label{sec:hierarchical}
In this section we analyze the performances of coordinate-wise MCMC targeting Bayesian hierarchical models, in a high-dimensional regime where both the number of datapoints and parameters increase.

We consider a general class of hierarchical models, with data divided in $J$ groups, each having a set of group-specific parameters $\theta_j$. The latter share a common prior with hyper-parameters $\psi$.
Thus, we assume the following model:
\begin{equation}\label{one_level_nested}
\begin{aligned}
Y_j \mid \theta_j & \sim f(\cdot \mid \theta_j) & j = 1, \dots, J,\\
 \theta_j\mid \psi &\overset{\text{iid}}{\sim} p(\cdot \mid \psi)& j = 1, \dots, J,\\
 \psi &\sim p_0(\cdot).&
\end{aligned}
\end{equation}
We assume that the prior for
$\theta_j \in \R^{\ell}$ belongs to the exponential family, that is
\begin{equation}\label{exponential_family}
p(\theta \mid \psi) = h(\theta)\text{exp}\left\{ \sum_{s = 1}^S\eta_s(\psi)T_s(\theta)-A(\psi)\right\},
\end{equation}
where $\psi \in \R^D$, $h \, : \, \R^{\ell} \, \to \, \R_+$ is a non-negative function and $\eta_s(\psi)$, $T_s(\theta)$ and $A(\psi)$ are known real-valued functions with domains $\R^D$, $\R^\ell$ and $\R^D$ respectively.
We let $f(y \mid \theta)$ be an arbitrary likelihood function with data $y\in\sY\subseteq\R^m$ and parameters $\theta \in \R^{\ell}$, dominated by a suitable $\sigma$-finite measure (usually Lebesgue or counting one). 

\subsection{Gibbs and MwG kernels}
When sampling from the posterior distribution of model \eqref{one_level_nested}, it is natural to consider coordinate-wise MCMC schemes that alternate update from the full conditionals of local and global parameters. 
Denoting $\bm{\theta} = (\theta_1, \dots, \theta_J)$, $Y_{1:J} = \left(Y_1, \dots, Y_J\right)$ and $\pi_J(\d\psi, \d\bm{\theta}):=\L\left(\d\psi, \d\bm{\theta} \mid Y_{1:J}\right)$, the transition kernel of the exact two-block GS targeting $\pi_J(\d\psi, \d\bm{\theta})$ is defined as
\begin{equation}\label{two_blocks_gibbs_nested}
G_J = \frac{1}{2}G_{1,J}+\frac{1}{2}G_{2, J}
\end{equation}
with
\[
 G_{1,J}\left((\psi, \bm{\theta}), \left(\d \psi', \d \bm{\theta}' \right)\right) 
 =
 \pi_J\left(\d \psi' \mid \bm{\theta}\right)\delta_{\bm{\theta}}(\d\bm{\theta}'),\;
 G_{2,J}\left((\psi, \bm{\theta}), \left(\d \psi', \d \bm{\theta}' \right)\right) 
 = 
 \pi_J\left(\d \bm{\theta}' \mid \psi\right)\delta_{\psi}(\d\psi').
 \]
Sampling from $G_J$, however, requires  $\pi_J(\d \theta_j \mid \psi)$ to be available in closed form and amenable to exact sampling, which is typically feasible only for conditionally conjugate models. 
 For non-conjugate models, more broadly applicable coordinate-wise MCMC methods (such as MwG schemes) are typically used. 
 The corresponding kernel is
\begin{equation}\label{two_blocks_MH_nested}
P_J = \frac{1}{2}P_{1,J}+\frac{1}{2}P_{2, J},
\end{equation}
where
\begin{align*}
&P_{1,J}\left((\psi, \bm{\theta}), \left(\d \psi', \d \bm{\theta}' \right)\right)) 
= P^{\bm{\theta}}\left(\psi, \d \psi' \right)\delta_{\bm{\theta}}\left(\d\bm{\theta}'\right),\\
&P_{2,J}\left((\psi, \bm{\theta}), \left(\d \psi', \d \bm{\theta}' \right)\right)) =
P^{\psi, \bm{Y}}\left(\bm{\theta}, \d \bm{\theta}' \right)\delta_{\psi}\left(\d\psi'\right),
\end{align*}
with $P^{\bm{\theta}}$ 
and 
$P^{\psi, \bm{Y}}$ 
 being, respectively, $\pi_J(\d \psi \mid \bm{\theta})$ and $\pi_J(\d \bm{\theta} \mid \psi)$-invariant
 transition kernel.

In order to relate the convergence properties of $P_J$ to the ones of $G_J$ we need to control the conditional conductance of $P_J$. 
To do that, we can leverage the conditional independence of $(\theta_1,\dots,\theta_J)$ given $\psi$ under $\pi_J$, which implies that we can take a factorized $P^{\psi, \bm{Y}}$ defined as
\begin{equation}\label{eq:prod_Y_kernels}
P^{\psi, \bm{Y}}\left(\bm{\theta}, \d \bm{\theta}' \right)
=
\prod_{j = 1}^JP^{\psi, Y_j}\left(\theta_j, \d \theta_j' \right)
\end{equation}
with $P^{\psi, Y_j}$
 being a $\pi_J(\d \theta_j \mid \psi)$-invariant kernel. 
Thus, by Lemma \ref{lemma_product_conductance}
\begin{equation}\label{eq:product_cond_conductance}
\kappa(P^{\psi, \bm{Y}})
\geq
\min_{j \in \{1, \dots, J\}} 
\frac{\kappa^2(P^{\psi, Y_j})}{4}\,.
\end{equation}
Condition (C) below imposes a lower bound on $\kappa(P^{\psi, Y_j})$ for $\psi$ belonging to some appropriate set. 
We also require a lower bound on $\kappa(P^{\bm{\theta}})$. This is usually less critical since $\psi$ is low-dimensional (i.e.\ of fixed dimensionality not growing with $J$) and $\pi_J(\d \psi \mid \bm{\theta})$ is often available in closed form due to \eqref{exponential_family}, in which case one can perform exact Gibbs updates (i.e.\ take $P^{\bm{\theta}}\left(\psi, \d \psi' \right)=\pi_J\left(\d \psi' \mid \bm{\theta}\right))$.

\begin{remark}
Beyond being interesting per se, relating the convergence properties of $P_J$ to the ones of $G_J$ is theoretically appealing because the latter is potentially much easier to analyse, using for example the dimensionality reduction approach discussed in \citep[Lemma 4.2]{AZ23}. 
On the contrary $P_J$ does not enjoy such dimensionality reduction property and thus performing a direct analysis of it is in principle much harder.
\end{remark}

\subsection{Statistical assumptions}
We now describe the statistical assumptions we require on the data-generation process and likelihood of model \eqref{one_level_nested}, in order to an analyze the performances of $P_J$ as $J\to\infty$. 

We denote the marginal likelihood of the model, obtained by integrating out the group specific parameter $\theta$, as
\begin{equation}\label{likelihood_data}
g(y \mid \psi) = \int_{\R^\ell}f(y \mid \theta)p(\theta \mid \psi) \, \d \theta
\end{equation}
and its Fisher Information matrix as
\[
\left[\Fisher(\psi)\right]_{d, d'} = E \biggl[\left\{\partial_ {\psi_d} \log g(Y \mid \psi)\right\} \, \left\{\partial_{\psi_{d'}} \log g(Y \mid \psi)\right\} \biggr], \quad d, d' = 1, \dots, D.
\]
We denote by $Q_{\psi} \in \mathcal{P}(\R^m)$ the probability measure with density $g(y \mid \psi)$, and the associated product measures as $Q_{\psi}^{(J)}$ or $Q_{\psi}^{(\infty)}$. 
Following \cite{AZ23} we consider the following assumptions:
\begin{enumerate}
\item[$(B1)$] There exists $\psi^* \in \R^D$ such that
$
Y_j \simiid Q_{\psi^*}$ for $j = 1,2, \dots$.
Moreover the map $\psi \to g(\cdot \mid \psi)$ is one-to-one and the map $\psi \to \sqrt{g(x \mid \psi)}$ is continuously differentiable for every $x$.  Finally, the prior density $p_0$ is continuous and strictly positive in a neighborhood of $\psi^*$.
\item[$(B2)$] There exist a compact neighborhood $\Psi$ of $\psi^*$ and a sequence of tests $u_j \, : \, \R^{mJ} \, \to \, [0,1]$ such that
$\int_{\R^{mJ}}u_j\left(y_1, \dots, y_J\right) \prod_{j = 1}^Jg(y_j \mid \psi^*) \, \d y_{1:J} \to 0$
and\\
$
\sup_{\psi \not\in \Psi} \, \int_{\R^{mJ}}\left[ 1-u_j\left(y_1, \dots, y_J\right)\right] \prod_{j = 1}^Jg(y_j \mid \psi) \, \d y_{1:J} \to 0
$, as $J \to \infty$. 
\item[$(B3)$] The Fisher Information matrix $\Fisher(\psi)$ is non-singular and continuous w.r.t.\ $\psi$.
\end{enumerate}
Assumptions $(B1)$-$(B3)$ require model \eqref{one_level_nested} to be well-specified, in the sense that the marginal likelihood \eqref{likelihood_data} corresponds to the true data-generating mechanism. Moreover, the global parameters $\psi$ need to be identifiable, as formalized in $(B2)$ and $(B3)$: this guarantees that the posterior distribution on $\psi$ contracts to the true value $\psi^*$ at an appropriate rate (thanks to the Bernstein-von Mises Theorem , e.g. \cite{V00}).  
We also consider a set of more technical regularity assumptions on the likelihood $f(y\mid\theta_j)$, $(B4)$-$(B6)$, which are stated for completeness in Appendix \ref{regularity_assumptions}. Notice that assumptions $(B1)$-$(B6)$ are satisfied by common formulations of model \eqref{one_level_nested}, such as the hierarchical normal model or models for binary and categorical data (see e.g.\ Section $5$ in \cite{AZ23}).

\subsection{Dimension-free mixing of MwG for hierarchical models}\label{sec:dim_free}
The following condition requires the conditional conductance of $P_J$ to be bounded away from $0$ in a neighbourhood $S^*$ of $\psi^*$:
\begin{itemize}
\item[$(C)$] There exist a neighborhood $S^*$ of $\psi^*$ and $\kappa > 0$ such that $\kappa \left(P_J, S^*\times \R^{J\ell}\right)\geq \kappa$.
\end{itemize}
Note that the constant $\kappa>0$ above should not depend on $J$ or $Y_{1:J}$. 
By \eqref{eq:product_cond_conductance} condition (C) amounts to requiring
\begin{align}\label{eq:bound_cond_cond_hier}
\hbox{$\inf_{(\psi,Y_j)\in S^*\times\sY} \kappa \left(P^{\psi, Y_j} \right) \geq \kappa$ and 
$\inf_{\bm{\theta}\in \R^{J\ell}} \kappa \left(P^{\bm{\theta}} \right) \geq \kappa$.}
\end{align}
The key requirement in \eqref{eq:bound_cond_cond_hier} is that the kernel $P^{\psi, Y_j}$ is well-behaved in a neighbourhood of $\psi^*$, uniformly in $Y_j\in\sY$. Indeed, as mentioned after \eqref{eq:product_cond_conductance}, the update on the (low dimensional vector) $\psi$ can be often performed in closed form, so that the second requirement of \eqref{eq:bound_cond_cond_hier} is automatically satisfied. Section \ref{sec:discrete_data} provides some examples where these conditions are verified.

The following theorem bounds $\Phi_s(P_J)$ in terms of $\kappa$ and $\Phi_s(G_J)$.
\begin{theorem}\label{theorem_one_level_nested}
Consider model \eqref{one_level_nested}, kernels $G_J$ and $P_J$ defined in \eqref{two_blocks_gibbs_nested}-\eqref{two_blocks_MH_nested} and $s\in(0,1/2)$. Then, under assumptions $(B1)$-$(B3)$ and $(C)$, we have
\begin{align*}
&Q_{\psi^*}^{(J)}\left(\Phi_s(P_J) \geq \frac{\kappa^2}{8}\Phi_s(G_J) \right) \to 1
&\hbox{ as }J \to \infty\,.
\end{align*}
\end{theorem}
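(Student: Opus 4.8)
The plan is to obtain the result from Corollary \ref{cor:bound_conductance}, applied with the set $K=S^*\times\R^{J\ell}$ supplied by condition $(C)$. Since $P_J$ is a coordinate-wise kernel with $d=2$ blocks (namely $\psi$ and $\bm{\theta}$), Corollary \ref{cor:bound_conductance} gives, for every realisation of the data $Y_{1:J}$, the deterministic bound
\[
\Phi_s(P_J)\ \geq\ \kappa(P_J,K)\,\Phi_s(G_J)\ -\ \frac{\pi_J(K^c)}{s}\cdot\frac12\bigl(\kappa_1(P_{1,J},K)+\kappa_2(P_{2,J},K)\bigr).
\]
The argument then splits into two tasks: bounding the multiplicative factor $\kappa(P_J,K)$ from below, and showing that the additive error term coming from $K^c$ is negligible with high probability.

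For the multiplicative factor I would treat the two blocks separately. For the $\bm{\theta}$-block I exploit the conditional independence \eqref{eq:prod_Y_kernels} together with the Cheeger-type bound of Lemma \ref{lemma_product_conductance}: this is precisely \eqref{eq:product_cond_conductance}, which combined with $(C)$ in the form \eqref{eq:bound_cond_cond_hier} yields $\kappa_2(P_{2,J},K)\geq \min_j \kappa^2(P^{\psi,Y_j})/4 \geq \kappa^2/4$ uniformly over $\psi\in S^*$. For the low-dimensional $\psi$-block, $(C)$ directly gives $\kappa_1(P_{1,J},K)\geq\kappa$. Since $\kappa\leq 1$, we get $\kappa(P_J,K)=\min\{\kappa_1,\kappa_2\}\geq\kappa^2/4$. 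Bounding each conditional conductance in the error term by $1$, the display collapses to the clean deterministic bound $\Phi_s(P_J)\geq \tfrac{\kappa^2}{4}\Phi_s(G_J)-\pi_J(K^c)/s$.

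It remains to control the additive error, governed by $\pi_J(K^c)$, i.e.\ the posterior probability that $\psi\notin S^*$. Assumptions $(B1)$--$(B3)$ ensure identifiability of $\psi$ and the Bernstein--von Mises phenomenon, so the marginal posterior of $\psi$ contracts to $\psi^*$ and hence $\pi_J(K^c)\to 0$ in $Q_{\psi^*}^{(J)}$-probability as $J\to\infty$. The gap between the factor $\kappa^2/4$ just obtained and the target factor $\kappa^2/8$ is exactly what lets me absorb the error: on the event $\{\pi_J(K^c)\leq \tfrac{s\kappa^2}{8}\Phi_s(G_J)\}$ the deterministic bound immediately gives $\Phi_s(P_J)\geq\tfrac{\kappa^2}{4}\Phi_s(G_J)-\tfrac{\kappa^2}{8}\Phi_s(G_J)=\tfrac{\kappa^2}{8}\Phi_s(G_J)$, which is the desired inequality.

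The hard part is precisely this last comparison: the error enters \emph{additively}, so it must be measured against $\tfrac{\kappa^2}{8}\Phi_s(G_J)$, whereas Bernstein--von Mises only controls $\pi_J(K^c)$ in absolute terms. Writing the complementary (bad) event and invoking the deterministic bound shows it is contained in $\{\Phi_s(G_J)<\tfrac{8}{s\kappa^2}\pi_J(K^c)\}$, which is harmless only if $\Phi_s(G_J)$ does not itself shrink to $0$. When $\Phi_s(G_J)=0$ the claim is trivial; in general, ruling out this bad event with probability tending to one requires controlling $\Phi_s(G_J)$ from below, i.e.\ the dimension-free mixing of the exact Gibbs sampler, which is addressed separately via posterior asymptotics and the dimensionality-reduction analysis of $G_J$ (Theorem \ref{theorem: gibbs_one_level_nested}). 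Combining such a uniform lower bound with $\pi_J(K^c)\to 0$ forces $\tfrac{8}{s\kappa^2}\pi_J(K^c)<\Phi_s(G_J)$ eventually with high probability, closing the argument.
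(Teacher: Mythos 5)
Your proposal follows the paper's proof essentially step for step: apply Corollary \ref{cor:bound_conductance} with $K=S^*\times\R^{J\ell}$, lower-bound the multiplicative factor by $\kappa^2/4$ via \eqref{eq:product_cond_conductance}, Lemma \ref{lemma_product_conductance} and condition $(C)$ (using $\kappa\le 1$ to take the minimum over the two blocks), bound the prefactor of the additive error by one, and invoke Bernstein--von Mises under $(B1)$--$(B3)$ to send $\pi_J(K^c)$ to zero in $Q^{(J)}_{\psi^*}$-probability. The one place you diverge is the final absorption step. The paper simply asserts that $Q_{\psi^*}^{(J)}\bigl(\tfrac{1-\pi_J(K^*)}{s}\le\tfrac{\kappa^2}{8}\Phi_s(G_J)\bigr)\to 1$ and notes that this event is contained in the desired one; you correctly observe that this comparison is delicate because the error is additive while the target is a multiple of the random quantity $\Phi_s(G_J)$, and you propose to close the gap by importing the lower bound on $\Phi_s(G_J)$ from Theorem \ref{theorem: gibbs_one_level_nested}. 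Be aware that this imports assumptions $(B4)$--$(B6)$, which are not among the hypotheses of the statement being proved (it assumes only $(B1)$--$(B3)$ and $(C)$); the paper's own proof leaves this point implicit, and under $(B1)$--$(B3)$ alone one only knows $\pi_J(K^c)=o_P(1)$ with no rate, so a random $\Phi_s(G_J)$ decaying even faster is not formally excluded. In the paper's intended use the theorem is only ever combined with Theorem \ref{theorem: gibbs_one_level_nested} inside Corollary \ref{mixing_times_warm_hier}, where $(B4)$--$(B6)$ hold anyway, so your explicit resolution is harmless there; but as a standalone proof of the theorem exactly as stated, your closing step uses more than the stated hypotheses. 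Everything else in the proposal is correct and matches the paper.
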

At an intuitive level, Theorem \ref{theorem_one_level_nested} states that two things are sufficient for $P_J$ to mix fast: first that $G_J$ mixes fast and second that the conditional conductance of $P_J$ around $\psi^*$ is good.
Motivated by Theorem \ref{theorem_one_level_nested}, the next theorem studies the behaviour of $\Phi_s(G_J)$ as $J\to\infty$. 
\begin{theorem}\label{theorem: gibbs_one_level_nested}
Consider model \eqref{one_level_nested} and $G_J$ defined in \eqref{two_blocks_gibbs_nested}. Then, under assumptions $(B1)$-$(B6)$, for every $s \in(0,1/2)$ there exists a constant $g(s) > 0$ such that
\begin{equation}\label{cond_gibbs_hier}
Q_{\psi^*}^{(J)}\left(\Phi_s(G_J)  \geq g(s) \right) \to 1,
\qquad \hbox{ as }J \to \infty\,.
\end{equation}
\end{theorem}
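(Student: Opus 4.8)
The plan is to reduce $G_J$ to a Gibbs sampler on a space of fixed dimension, identify its total-variation limit via a joint Bernstein--von Mises expansion, and then transfer a positive conductance from the limit using Theorem \ref{theorem:delta_Gibbs_operators}. The exponential family form \eqref{exponential_family} implies that $\pi_J(\d\psi\mid\bm{\theta})$ depends on $\bm{\theta}$ only through the sufficient statistics $\bm{T}=\big(\sum_{j=1}^{J}T_1(\theta_j),\dots,\sum_{j=1}^{J}T_S(\theta_j)\big)$. Consequently, by the dimensionality-reduction argument of \citep[Lemma 4.2]{AZ23}, the $s$-conductance of $G_J$ equals that of the two-block Gibbs sampler $\hat G_J$ acting on the \emph{fixed-dimensional} space $\R^{D}\times\R^{S}$ with target $\hat\pi_J$, the joint posterior law of $(\psi,\bm{T})$. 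This is the key move: it replaces a chain whose state dimension grows like $J\ell$ with a sequence of chains on a space whose dimension does not depend on $J$.

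I would then pass to centred and rescaled coordinates $u=\sqrt{J}(\psi-\psi^*)$ and $v=\sqrt{J}\,(\bm{T}/J-\tau^*)$, where $\tau^*$ is the $Q_{\psi^*}$-almost sure limit of $\bm{T}/J$. Since this is a block-wise affine bijection, it commutes with the conditional updates of $\hat G_J$ and preserves the $\pi$-probability of every set, hence it leaves the $s$-conductance unchanged: $\Phi_s(\hat G_J)=\Phi_s(\tilde G_J)$, where $\tilde G_J$ is the two-block Gibbs sampler targeting the rescaled law $\tilde\pi_J$ of $(u,v)$. The analytic heart of the proof is to show that $\tilde\pi_J$ converges in total variation, with $Q_{\psi^*}^{(J)}$-probability tending to one, to a centred non-degenerate Gaussian $\tilde\pi_\infty$ on $\R^{D+S}$ whose covariance is a deterministic function of $\Fisher(\psi^*)$ and of the curvature of \eqref{exponential_family}; the (data-dependent) centering is irrelevant, since the conductance of a Gaussian Gibbs sampler does not depend on its mean.

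This joint total-variation convergence is the main obstacle. It requires (i) the marginal Bernstein--von Mises theorem for $\psi$, which follows from $(B1)$--$(B3)$ and forces the total variation distance between the law of $\sqrt{J}(\psi-\hat\psi_J)$ and $N(0,\Fisher(\psi^*)^{-1})$ to vanish; and (ii) a conditional central limit theorem for $v$ given $\psi$: conditionally on $\psi$ the components $\theta_j\sim\pi_J(\d\theta_j\mid\psi,Y_j)$ are independent, so $\bm{T}$ is a sum of independent (non-identically distributed) terms. The technical regularity assumptions $(B4)$--$(B6)$ are exactly what is needed to bound the relevant moments and to upgrade this CLT to total-variation convergence, uniformly for $\psi$ in a shrinking neighbourhood of $\psi^*$ and with high probability over the data. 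Splicing the marginal law of $u$ together with the conditional law of $v$ given $u$ then yields $\lTV \tilde\pi_J-\tilde\pi_\infty\rTV\to 0$ in $Q_{\psi^*}^{(J)}$-probability.

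Finally I would transfer positivity of the conductance from the limit. The Gibbs sampler $\tilde G_\infty$ targeting the non-degenerate Gaussian $\tilde\pi_\infty$ is geometrically ergodic with strictly positive conductance $\Phi(\tilde G_\infty)>0$, controlled by the cross-block correlation; non-degeneracy is guaranteed by the non-singularity of $\Fisher(\psi^*)$ in $(B3)$. Fix $s\in(0,1/2)$ and set $\delta_J=\lTV \tilde\pi_J-\tilde\pi_\infty\rTV$. On the event $\{\delta_J<\epsilon\}$, whose probability tends to one for any $\epsilon>0$, Theorem \ref{theorem:delta_Gibbs_operators} gives
\[
\Phi_s(G_J)=\Phi_s(\tilde G_J)\geq \Phi_{s-\delta_J}(\tilde G_\infty)-\frac{4\delta_J}{s}\geq \Phi(\tilde G_\infty)-\frac{4\epsilon}{s},
\]
using that $\Phi_{s'}$ is non-decreasing in $s'$. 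Choosing $\epsilon<s\,\Phi(\tilde G_\infty)/8$ makes the right-hand side at least $g(s):=\Phi(\tilde G_\infty)/2>0$, which is the desired conclusion. The delicate steps are thus the uniform-in-data total-variation control in (ii) and the verification that $\tilde\pi_\infty$ is non-degenerate, this last being where $(B3)$ together with the identifiability content of $(B2)$ enters.
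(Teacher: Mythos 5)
Your overall route is the same as the paper's: reduce to a fixed-dimensional Gibbs sampler via the sufficient statistic $\bm{T}$, rescale, invoke the total-variation convergence of the rescaled posterior to a non-degenerate Gaussian (Lemma C.18 of \cite{AZ23}), use positivity of the conductance of the limiting Gaussian Gibbs sampler \citep{amit1996convergence}, and transfer back via Theorem \ref{theorem:delta_Gibbs_operators}. The final perturbation step, including the monotonicity $\Phi_{s-\delta_J}\geq\Phi_0$, is correct.

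There is, however, one genuine gap: you assert that the sufficiency reduction \emph{preserves the $s$-conductance}, i.e.\ $\Phi_s(G_J)=\Phi_s(\hat G_J)$. This is not what the dimensionality-reduction machinery gives you, and it is not obviously true. The map $\bm{\theta}\mapsto\bm{T}(\bm{\theta})$ is not injective, so the infimum defining $\Phi_s(G_J)$ ranges over \emph{all} measurable $A\subset\R^{D+\ell J}$, not merely over preimages of sets in the reduced space $\R^{D+S}$; for a non-cylinder set $A$ the flux $G_{1,J}(\partial A)$ depends on the full $\psi$-sections $A_\psi(\bm{\theta})$ and is not a functional of the reduced chain. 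What the co-deinitializing argument (Lemma \ref{sufficient_lemma} in the paper, or \citep[Lemma 4.1]{AZ23}) actually delivers is equality of the \emph{warm-start mixing times}, $t_{mix}(G_J,\epsilon,M)=t_{mix}(\hat G_J,\epsilon,M)$. The paper therefore takes a detour: it bounds $\Phi_s(\tilde G_J)$ from below as you do, converts this into a uniform mixing-time bound for $\tilde G_J$ (and hence for $G_J$) via Lemma \ref{conductance_convergence}, and then converts the mixing-time bound back into a lower bound on $\Phi_s(G_J)$ via the inequality $\Phi_s(P)\geq(1/2-\epsilon)/t_{mix}(P,\epsilon,s^{-1})$ (Lemma \ref{lemma:s_conductance_mix_time}). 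This round trip degrades the constant $g(s)$ but costs nothing for the qualitative statement. The fact that the paper's Lemma \ref{dimensionality_reduction_logistic} only relates $\Phi_s(G)$ to $\Phi_{s/8}(\tilde G)$ through exactly this mixing-time detour, with an explicit loss, is a strong indication that exact conductance preservation is not available. Your argument is repairable by inserting the same detour, but as written the claimed equality is unjustified. (The subsequent affine rescaling is harmless, since it is a block-respecting bijection.)
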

The inequality in \eqref{cond_gibbs_hier} holds with probability converging to one as $J\to\infty$, under the true data-generating mechanism. Equivalently, \eqref{cond_gibbs_hier} states that $\Phi_s(G_J)$, which is a random variable depending on $Y_{1:J}$, is bounded away from zero in $Q_{\psi^*}^{(\infty)}$-probability as $J \to \infty$.

\begin{remark}\label{rmk:approx_2}
Theorem \ref{theorem: gibbs_one_level_nested} crucially relies on the fact that the approximate version of the conductance is considered, i.e.\ that $s>0$. Indeed, the proof of Theorem \ref{theorem: gibbs_one_level_nested} exploits asymptotic characterizations on the posterior of $(\psi, \bm{\theta})$, which in general do not provide meaningful bounds for the case $s = 0$. This is connected with Remark \ref{asymptotic_cond_Gibbs}: the exact conductance with $s=0$ is not directly controlled by the convergence of the invariant distributions ($\pi_n\to\pi$), since it is sensitive to the behaviour of kernels over sets with arbitrarily small probability under $\pi$.
\end{remark}

 We can combine Theorem \ref{theorem_one_level_nested} with Theorem \ref{theorem: gibbs_one_level_nested} to deduce that the mixing times of $P_J$ are uniformly bounded in probability as $J\to\infty$.
\begin{corollary}\label{mixing_times_warm_hier}
Consider model \eqref{one_level_nested} and $P_J$ defined in \eqref{two_blocks_MH_nested}. Under assumptions $(B1)$-$(B6)$ and $(C)$, for every $M \geq 1$ and $\epsilon > 0$ there exists $T\left(\psi^*, \epsilon, M \right) < \infty$ such that
\begin{align*}
&Q_{\psi^*}^{(J)}\left(t_{mix}(P_J, \epsilon, M)
\leq T\left(\psi^*, \epsilon, M \right)\right) \to 1
&\hbox{as }J \to \infty.
\end{align*}
\end{corollary}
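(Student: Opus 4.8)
The plan is to reduce the mixing-time bound to a lower bound on the $s$-conductance $\Phi_s(P_J)$ via Lemma \ref{conductance_convergence}, and then to supply such a lower bound --- uniform in $J$ and holding with $Q_{\psi^*}^{(J)}$-probability tending to one --- by chaining Theorem \ref{theorem_one_level_nested} and Theorem \ref{theorem: gibbs_one_level_nested}. Concretely, fix $M \geq 1$ and $\epsilon > 0$; the case $\epsilon \geq 1$ is trivial since then $t_{mix}(P_J,\epsilon,M)=0$, so assume $\epsilon \in (0,1)$ and set $s = \epsilon/(2M)$. Since $\epsilon < 1 \leq M$ one has $s \in (0,1/2)$, so both theorems and the ``in particular'' clause of Lemma \ref{conductance_convergence} may be invoked with this single common value of $s$.

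First I would apply Theorem \ref{theorem: gibbs_one_level_nested} with this $s$ to obtain a constant $g(s) > 0$, not depending on $J$, such that the event $E_{1,J} = \{\Phi_s(G_J) \geq g(s)\}$ satisfies $Q_{\psi^*}^{(J)}(E_{1,J}) \to 1$. Next I would apply Theorem \ref{theorem_one_level_nested} with the same $s$, giving that $E_{2,J} = \{\Phi_s(P_J) \geq \tfrac{\kappa^2}{8}\,\Phi_s(G_J)\}$ satisfies $Q_{\psi^*}^{(J)}(E_{2,J}) \to 1$, where $\kappa$ is the constant from assumption (C). On the intersection $E_{1,J} \cap E_{2,J}$ one then has
\[
\Phi_s(P_J) \;\geq\; \frac{\kappa^2}{8}\,\Phi_s(G_J) \;\geq\; \frac{\kappa^2}{8}\,g(s) \;=:\; \phi \;>\; 0,
\]
with $\phi$ independent of $J$, and a union bound gives $Q_{\psi^*}^{(J)}(E_{1,J} \cap E_{2,J}) \geq 1 - Q_{\psi^*}^{(J)}(E_{1,J}^c) - Q_{\psi^*}^{(J)}(E_{2,J}^c) \to 1$.

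To conclude I would invoke Lemma \ref{conductance_convergence} for $P_J$, which requires $P_J$ to be $\pi_J$-reversible and positive semi-definite; these follow from Lemma \ref{lemma:rev_pos_supp} together with the standing assumption that the conditional kernels $P^{\bm{\theta}}$ and $P^{\psi, \bm{Y}}$ (hence each factor $P^{\psi, Y_j}$) are reversible and positive semi-definite. With $s = \epsilon/(2M)$, Lemma \ref{conductance_convergence} yields
\[
t_{mix}(P_J, \epsilon, M) \;\leq\; \frac{\log(2M) - \log(\epsilon)}{-\log\!\left(1 - \Phi_s^2(P_J)/2\right)}.
\]
The right-hand side is decreasing in $\Phi_s(P_J)$, so on $E_{1,J} \cap E_{2,J}$ I may replace $\Phi_s(P_J)$ by its lower bound $\phi$ to obtain the $J$-free constant
\[
T(\psi^*, \epsilon, M) \;=\; \frac{\log(2M) - \log(\epsilon)}{-\log\!\left(1 - \phi^2/2\right)} \;<\; \infty,
\]
finite precisely because $\phi > 0$. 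Hence $E_{1,J} \cap E_{2,J} \subseteq \{t_{mix}(P_J, \epsilon, M) \leq T(\psi^*, \epsilon, M)\}$, and taking $Q_{\psi^*}^{(J)}$-probabilities gives the claim.

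The main obstacle is entirely upstream: all the substantive work sits in Theorems \ref{theorem_one_level_nested} and \ref{theorem: gibbs_one_level_nested}, the latter relying on posterior contraction and Bernstein--von Mises arguments and on the essential use of the approximate conductance $s>0$ (cf.\ Remark \ref{rmk:approx_2}). The corollary itself is a routine assembly, whose only delicate points are selecting a single $s = \epsilon/(2M) \in (0,1/2)$ that feeds both theorems and Lemma \ref{conductance_convergence} consistently, and verifying the reversibility and positive semi-definiteness of $P_J$ needed to apply Lemma \ref{conductance_convergence}.
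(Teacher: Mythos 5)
Your proposal is correct and follows essentially the same route as the paper's proof: combine Theorems \ref{theorem_one_level_nested} and \ref{theorem: gibbs_one_level_nested} with $s=\epsilon/(2M)$ to get $\Phi_s(P_J)\geq \frac{\kappa^2}{8}g(s)$ with probability tending to one, then apply Lemma \ref{conductance_convergence}; your constant $T$ (with $\phi^2/2=\kappa^4 g^2(s)/128$) matches the paper's exactly. The extra details you supply (union bound, the check of reversibility and positive semi-definiteness via Lemma \ref{lemma:rev_pos_supp}) are consistent with the paper's standing assumptions and do not change the argument.
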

Corollary \ref{mixing_times_warm_hier} implies that 
\begin{align*}
t_{mix}(P_J, \epsilon, M)&=\sO_P(1)
&\hbox{as }J\to\infty\,,
\end{align*}
i.e.\ that the sequence of random variables $\{t_{mix}(P_J, \epsilon, M)\}_{J=1,2,\dots}$ is 
bounded in probability as $J\to\infty$.
This is in accordance with numerical evidences (see e.g.\ Figure \ref{fig:binom_simulations}).

\subsection{Computational complexity implications}\label{sec:complexity}
Sampling from $P_J$ defined in \eqref{two_blocks_MH_nested} requires $\mathcal{O}(J)$ computational cost. 
This follows from \eqref{eq:prod_Y_kernels} and the fact that sampling from each $P^{\psi, Y_j}\left(\theta_j, \d \theta_j' \right)$ involves an $\mathcal{O}(1)$ cost, since the conditional distributions $\pi_J(\d \theta_j \mid\psi)=\L(\d\theta_j|\psi, Y_j,)$ depend only on the local observations $Y_j$ and not on the whole dataset $Y_{1:J}$.
The cost of sampling from $P^{\bm{\theta}}\left(\psi, \cdot\right)$ depends on the specification of $P^{\bm{\theta}}$, but it is typically dominated by the $\mathcal{O}(J)$ cost of computing the conditional density $\pi_J(\d \psi \mid \bm{\theta})=\L(\d \psi \mid \bm{\theta}, Y_{1:J})$ once. Under the assumptions of Corollary \ref{mixing_times_warm_hier}, we obtain therefore that, with high probability with respect to the data generation process, the MwG algorithm with kernel $P_J$ produces a sample with $\epsilon$-accuracy in TV distance with $\mathcal{O}(J)$ cost when initialized from a warm start. 
Section \ref{sec:feasible_start} extends Corollary \ref{mixing_times_warm_hier} to the case where the algorithm is implemented starting from a specific feasible distribution, sampling from which requires $\mathcal{O}(J\log (J))$ cost (given access to the maximum marginal likelihood estimator). Thus, under the above assumption, the total cost required by MwG for each $\epsilon$-accurate sample, including the initialization, scales as $\mathcal{O}(J\log (J))$.

It is interesting to compare this cost with the one of standard gradient-based MCMC methods, such as Metropolis-Adjusted Langevin Algorithm (MALA) and Hamiltonian Monte Carlo (HMC). 
The cost required by a single full likelihood or gradient evaluation for model \eqref{eq:one_level_nested_intro} is $\mathcal{O}(J)$.
Available results suggest that the number of gradient evaluations required by MALA and HMC (as well as other $\pi_J$-invariant or Metropolis-adjusted gradient-based MCMC schemes) to converge to stationarity scales as $\mathcal{O}(J^{\alpha})$ as $J\to\infty$, for some $\alpha>0$ that depends on the setup and type of algorithm \citep{R98,B13,WSC22}.
Combining these two facts, we can expect the cost required by gradient-based MCMC schemes for each posterior sample to scale as $\mathcal{O}(J^{1+\alpha})$ for $\alpha>0$. 
These brief theoretical considerations are in agreement with the numerical results observed in Sections \ref{sec:motivating} and \ref{sec:numerics}.

\subsection{Models with discrete data}\label{sec:discrete_data}
The main requirement in Corollary \ref{mixing_times_warm_hier} is Assumption $(C)$, which imposes a bound on the conditional conductance, uniformly over $Y_j\in\sY$. A setting where the latter is easily satisfied is given by models with discrete data, where $\sY$ is finite, provided exact sampling for $\psi$ is available (as will be the case in all the numerical examples of this paper).

More formally, let $f(y \mid \theta)$ in \eqref{one_level_nested} be a probability mass function with support $\{y_0, \dots, y_m\}$ with $m < \infty$, i.e.\ for every $\theta \in \mathbb{R}^K$ we require the support of $f(y \mid \theta)$ not to depend on $\theta$, i.e.\
\begin{equation}\label{eq:ass_discrete}
\sum_{r = 0}^mf(y_r \mid \theta) = 1, \quad f(y_r \mid \theta) > 0, \quad  r = 0, \dots, m.
\end{equation}
The assumption in \eqref{eq:ass_discrete} is mild and holds for most likelihoods usually employed with binary or categorical data, e.g. multinomial logit and probit. For simplicity we consider the case of a single local parameter with Gaussian prior, i.e.\,
\begin{equation}\label{finite_model}
Y_{j} \mid\theta_j \sim f(y \mid \theta_j)\,,\quad
\theta_1, \dots, \theta_J \mid\mu, \tau \simiid N(\mu, \tau^{-1})\,,\quad
 (\mu,  \tau) \sim p_0(\cdot)\,.
\end{equation}
For example the case $f(y \mid \theta) = \binom{m}{y}\frac{e^{y\theta}}{(1+e^\theta)^m}$, with $y = 0, \dots, m$, corresponds to the logistic hierarchical model with Gaussian random effects. The next proposition shows that, under mild conditions, MwG schemes targeting model \eqref{finite_model} lead to dimension-free mixing times.
\begin{proposition}\label{prop:binary_MH}
Consider model \eqref{finite_model} with likelihood as in \eqref{eq:ass_discrete} and let assumptions $(B1)$-$(B3)$ be satisfied. Let the operator $P_J$ be as in \eqref{two_blocks_MH_nested}, where $\kappa\left(P^{\psi, y_r}\right)$ is continuous with respect to $\psi$ and $\kappa\left(P^{\psi, y_r}\right) > 0$ for every $r = 0, \dots, m$ and $\psi \in S^*$, with $S^*$ neighborhood of $\psi^*$. Then for every $M \geq 1$ and $\epsilon > 0$ there exists $T\left(\psi^*, \epsilon, M \right) < \infty$ such that
\begin{align*}
Q_{\psi^*}^{(J)}\left(t_{mix}(P_J, \epsilon, M)\leq T\left(\psi^*, \epsilon, M \right)\right) &\to 1
&\hbox{as }J \to \infty. 
\end{align*}
\end{proposition}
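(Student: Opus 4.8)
The plan is to deduce the statement directly from Corollary \ref{mixing_times_warm_hier}, by checking that model \eqref{finite_model} under the stated hypotheses satisfies assumptions $(B1)$--$(B6)$ together with condition $(C)$. Assumptions $(B1)$--$(B3)$ are assumed outright, so the two things that remain to be established are the technical regularity conditions $(B4)$--$(B6)$ and, more importantly, the uniform lower bound on the conditional conductance required by $(C)$.

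For the regularity conditions $(B4)$--$(B6)$, I would exploit the discreteness assumption \eqref{eq:ass_discrete}: since $y$ ranges over the finite set $\{y_0,\dots,y_m\}$ with support not depending on $\theta$, the marginal likelihood $g(y\mid\psi)$ in \eqref{likelihood_data} and the relevant functionals of $f(y\mid\theta)$ reduce to finite sums of smooth, strictly positive terms. Consequently the integrability, smoothness and tail-type requirements encoded in $(B4)$--$(B6)$ hold automatically, being conditions over finitely many bounded quantities, and their verification is routine and requires no asymptotics.

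The core of the argument is verifying condition $(C)$, i.e.\ exhibiting a neighbourhood $S^*$ of $\psi^*$ and a constant $\kappa>0$, \emph{independent of $J$ and of $Y_{1:J}$}, with $\kappa(P_J, S^*\times\R^{J\ell})\geq\kappa$. By \eqref{eq:bound_cond_cond_hier} this splits into controlling $\kappa(P^{\bm{\theta}})$ and $\inf_{(\psi,Y_j)}\kappa(P^{\psi,Y_j})$. The first is immediate: since $\psi$ is low-dimensional and $\pi_J(\d\psi\mid\bm{\theta})$ is available in closed form through the exponential family \eqref{exponential_family}, the update on $\psi$ is performed exactly, as in the discrete-data setup of this section, and an exact update has conditional conductance equal to $1/2$ directly from definition \eqref{conditional_conductance}. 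For the second, I would shrink $S^*$ to a compact neighbourhood $\bar{S}\subseteq S^*$ of $\psi^*$ (a closed ball of small radius). For each fixed $r\in\{0,\dots,m\}$ the map $\psi\mapsto\kappa(P^{\psi,y_r})$ is continuous and strictly positive on $\bar{S}$ by hypothesis, hence attains a positive minimum there; taking the minimum over the \emph{finitely many} values $y_0,\dots,y_m$ yields $\kappa_0:=\inf_{(\psi,y_r)\in\bar{S}\times\sY}\kappa(P^{\psi,y_r})>0$. Plugging this into \eqref{eq:product_cond_conductance} (which rests on Lemma \ref{lemma_product_conductance}) gives $\kappa(P^{\psi,\bm{Y}})\geq\kappa_0^2/4$ uniformly in $J$ and $\psi\in\bar{S}$, so that $(C)$ holds with $\kappa=\min\{1/2,\kappa_0^2/4\}$ and $S^*=\bar{S}$.

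With $(B1)$--$(B6)$ and $(C)$ in hand, Corollary \ref{mixing_times_warm_hier} applies verbatim and delivers the claimed dimension-free bound on $t_{mix}(P_J,\epsilon,M)$. The main obstacle, and the step that genuinely uses the discrete-data structure, is the uniform-in-data lower bound on the conditional conductance: in the general model \eqref{one_level_nested} the infimum over $Y_j\in\sY$ appearing in \eqref{eq:bound_cond_cond_hier} could degenerate to zero, whereas here the finiteness of the support $\sY=\{y_0,\dots,y_m\}$ reduces it to a minimum over finitely many positive continuous functions on the compact set $\bar{S}$, which is automatically bounded away from zero. Care is only needed to ensure the compact neighbourhood $\bar{S}$ can be chosen inside $S^*$ while still containing $\psi^*$, which is harmless since positivity of $\kappa(P^{\psi,y_r})$ is assumed throughout $S^*$.
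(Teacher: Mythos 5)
Your proposal follows essentially the same route as the paper's (very short) proof: the heart of the argument is that, after shrinking $S^*$ to a compact neighbourhood of $\psi^*$, the hypothesised continuity and positivity of $\psi\mapsto\kappa(P^{\psi,y_r})$ together with the finiteness of the support $\{y_0,\dots,y_m\}$ give, via Weierstrass, a uniform-in-$(J,Y_{1:J})$ lower bound $\kappa_0>0$ on the conditional conductance, hence condition $(C)$ through \eqref{eq:product_cond_conductance}, and the conclusion then follows from Corollary \ref{mixing_times_warm_hier}. That part is correct and is exactly what the paper does.

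The one place where you are too quick is the verification of $(B4)$--$(B6)$. These are not ``conditions over finitely many bounded quantities'': they concern posterior moments $E[T_s^p(\theta_j)\mid Y_j,\psi]$ and the posterior characteristic function of $T(\theta_j)$, i.e.\ integrals over the \emph{continuous} local parameter $\theta_j\in\R^\ell$, whose integrability, differentiability in $\psi$, and decay in $t$ are not automatic consequences of $\sY$ being finite (the discreteness only turns the outer expectation $E_{Y_j}[\cdot]$ into a finite sum). The paper does not verify these conditions from scratch either; it invokes Lemma 5.3 of \cite{AZ23}, which establishes them for exactly this class of discrete-data models with Gaussian local prior as in \eqref{finite_model}. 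So your argument is structurally complete but needs that citation (or an honest verification using the Gaussian form of $p(\theta\mid\mu,\tau)$) in place of the claim that $(B4)$--$(B6)$ ``hold automatically.''
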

Requiring $\kappa\left(P^{\psi, y_r}\right) > 0$ for fixed $\psi$ and $y_r$ is arguably a mild condition, e.g.\ it is implied by geometric ergodicity \citep{roberts1996geometric, jarner2000geometric}.

\subsubsection{Numerical simulations}\label{sec:numerics}
First, we provide details on the numerical results illustrated in Figure \ref{fig:binom_simulations}. The model specification is as in \eqref{eq:one_level_nested_intro}, which is a special case of \eqref{finite_model} with the logistic likelihood. The prior on the hyperparameters $\psi = (\mu, \tau)$ is set to $\mu \mid \tau \sim N(0, 10^3/\tau)$ and $\tau \sim \text{Gamma}(1,1)$, while the data are generated by the same model with true parameters $\mu^* = \tau^* = 1$ and $m = 10$ observations per group. Four different algorithms are employed to sample from the associated posterior distribution: a two-block GS defined as in \eqref{two_blocks_gibbs_nested}, where the conditional update of $\mathcal{L}(\theta_j \mid \psi, Y_j)$ is performed through adaptive rejection sampling \citep{GW92}; a MwG algorithm defined as in \eqref{two_blocks_MH_nested}, where the conditional update is done using MH with Barker proposal as in Algorithm 2 of \citep{livingstone2022barker}; MALA with optimal tuning, in the sense that samples obtained from a long run of GS were used to optimally tune a diagonal preconditioning matrix \citep{R98}; the No-U-Turn-Sampler (NUTS) \citep{hoffman2014no} implemented in the software Stan \citep{Rstan}.

Since mixing times of high-dimensional Markov chains are computationally intractable and hard to approximate (see e.g.\ the discussion in \citet{biswas2019estimating}), 
we consider Integrated Autocorrelation Times (IATs) as an easier-to-compute empirical measure of convergence time. Given a $\pi$-invariant Markov chain  $(X^{(t)})_{t\geq 1}$, the IAT associated to a square-integrable test function $f$ is defined as
\[
\textsc{IAT}(f) = 1+2\sum_{t = 2}^\infty\text{Corr}\left(f(X^{(1)}), f(X^{(t)}) \right)\,,
\]
and it measures the number of dependent samples that is equivalent to an independent draw from $\pi$ in terms of estimation of $\int f(x)\pi(dx)$. Intuitively, the higher the IAT the slower the convergence. 
In our numerical experiments, we estimate the IAT with the ratio of the number of iterations and the effective sample size, as described in \cite{GF15}, with the effective sample size computed with the R package \emph{mcmcse} \citep{F21}.

Figure \ref{fig:binom_simulations} depicts the median (over random data set replications) of the maximum IAT over all the parameters (both global and group specific) for the four algorithms, with the number of groups ranging from $128$ to $4096$. Since each iteration of the NUTS kernel involves multiple steps (i.e.\ multiple gradient evaluations), in order to provide a fair comparison the associated IAT is multiplied by the average number of gradient evaluations per iteration. In accordance with Corollary \ref{mixing_times_warm_hier}, the IATs of coordinate-wise samplers do not increase as $J\to\infty$. On the contrary, the IATs of gradient-based samplers seem to grow as $J\to\infty$. The optimally tuned MALA shows better performances than a black-box implementation of NUTS, illustrating the importance of carefully tuning step-sizes.

In our second set of experiments, we test coordinate-wise samplers in situations where the dimensionality of $\theta_j$ in model \eqref{finite_model} is increased by including the presence of $\ell>1$ covariates per group, leading to $J\ell$ latent parameters. In particular, we consider the following specification of the hierarchical logistic regression model
\begin{equation}\label{logistic_model_with_covariates}
Y_{ij} \mid \bm{\theta}_j, X_i \overset{\text{ind.}}{\sim} \, \text{Bernoulli}\left(\frac{e^{X_i^\top\bm{\theta}_j}}{1+e^{X_i^\top\bm{\theta}_j}} \right),\quad
\bm{\theta}_{j}\mid \bm{\mu}, \bm{\tau} \overset{\text{iid}}{\sim} \otimes_{k = 1}^\ell N(\mu_k, \tau_k^{-1})\,,\quad (\bm{\mu}, \bm{\tau}) \sim p_0(\cdot),
\end{equation}
where $i = 1, \dots, m$, $\theta_j = (\theta_{1j}, \dots, \theta_{\ell j})$ for every $j = 1, \dots, J$, $\bm{\mu} = (\mu_1, \dots, \mu_\ell)$ and $\bm{\tau} = (\tau_1, \dots, \tau_\ell)$. The usual conjugate prior is employed for $(\bm{\mu}, \bm{\tau})$, i.e.
\begin{equation}\label{prior_logistic_model_with_covariates}
\mu_k \mid \tau_k \overset{\text{ind.}}{\sim}N(0, 10^3/\tau_k)\,, \quad \tau_k \simiid \text{Gamma}(1,1), \quad k = 1, \dots, \ell.
\end{equation}
In all the simulations to follows, the covariates used for the data-generation process (excluding the intercept) are independently sampled from a uniform distribution between $-5$ and $5$.

 \begin{figure}
\centering
\includegraphics[width=.43\textwidth]{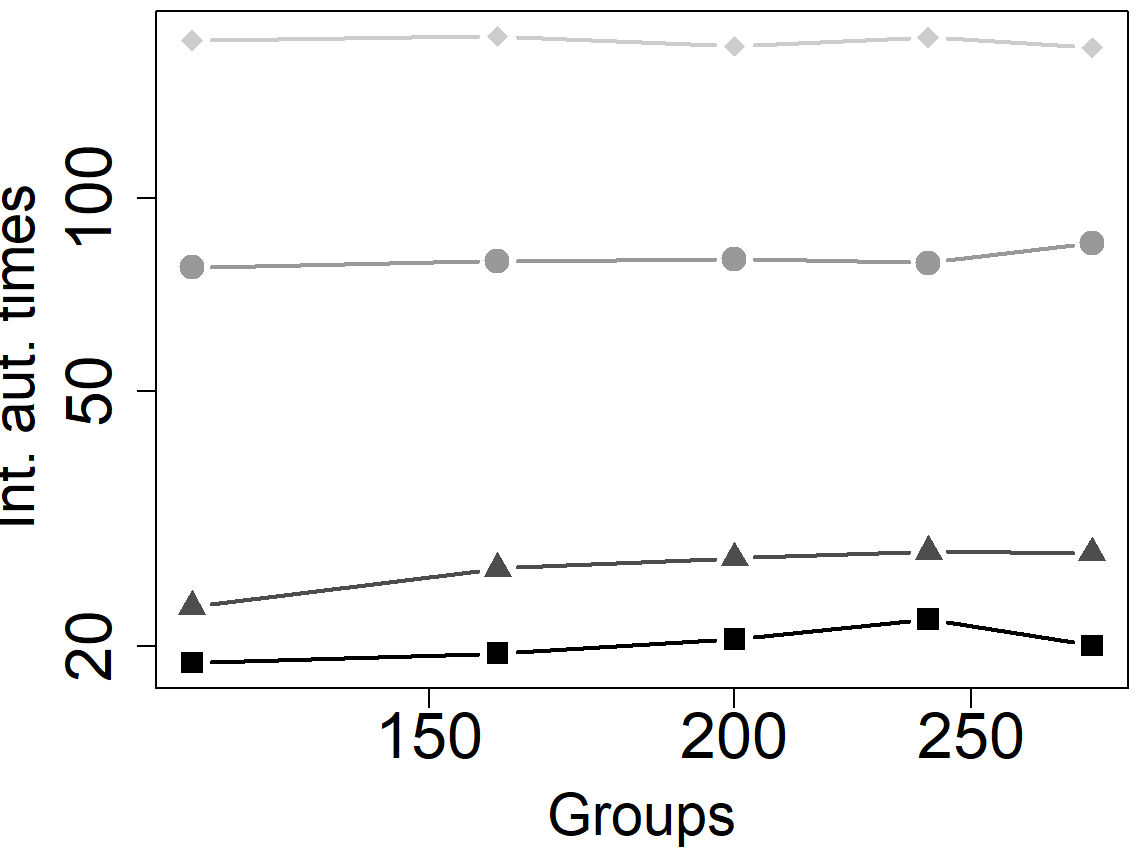} \quad
\includegraphics[width=.47\textwidth]{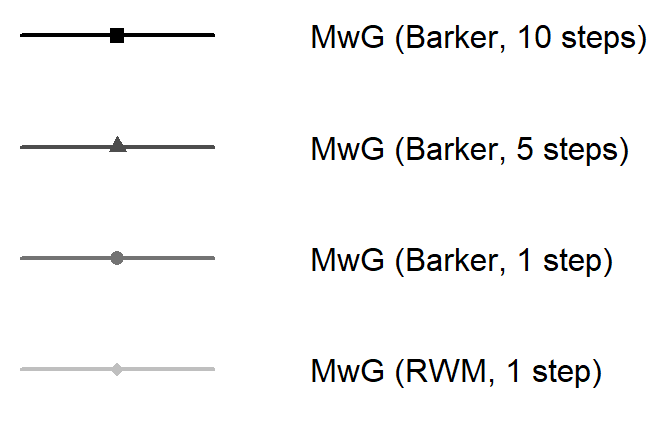}
 \caption{\small{
Median IATs (on the log scale) of four MCMC schemes targeting the posterior distribution of model \eqref{logistic_model_with_covariates} with $\ell = 5$ and $m=30$, as a function of the number of groups.
The median refers to repetitions over datasets randomly generated according to the model with $(\tau^*_1,\dots,\tau^*_5)=(2,1,1,3,2)$ and $\mu_k^*\stackrel{iid}\sim \hbox{Unif}([-1,1])$ for every $k = 1,\dots, 5$.}}
 \label{fig:binom_hier5covariates}
\end{figure}
Figure \ref{fig:binom_hier5covariates}, which is based on model \eqref{logistic_model_with_covariates} with $\ell = 5$ and $m=30$, suggests that the dimension-free convergence of coordinate-wise methods holds even in the presence of covariates. Indeed all the four schemes, whose kernels are as in \eqref{two_blocks_MH_nested} and differ only for the choice of kernel $P^{\psi, Y_j}$ for the conditional updates of the local parameters, exhibit roughly constant IATs as the number of groups grows. Their difference in performance highlights the impact of the conditional conductance, which increases both going from RWM to Barker as well as running multiple steps of the invariant update at each iteration (see Figure \ref{fig:binom_hier5covariates}).

 \begin{figure}
\centering
\includegraphics[width=.45\textwidth]{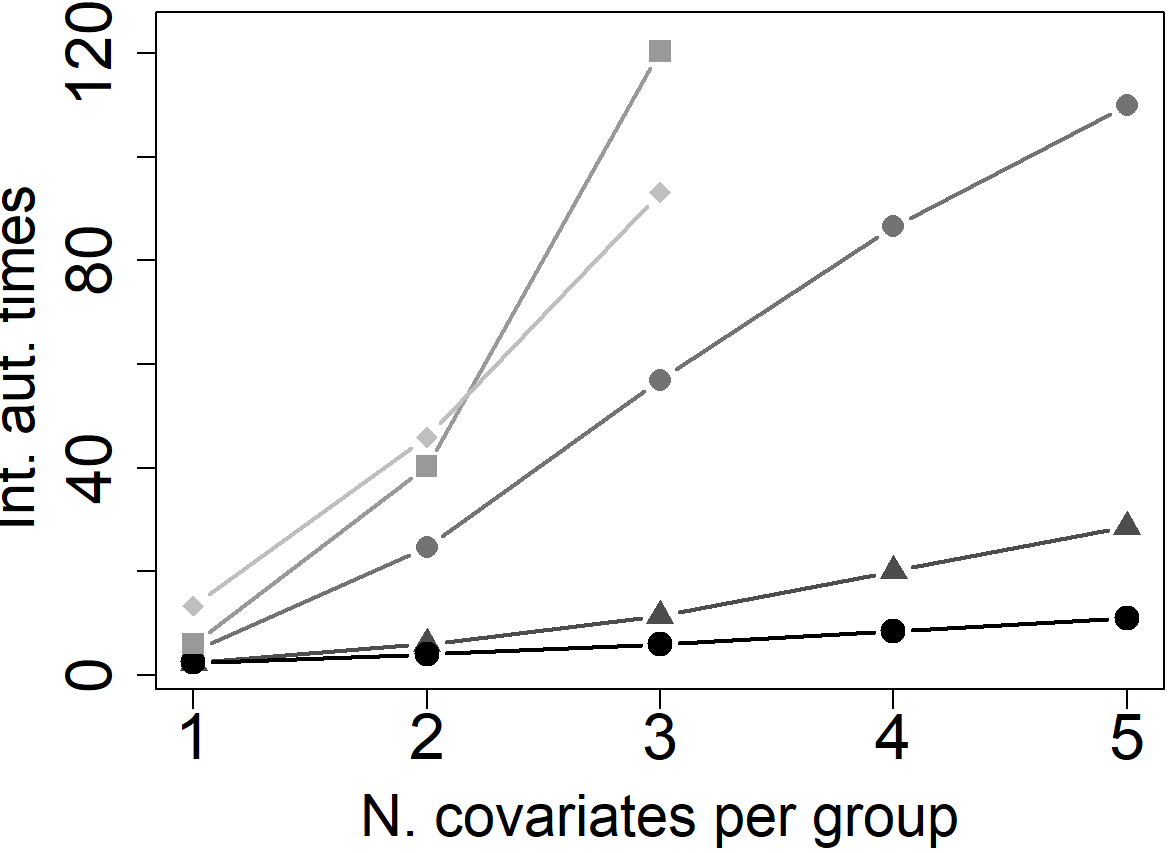} \quad
\includegraphics[width=.45\textwidth]{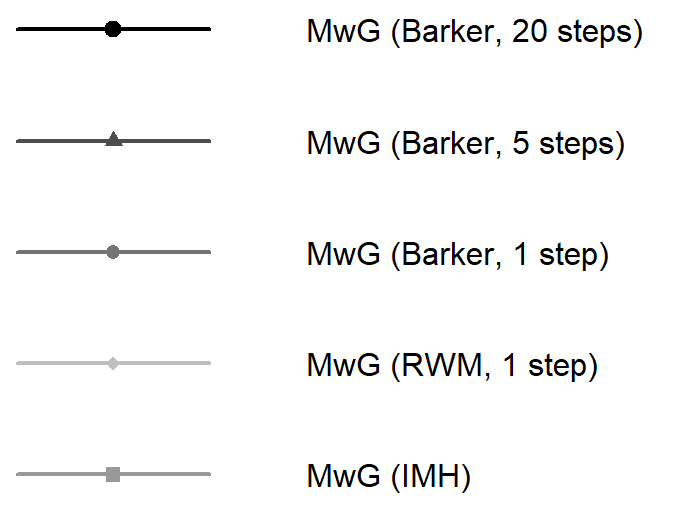}
 \caption{\small{
Median IATs of five MCMC schemes targeting the posterior distribution of model \eqref{eq:one_level_nested_intro}, as a function of the number of covariates (intercept included).
The median refers to repetitions over datasets randomly generated according to model \eqref{logistic_model_with_covariates} with $\tau_k^* = 0.5$ and $\mu_k^*\stackrel{iid}\sim \hbox{Unif}([-1,1])$ for every $k$. Some points for IMH and RWM are omitted due to the difficulty of appropriately estimating high values of IATs.}}
 \label{fig:binom_increasingcovariates}
\end{figure}

Figure \ref{fig:binom_increasingcovariates} instead reports the IATs of different coordinate-wise samplers for model \eqref{logistic_model_with_covariates} with $J = m = 30$ and the number of covariates $\ell$ ranging between $1$ and $5$. As above, the samplers' kernels are as in \eqref{two_blocks_MH_nested}, but they differ in the specific invariant update on the local parameters. The observed IATs increase with $\ell$, which is coherent with the theory developed e.g.\ in Propositions \ref{prop:IMH_logconcave} and \ref{prop:RWM_logconcave}, which suggests that the goodness of the conditional update typically deteriorates with the dimensionality (equivalently, referring to \eqref{eq:cond_bound}, the conditional conductance decreases with $\ell$). In particular, in accordance with Proposition \ref{prop:IMH_logconcave}, the IATs associated to IMH grow very quickly with the dimensionality of $\bm{\theta}_j$. This emphasizes that depending on $\ell$ the choice of the conditional update becomes more and more relevant. Notice moreover that IMH requires to compute the mode of $\pi_J(\bm{\theta}_j \mid \bm{\mu}, \bm{\tau}, Y_j)$ for every $j$ at each iteration, which becomes infeasible even with small $\ell$. 
Note that in this context also the IATs of the exact GS are expected to increase with $\ell$ (see Section \ref{sec:log_regr_hyper} for more illustrations of similar phenomena and discussion about connections with \eqref{logistic_model_with_covariates} being a so-called centered parametrization). This suggests that the increase in IATs as $\ell$ increases observed in Figure \ref{fig:binom_increasingcovariates} is due to a combination of the reduction of the conductance of GS and the conditional conductance of the updates.

\subsection{Feasible start}\label{sec:feasible_start}
Corollary \ref{mixing_times_warm_hier} proves that the Markov chain defined in \eqref{two_blocks_MH_nested} yields bounded mixing times, provided it is initialized from a warm distribution as of \eqref{N_class}. In this section we provide a so-called feasible starting distribution which provides similar guarantees. For simplicity, here we assume that the update of the global parameters is exact, i.e. $P_{1,J} = G_{1, J}$ in \eqref{two_blocks_MH_nested}.

Let $S^*$ be the neighborhood of $\psi^*$ satisfying assumption (C) and $(\nu_{j,\psi})_{j=1}^J \subset \sP(\R^{\ell})$ be a collection of $J$ distributions such that
\begin{align*}
\nu_{j,\psi}  &\in \sN\left(\pi_J\left(\d\theta_j\mid \psi \right), M \right),&j=1,\dots,J\,,
\end{align*}
for every $\psi \in S^*$ and a fixed constant $M \geq 1$. 
Under \eqref{eq:ass_discrete}, any distribution $\nu_{j,\psi} \in \sP(\R^{\ell})$ with compact support satisfies the requirement above. Denote now with $\hat{\psi}_J = \arg\max_\psi \prod_{j = 1}^Jg(Y_j \mid \psi)$ the Maximum marginal Likelihood Estimator (MLE), with $g$ as in \eqref{likelihood_data}. Let $\mu_J = \mu_{J,c} \in \mathcal{P}\left(\R^{D+\ell J} \right)$, with $c > 0$ fixed constant, be defined as
\begin{equation}\label{def_feasible}
\mu_J\left(A  \right) = \int_A\text{Unif}_{cJ^{-1/2}}\left(\hat{\psi}_J  \right)(\d \psi)\left(\prod_{j = 1}^J\nu_{j, \psi}(P^{\psi, Y_j})^{t_J}(\d \theta_j)\right)\,, \quad t_J = \frac{\log(J)}{-\log\left(1-\frac{\kappa^2}{2} \right)}
\end{equation}
for every $A \subset \R^{D+\ell J}$, where $\kappa$ is the constant satisfying assumption $(C)$. Moreover $B_c(x)$ is the closed ball of center $x$ and radius $c > 0$, $\text{Unif}\left(B\right)(\d \psi)$ denotes the uniform distribution over $B \subset \R^D$ and $(P^{\psi, Y_j})^t$ denotes the $t$-th power of the kernel defined in \eqref{two_blocks_MH_nested}. Thus, sampling from $\mu_J$ in \eqref{def_feasible} can be performed in two steps: the global parameters $\psi$ follow a random perturbation of the MLE and, conditional on $\psi$, $J$ independent Markov chains with kernels $(P^{\psi, Y_j})_{j=1}^J$ are run for a logarithmic (in $J$) number of iterations. The next proposition shows that mixing times starting from $\mu_J$ enjoy a dimension-free asymptotic behaviour.
\begin{proposition}\label{prop:feasible}
Consider the same setting and assumptions of Corollary \ref{mixing_times_warm_hier} and $\mu_J$ as in \eqref{def_feasible}.
Then, for every $\epsilon > 0$ there exists $T\left(\psi^*, \epsilon \right) < \infty$ such that
\begin{align}
&Q_{\psi^*}^{(J)}\left(t_{mix}(P_J, \epsilon, \mu_J)\leq T\left(\psi^*, \epsilon \right)\right) \to 1
&\hbox{ as }J \to \infty\,.
\end{align}
\end{proposition}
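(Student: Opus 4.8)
The plan is to reduce the mixing-time bound from the feasible start $\mu_J$ to the already-established warm-start result in Corollary \ref{mixing_times_warm_hier}, by showing that $\mu_J$ is, with high probability under $Q_{\psi^*}^{(J)}$, close in total variation to some $M'$-warm distribution for the posterior $\pi_J$. The key observation is that $\mu_J$ is built in two stages: first $\psi$ is drawn from a uniform perturbation of the MLE on a ball of radius $cJ^{-1/2}$ around $\hat\psi_J$, and then, conditionally on $\psi$, each $\theta_j$ is obtained by running the invariant kernel $P^{\psi,Y_j}$ for $t_J$ steps starting from the warm distribution $\nu_{j,\psi}$. I would exploit the product structure over $j$ together with the fact that each $\nu_{j,\psi}$ is $M$-warm for $\pi_J(\d\theta_j\mid\psi)$, so that after $t_J$ iterations of a kernel with conditional conductance at least $\kappa$ the conditional law of $\theta_j$ is close to $\pi_J(\d\theta_j\mid\psi)$.

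The first main step is to control the conditional-on-$\psi$ piece. For $\psi\in S^*$, assumption (C) gives $\kappa(P^{\psi,Y_j})\geq\kappa$, so by Lemma \ref{conductance_convergence} applied to the one-dimensional chain with the $M$-warm start $\nu_{j,\psi}$ and the choice $s=\epsilon'/(2M)$, the total variation distance between $\nu_{j,\psi}(P^{\psi,Y_j})^{t}$ and $\pi_J(\d\theta_j\mid\psi)$ decays geometrically at rate $(1-\kappa^2/2)^{t}$. The definition $t_J=\log(J)/(-\log(1-\kappa^2/2))$ is chosen precisely so that $(1-\kappa^2/2)^{t_J}=J^{-1}$; hence each coordinate's conditional marginal is within $O(1/J)$ of its posterior conditional (up to the warmness constant $M$). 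Because the $J$ coordinates are conditionally independent and factorize as in \eqref{eq:prod_Y_kernels}, I would use subadditivity of total variation over product measures to bound the total variation between $\otimes_{j}\nu_{j,\psi}(P^{\psi,Y_j})^{t_J}$ and $\otimes_j\pi_J(\d\theta_j\mid\psi)=\pi_J(\d\bm\theta\mid\psi)$ by a sum of $J$ terms each of order $1/J$, giving an $O(1)$ bound; some care is needed here, and one likely wants either to keep this constant bounded or to be slightly more clever, replacing $\log(J)$ by a suitable multiple so that the summed error stays below a target $\epsilon$.

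The second main step is the $\psi$-marginal. By the Bernstein–von Mises phenomenon guaranteed under $(B1)$–$(B3)$, the posterior marginal $\pi_J(\d\psi)$ concentrates on a $J^{-1/2}$-neighbourhood of $\psi^*$ and is asymptotically Gaussian with covariance $(J\Fisher(\psi^*))^{-1}$, while the MLE $\hat\psi_J$ is consistent and $\sqrt J$-close to $\psi^*$. Thus the event $\{\hat\psi_J\in S^*\}$ and the event that the uniform perturbation $\mathrm{Unif}(B_{cJ^{-1/2}}(\hat\psi_J))$ is supported inside $S^*$ both have $Q_{\psi^*}^{(J)}$-probability tending to one. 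On this event I would compare $\mathrm{Unif}(B_{cJ^{-1/2}}(\hat\psi_J))$ to $\pi_J(\d\psi)$: both live on the same $J^{-1/2}$-scale, so after the natural rescaling $\psi\mapsto\sqrt J(\psi-\psi^*)$ the uniform density on a ball and the limiting Gaussian density have a bounded Radon–Nikodym ratio on the ball, establishing that the $\psi$-marginal of $\mu_J$ is $M''$-warm for $\pi_J(\d\psi)$ with a constant $M''$ not depending on $J$ (with high probability).

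Combining the two steps, I would conclude that $\mu_J$ lies within total variation $O(\epsilon)$ of a distribution that is $M'$-warm for $\pi_J$, for some $M'$ independent of $J$, on an event of probability tending to one. A short coupling/triangle-inequality argument then transfers the warm-start mixing bound: writing $\mu_J=\tilde\mu_J+(\mu_J-\tilde\mu_J)$ with $\tilde\mu_J$ warm, one has $\lTV\mu_J P_J^t-\pi_J\rTV\leq\lTV\tilde\mu_J P_J^t-\pi_J\rTV+\lTV\mu_J-\tilde\mu_J\rTV$, and Corollary \ref{mixing_times_warm_hier} bounds the first term uniformly in $J$ while the second is controlled by the construction. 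The main obstacle I anticipate is the quantitative warmness of the $\psi$-marginal: controlling the Radon–Nikodym derivative of a uniform-on-a-ball law against the (only asymptotically Gaussian) posterior marginal requires a non-asymptotic handle on the Bernstein–von Mises approximation, uniform enough to yield a $J$-independent constant $M''$. I would isolate this as a lemma, invoking the regularity assumptions $(B1)$–$(B6)$ and the contraction rate, and otherwise the proof is an assembly of the conductance convergence lemma, the product total-variation bound, and the warm-start corollary.
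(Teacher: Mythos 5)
Your high-level strategy (show $\mu_J$ is TV-close to a warm start, then invoke the warm-start mixing bound via a triangle inequality) matches the skeleton of the paper's argument, and Lemma \ref{non_warm_start} in the appendix is exactly the "close to a warm start" device you gesture at. However, both of your main steps have concrete gaps that the paper resolves with different machinery. First, for the $\bm{\theta}$-part: subadditivity of total variation over the $J$ conditionally independent coordinates gives $J$ errors each of order $M/J$, i.e.\ an $O(1)$ total — you notice this, but your proposed fix (enlarging $t_J$) is not available, since $t_J$ is fixed in the definition \eqref{def_feasible}. The paper's resolution is to work with the $L_2$ (chi-square) discrepancy $L_2(\mu_J,\tilde\pi_J)$ instead of TV: this quantity tensorizes multiplicatively over the product, each factor is $1+M(1-\kappa^2/2)^{t_J}=1+M/J$, so the product stays bounded by $e^M$ (Lemma \ref{lemma:bounded_L2}); crucially, a \emph{bounded} (not small) $L_2$ discrepancy is enough, because Lemma \ref{non_warm_start} converts it into TV-closeness to a warm start. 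Your TV-based bound cannot be salvaged into anything useful, whereas the $L_2$ route is essential.

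Second, for the $\psi$-marginal: you correctly identify that controlling the Radon--Nikodym derivative of $\mathrm{Unif}(B_{cJ^{-1/2}}(\hat\psi_J))$ against the true posterior marginal $\pi_J(\d\psi)$ would require a density-level (not TV-level) Bernstein--von Mises statement, and you propose to isolate this as a lemma. The paper does \emph{not} prove such a lemma; it routes around the difficulty by introducing the auxiliary target $\tilde\pi_J$ in \eqref{eq:pi_tilde_J_def}, whose $\psi$-marginal is an explicit truncated Gaussian, proving warmness of the uniform perturbation against \emph{that} (citing Lemma C.45 of \cite{AZ23}), and then transferring the conductance bound from the chain $P_J$ targeting $\pi_J$ to the auxiliary chain $\tilde P_J$ targeting $\tilde\pi_J$ via the perturbation results of Section \ref{sec:conductance_distance} (Theorem \ref{theorem:distance_operators} combined with Lemma \ref{lemma:Delta_GS}), for which TV-level Bernstein--von Mises suffices. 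Without this detour through $\tilde\pi_J$ and the operator-perturbation step, your plan stalls exactly at the obstacle you flagged.
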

\begin{remark}
Notice that $\mu_J$ defined in \eqref{def_feasible} is not in general a warm start, i.e.\ we cannot guarantee $\mu_J\in \sN\left(\pi_J\left(\d\psi, \d \bm{\theta} \right), M \right)$ for a fixed $M$ that does not grow with $J$. However in the proof of Proposition \ref{prop:feasible} it is shown that $\mu_J$ is ``close enough'' to a warm distribution in total variation distance, which suffices for our purposes.
\end{remark}
Assuming $\hat{\psi}_J$ can be approximately computed with  $\sO(J \log(J))$ computational cost or less, the cost of sampling from $\mu_J$ in \eqref{def_feasible} is $\sO(J \log(J))$. As already mentioned in Section \ref{sec:complexity}, this implies that the overall computational cost of running the algorithm is again $\sO_P(J \log(J))$.

\section{Bayesian binary regression with unknown prior variance}\label{sec:log_regr_hyper}
Consider a Bayesian logistic regression model with unknown prior precision defined as
\begin{equation}\label{logistic_model}
\begin{aligned}
Y_i \mid \bm{\theta},\alpha &\sim \text{Bernoulli}\left(\frac{e^{\bm{\theta}^\top X_i}}{1+e^{\bm{\theta}^\top X_i}} \right),
&i=1,\dots,n
\\
\bm{\theta} \mid \alpha &\sim N\left(\bm{0}, \alpha^{-1}\Sigma\right),
\\
\alpha&\sim\text{Gamma}(a,b),
\end{aligned}
\end{equation}
where $\bm{\theta} = (\theta_1, \dots, \theta_d)$, 
 $X$ is a $n\times d$ matrix with $i$-th row $X_i$,
$\Sigma$ is a $d\times d$ positive definite covariance matrix and 
$\text{Gamma}(a,b)$ denotes the Gamma distribution with parameters $a,b>0$. 
    Let $\pi(\d\alpha,\d\bm{\theta}):=\L\left(\d\alpha ,\d\bm{\theta}\mid Y\right)\in\sP(\R^{1+d})$ be the joint posterior of $\alpha$ and $\bm{\theta}$ given the vector of observations $Y=(Y_1, \dots, Y_n)$, under model \eqref{conditional_logistic}.
The conditional density of $\bm{\theta}$ given $\alpha$ under $\pi$, i.e.\ 
\begin{equation}\label{conditional_logistic}
\pi\left(\bm{\theta} \mid \alpha\right) \propto \text{exp}\left\{Y^\top X\bm{\theta}-\sum_{i = 1}^n\log \left(1+e^{\bm{\theta}^\top X_i}\right)-
\frac{\alpha}{2}\bm{\theta}^\top\Sigma^{-1}\bm{\theta}
\right\}\,,
\end{equation}
is strongly log-concave.
We denote the condition number of $\pi\left(\bm{\theta} \mid \alpha\right)$ by $c(\alpha)$. Explicit bounds on $c(\alpha)$
are available \citep{D19}, which however diverge to $\infty$ as $\alpha$ goes to $0$.
As a result, convergence properties of MCMC algorithms targeting model \eqref{conditional_logistic} with fixed $\alpha$ are well understood \citep{D17,DM17,D19}. However, $\alpha$ is usually unknown in applications and it is typically incorporated as a random variable in the Bayesian model, as in  \eqref{conditional_logistic}. In such cases, the joint posterior $\pi$ is not log-concave on $\R^{d+1}$ and analyzing the convergence of MCMC algorithms performing joint updates of $(\alpha,\bm{\theta})$ can be much harder.

Given model \eqref{logistic_model}, it is natural to consider a coordinate-wise posterior sampling scheme with $\pi$-invariant kernel
\begin{equation}\label{two_blocks_MH_nested_logistic}
P = \frac{1}{2}G_{1}+\frac{1}{2}P_{2},
\end{equation}
where
\begin{align*}
&G_{1}(\left(\alpha, \bm{\theta}), \left(\d \alpha', \d \bm{\theta}' \right)\right)
= \delta_{\bm{\theta}}\left(\d\bm{\theta}'\right)\pi\left(\d \alpha' \mid \bm{\theta} \right),
\\
&P_{2}\left((\alpha, \bm{\theta}), \left(\d \alpha', \d \bm{\theta}' \right)\right) = \delta_{\alpha}\left(\d\alpha'\right)
P^{\alpha, Y}\left(\bm{\theta}, \d \bm{\theta}' \right),
\end{align*}
with $P^{\alpha, Y}$ being a $\pi(\d \bm{\theta} \mid \alpha)$-invariant kernel. Indeed, by strong log-concavity of $\pi\left(\bm{\theta} \mid \alpha\right)$, Propositions  \ref{prop:IMH_logconcave} and \ref{prop:RWM_logconcave} can be applied when choosing $P^{\alpha, Y}$ appropriately. Moreover, sampling from $\pi\left(\d \alpha \mid \bm{\theta} \right)$ is straightforward 
due to the Normal-Gamma conjugacy.
The next proposition states the resulting bound on the conductance, when $P^{\alpha, Y}$ is a RWM update.
\begin{proposition}\label{MwG_logistic_regression}
Let $S \subset \R^+$ and $P$ be as in \eqref{two_blocks_MH_nested_logistic} with $P^{\alpha, Y}$ as in \eqref{eq:rwm_kernel}. Then for every $s \in (0, 1/2)$ we have
\[
\Phi_s(P) \geq M\sqrt{\frac{\inf_{\alpha \in S}\,c(\alpha)^{-1}}{d}}\Phi_s(G)-\frac{\pi(S^c \times \R^d)}{s},
\]
where $G=\frac{1}{2}G_1+\frac{1}{2}G_2$ is the kernel of a two-component GS targeting $\pi$ and $M=M(\eta)>0$ is the same constant of Proposition \ref{prop:RWM_logconcave}, which does not depend on $\pi$.
\end{proposition}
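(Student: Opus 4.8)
The plan is to view $P$ in \eqref{two_blocks_MH_nested_logistic} as a coordinate-wise kernel of the form \eqref{eq:coordinate_wise} with two blocks, $\alpha$ and $\bm\theta$, and to apply the second inequality of Corollary \ref{cor:bound_conductance} with the set $K=S\times\R^d$. Since $K^c=S^c\times\R^d$, this directly produces
\[
\Phi_s(P)\ge \kappa(P,K)\,\Phi_s(G)-\frac{\pi(S^c\times\R^d)}{s}\Big(\tfrac12\sum_{i=1}^2\kappa_i(P_i,K)\Big),
\]
with $G=\tfrac12 G_1+\tfrac12 G_2$ the associated exact two-block Gibbs sampler. It then remains to lower bound $\kappa(P,K)=\min\{\kappa_1(G_1,K),\kappa_2(P_2,K)\}$ by the stated $M\sqrt{\inf_{\alpha\in S}c(\alpha)^{-1}/d}$ and to check that the error coefficient $\tfrac12\sum_i\kappa_i(P_i,K)$ is at most $1$.

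For the $\bm\theta$-block I would first verify that the conditional target \eqref{conditional_logistic} satisfies the one-block condition \eqref{eq:m_convex_L_smooth}: its negative log-density has Hessian $\sum_{i=1}^n \frac{e^{\bm\theta^\top X_i}}{(1+e^{\bm\theta^\top X_i})^2}X_iX_i^\top+\alpha\Sigma^{-1}$, which lies between $\alpha\Sigma^{-1}$ and $\tfrac14\sum_i X_iX_i^\top+\alpha\Sigma^{-1}$, yielding finite strong-convexity and smoothness constants $m(\alpha),L(\alpha)>0$ and condition number $c(\alpha)=L(\alpha)/m(\alpha)$. As $P_2=P^{\alpha,Y}$ is precisely the RWM kernel \eqref{eq:rwm_kernel} applied to the $d$-dimensional block $\bm\theta$, Proposition \ref{prop:RWM_logconcave} gives $\kappa(P^{\alpha,Y})\ge M\sqrt{1/(d\,c(\alpha))}$ for each fixed $\alpha$, and taking the infimum over $\alpha\in S$ yields $\kappa_2(P_2,K)\ge M\sqrt{\inf_{\alpha\in S}c(\alpha)^{-1}/d}$.

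To conclude I would combine the two blocks. The $\alpha$-update is exact, so a direct evaluation of \eqref{conditional_conductance} gives $\kappa_1(G_1,K)=\tfrac12$; moreover, by Theorem \ref{main} any invariant update has conditional conductance no larger than that of the exact update on the same conditional, so $\kappa_2(P_2,K)\le\tfrac12$ as well. Hence the minimum defining $\kappa(P,K)$ is attained at the RWM block, giving $\kappa(P,K)=\kappa_2(P_2,K)\ge M\sqrt{\inf_{\alpha\in S}c(\alpha)^{-1}/d}$, exactly the coefficient multiplying $\Phi_s(G)$ in the claim. Finally, bounding each $\kappa_i(P_i,K)\le 1$ gives $\tfrac12\sum_{i=1}^2\kappa_i(P_i,K)\le 1$, so the subtracted term is at most $\pi(S^c\times\R^d)/s$, which completes the argument. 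I expect the only genuinely substantive step to be the verification that \eqref{conditional_logistic} meets the hypotheses of Proposition \ref{prop:RWM_logconcave} with the condition number $c(\alpha)$ used here; matching the per-block dimension ($d_i=d$) and identifying $c(\alpha)$ correctly is the main place where care is needed, while everything else is bookkeeping with Corollary \ref{cor:bound_conductance}.
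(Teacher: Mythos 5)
Your proposal is correct and follows essentially the same route as the paper, whose proof is precisely the one-line combination of Corollary \ref{cor:bound_conductance} with $K=S\times\R^d$ and Proposition \ref{prop:RWM_logconcave}; you have merely filled in the bookkeeping (the Hessian bounds verifying \eqref{eq:m_convex_L_smooth} for \eqref{conditional_logistic}, the value $\kappa_1(G_1,K)=1/2$ for the exact $\alpha$-update, and the bound $\tfrac12\sum_i\kappa_i(P_i,K)\le 1$ on the error coefficient) that the paper leaves implicit.
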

Intuitively, Proposition \ref{MwG_logistic_regression} implies that the following two conditions are sufficient for $P$ to mix fast: 
the posterior distribution of $\alpha$ concentrates in a region where the condition number $c(\alpha)$ is not too high and the exact GS has a good approximate conductance (i.e.\ $\Phi_s(G)$ is not too close to $0$). 
While providing a lower bound to $\Phi_s(G)$ may seem equally challenging as doing it for $\Phi_s(P)$, an important simplification (in terms of dimensionality reduction) is available for the exact GS. Let $T = \bm{\theta}^\top\Sigma^{-1}\bm{\theta}$ and let $\tilde{\pi}(\d \alpha,\d T):=\L\left(\d\alpha , \d T\mid Y\right)\in\sP(\R^{2})$ be the two-dimensional marginal posterior distribution of $(\alpha,T)$, with associated GS kernel $\tilde{G}$.
 The next lemma provides a lower bound on  $\Phi_s(G)$ in terms of $\Phi_s(\tilde{G})$.
\begin{lemma}\label{dimensionality_reduction_logistic}
Let $G$ and $\tilde{G}$ be kernels of the Gibbs samplers on $\pi$ and $\tilde{\pi}$ as above. Then for every $s \in (0, 1/2)$ we have
\[
\Phi_s(G) \geq \frac{1 }{4}\frac{-\log \left(1-\frac{\Phi^2_{s/8}(\tilde{G})}{2} \right)}{\log(8)-\log(s)}.
\]
\end{lemma}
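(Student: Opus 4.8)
The plan is to exploit the low-dimensional structure induced by the statistic $T=\bm{\theta}^\top\Sigma^{-1}\bm{\theta}$, in the spirit of the dimensionality-reduction approach of \citep[Lemma 4.2]{AZ23}, and then to translate a mixing-time bound for $G$ into the stated conductance lower bound via Lemma \ref{conductance_convergence} together with a converse mixing-to-conductance estimate. First I would record two structural facts about the map $\phi(\alpha,\bm{\theta})=(\alpha,T)$. By the Normal--Gamma conjugacy, $\pi(\d\alpha\mid\bm{\theta})=\pi(\d\alpha\mid T)$ depends on $\bm{\theta}$ only through $T$; moreover, on each level set $\{T(\bm{\theta})=T\}$ the term $-\tfrac{\alpha}{2}\bm{\theta}^\top\Sigma^{-1}\bm{\theta}=-\tfrac{\alpha}{2}T$ in \eqref{conditional_logistic} is constant, so the fibre conditional $\pi(\d\bm{\theta}\mid\alpha,T)=\pi(\d\bm{\theta}\mid T)$ does not depend on $\alpha$. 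Together these show that $G$ is strongly lumpable with respect to $\phi$, with projected kernel exactly $\tilde{G}$: i.e.\ $\phi_*(\mu G)=(\phi_*\mu)\tilde{G}$ for every $\mu$, and both $G_1$ and $G_2$ map any ``fibre-stationary'' law $R\nu(\d\alpha,\d\bm{\theta}):=\phi_*\nu(\d\alpha,\d T)\,\pi(\d\bm{\theta}\mid T)$ to another fibre-stationary law.

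Second, I would control how fast a generic warm start collapses onto a fibre-stationary one. Writing $\mu_t=\mu G^t$ and $\nu_t=(R\mu)G^t$, lumpability gives $\phi_*\mu_t=\phi_*\nu_t=\tilde{\mu}\tilde{G}^t$ (with $\tilde{\mu}=\phi_*\mu$) for all $t$, hence $\nu_t=R\mu_t$ is fibre-stationary and $\mu_t,\nu_t$ share the same $\alpha$-marginal. Since $G_2$ resamples $\bm{\theta}\sim\pi(\cdot\mid\alpha)$ and therefore depends on its input only through the $\alpha$-marginal, $\mu_tG_2=\nu_tG_2$; as $G=\tfrac12G_1+\tfrac12G_2$ and $G_1$ is a TV-contraction, this yields $\|\mu_{t+1}-\nu_{t+1}\|_{TV}\le\tfrac12\|\mu_t-\nu_t\|_{TV}$ and thus $\|\mu_t-\nu_t\|_{TV}\le 2^{-t}$. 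Because $\nu_t$ and $\pi$ are both fibre-stationary with the same fibre conditional, $\|\nu_t-\pi\|_{TV}=\|\tilde{\mu}\tilde{G}^t-\tilde{\pi}\|_{TV}$, so
\[
\|\mu G^t-\pi\|_{TV}\le 2^{-t}+\|\tilde{\mu}\tilde{G}^t-\tilde{\pi}\|_{TV}.
\]
Applying Lemma \ref{conductance_convergence} to the reversible chain $\tilde{G}$ with the $(1/s)$-warm start $\tilde{\mu}$ and the choice $\tilde{s}=s/8$, and optimising over $t$, bounds $t_{mix}(G,1/4,1/s)$ by a constant multiple of $\big[\log(8/s)\big]\big/\big[-\log(1-\Phi^2_{s/8}(\tilde{G})/2)\big]$.

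Finally I would convert this mixing bound back into conductance. For any $A$ with $s<\pi(A)\le\tfrac12$, start $G$ from $\mu_0=\pi(\cdot\cap A)/\pi(A)\in\sN(\pi,1/s)$. Since $\pi$-invariance preserves warmness, $\mu_t\le\pi/\pi(A)$, whence the per-step outflow from $A$ is at most $G(\partial A)/\pi(A)$ and $\mu_t(A^c)\le t\,G(\partial A)/\pi(A)$. Therefore $\|\mu_t-\pi\|_{TV}\ge\pi(A^c)-\mu_t(A^c)\ge\tfrac12-t\,G(\partial A)/\pi(A)$; evaluating at $t=t_{mix}(G,1/4,1/s)$ forces $G(\partial A)/\pi(A)\ge (1/4)/t_{mix}(G,1/4,1/s)$, and taking the infimum over such $A$ gives $\Phi_s(G)\ge (1/4)/t_{mix}(G,1/4,1/s)$. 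Combining with the previous mixing bound produces the stated inequality.

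I expect the main obstacle to be the lumping step: proving that $G$ projects \emph{exactly} onto $\tilde{G}$ while fibres re-equilibrate, which hinges on the two facts that $T$ is sufficient for the $\alpha$-update and that $\pi(\d\bm{\theta}\mid\alpha,T)$ is free of $\alpha$. The remaining difficulty is purely constant bookkeeping --- balancing the $\tilde{M}\tilde{s}$, geometric, and exponentially small fibre ($2^{-t}$) contributions so that $\tilde{s}=s/8$, $\epsilon=1/4$ and the prefactor $1/4$ emerge as stated; the fibre term perturbs the constants only by an exponentially small amount.
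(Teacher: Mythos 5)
Your overall architecture coincides with the paper's: reduce the analysis of $G$ to the two-dimensional chain $\tilde{G}$ via the statistic $T=\bm{\theta}^\top\Sigma^{-1}\bm{\theta}$, then convert in both directions between conductance and mixing times exactly as in Lemma \ref{lemma:s_conductance_mix_time} (your restriction-to-$A$ argument reproduces Lemma \ref{lower_bound_flux}) and Lemma \ref{conductance_convergence}. The one genuinely different ingredient is the reduction step itself. The paper invokes Lemma \ref{sufficient_lemma}, which uses only the sufficiency property $\pi(\d\alpha\mid\bm{\theta})=\tilde{\pi}(\d\alpha\mid T)$ and the co-deinitializing machinery of Roberts and Rosenthal to obtain the \emph{exact} identity $t_{mix}(G,\epsilon,M)=t_{mix}(\tilde{G},\epsilon,M)$. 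You instead prove a quantitative lumping bound $\|\mu G^t-\pi\|_{TV}\le 2^{-t}+\|\tilde{\mu}\tilde{G}^t-\tilde{\pi}\|_{TV}$ by coupling $\mu G^t$ with its fibre-stationary projection; this is self-contained and more elementary, at the price of additionally using that $\pi(\d\bm{\theta}\mid\alpha,T)$ is free of $\alpha$ (true here, and your verification of it and of the contraction $\|\mu_{t+1}-\nu_{t+1}\|_{TV}\le\tfrac12\|\mu_t-\nu_t\|_{TV}$ is correct), whereas the paper's lemma needs only the sufficiency of $T$ for the $\alpha$-update.

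The one point I would push back on is your closing claim that the $2^{-t}$ fibre term ``perturbs the constants only by an exponentially small amount.'' It does not: at the relevant time scale $t$ need not be large (e.g.\ when $\Phi_{s/8}(\tilde{G})$ is of order one the geometric term alone would permit $t$ around $4$ or $5$), so $2^{-t}$ consumes a non-negligible share of the $\epsilon=1/4$ budget. Carrying your argument through, the cleanest allocation requires something like $\tilde{M}\tilde{s}\le 1/8$, $2^{-t}\le 1/16$ and $\tilde{M}(1-\Phi^2_{\tilde{s}}(\tilde{G})/2)^t\le 1/16$, which yields $t_{mix}(G,1/4,1/s)\le\lceil \log(16/s)/(-\log(1-\Phi^2_{s/8}(\tilde{G})/2))\rceil$ and hence the stated inequality only with degraded constants (roughly $\log(16/s)$ in place of $\log(8/s)$, plus a ceiling). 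So your route proves a bound of the same form and order, but not the lemma with its exact constants; to recover those you would need the exact mixing-time identity of Lemma \ref{sufficient_lemma} (or to restate the lemma with your constants). This is a quantitative, not conceptual, gap.
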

\begin{remark}\label{remark:dimensionality_reduction}
The main intuition underlying Lemma \ref{dimensionality_reduction_logistic} is that the update of $\alpha$ given $\bm{\theta}$ depends only on $T$, i.e.\ 
$\pi\left(\d \alpha \mid \bm{\theta} \right) = \pi\left(\d \alpha \mid T\right)$ for $T=\bm{\theta}^\top\Sigma^{-1}\bm{\theta}$, which implies that $G$ and $\tilde{G}$ have closely related convergence properties. See Lemma \ref{sufficient_lemma} and the proof of Lemma \ref{dimensionality_reduction_logistic} in the supplement for more details and \cite{R01,D08,Q19,AZ23} for results similar in spirit. 
\end{remark}
\begin{figure}
\centering
\includegraphics[width=.40\textwidth]{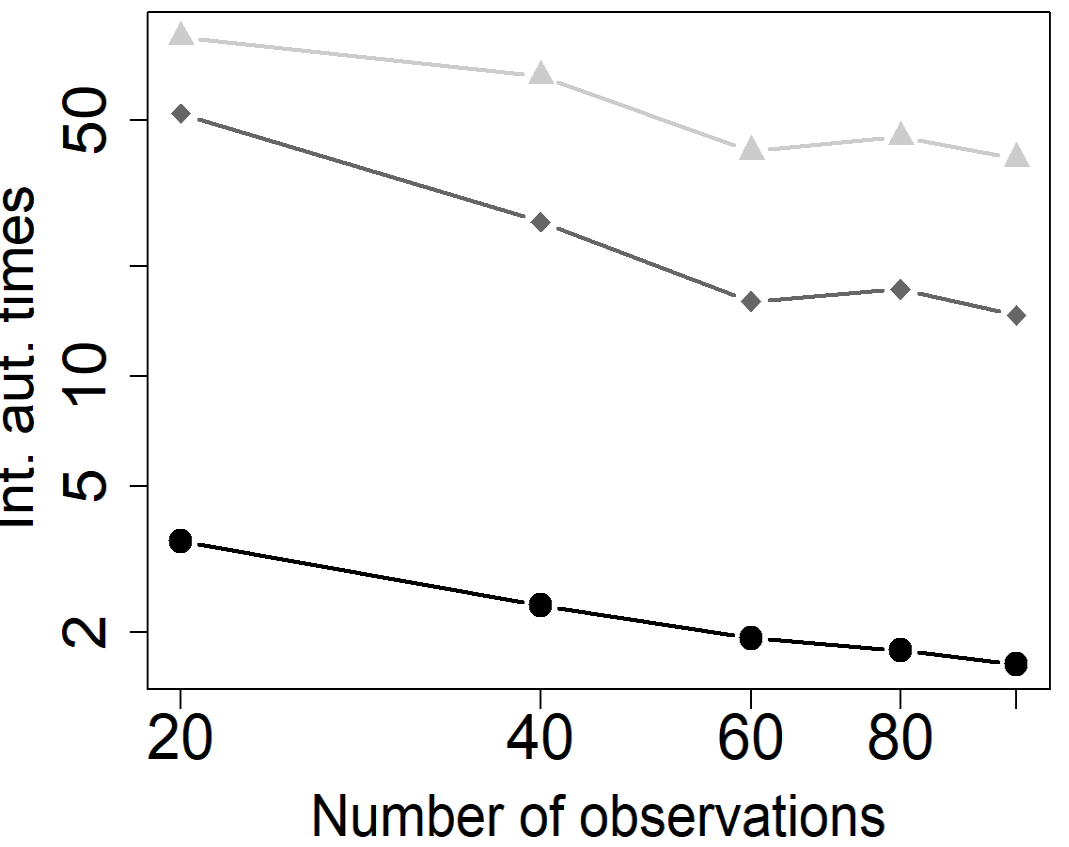} \quad
\includegraphics[width=.56\textwidth]{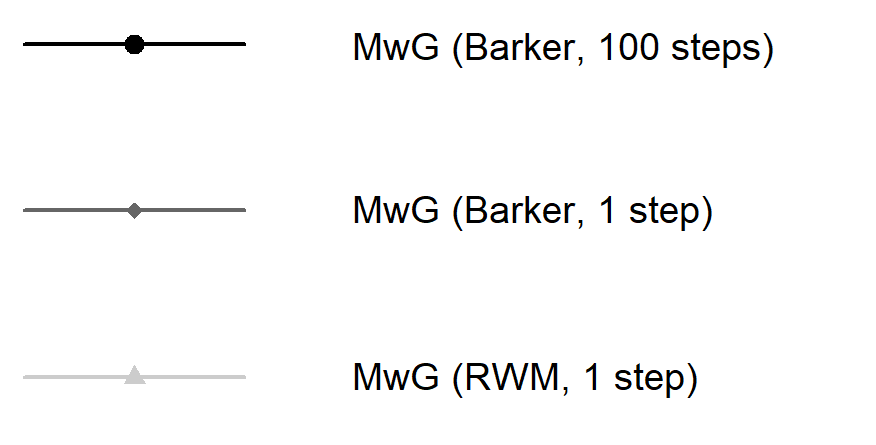}
 \caption{\small{
Median IATs (on the log scale) of three	 MCMC schemes targeting the posterior distribution of model \eqref{logistic_model} with $d = 5$, as a function of the number of observations $n$.
The median refers to repetitions over datasets randomly generated according to model \eqref{logistic_model} with $\alpha^* = 1$ and $\Sigma = \frac{1}{5}\mathbb{I}$.}}
 \label{fig:logisticRegression1}
\end{figure}
We compare numerically three coordinate-wise samplers with kernels as in \eqref{two_blocks_MH_nested_logistic}, with $P^{\alpha, Y}$ given by, respectively RWM, Barker and $100$ repeated steps of Barker (which we take as a proxy of the exact GS in this context due to the small value of $d$ and the high number of steps).
We consider model \eqref{logistic_model} with $\Sigma = d^{-1}\mathbb{I}_d$, $d = 5$, $a=b=1$ and $n$ ranging from $20$ to $100$. 
The data are generated from the same model with $\alpha$ set to $1$.
The results, reported in Figure \ref{fig:logisticRegression1}, illustrate that, for all the samplers under consideration, performances improve as $n$ grows and in particular the IATs decrease as $n$ increases. This is in accordance with the fact that the parametrization of model \eqref{logistic_model} is a so-called `centered' one (in the sense of \cite{gelfand1995efficient,PRS03, papaspiliopoulos2007general}), so that the performances of the associated coordinate-wise samplers improve as $n/d$ increases (i.e.\ as the data became more informative) while they suffer if $d/n$ is large. 
Notice that the reduction in IAT with respect to $n$ is mostly due to the behaviour of the exact GS rather than a change in the conditional conductance: indeed, also the black line in Figure \ref{fig:logisticRegression1} exhibits a decreasing behaviour.

\begin{figure}
\centering
\includegraphics[width=.42\textwidth]{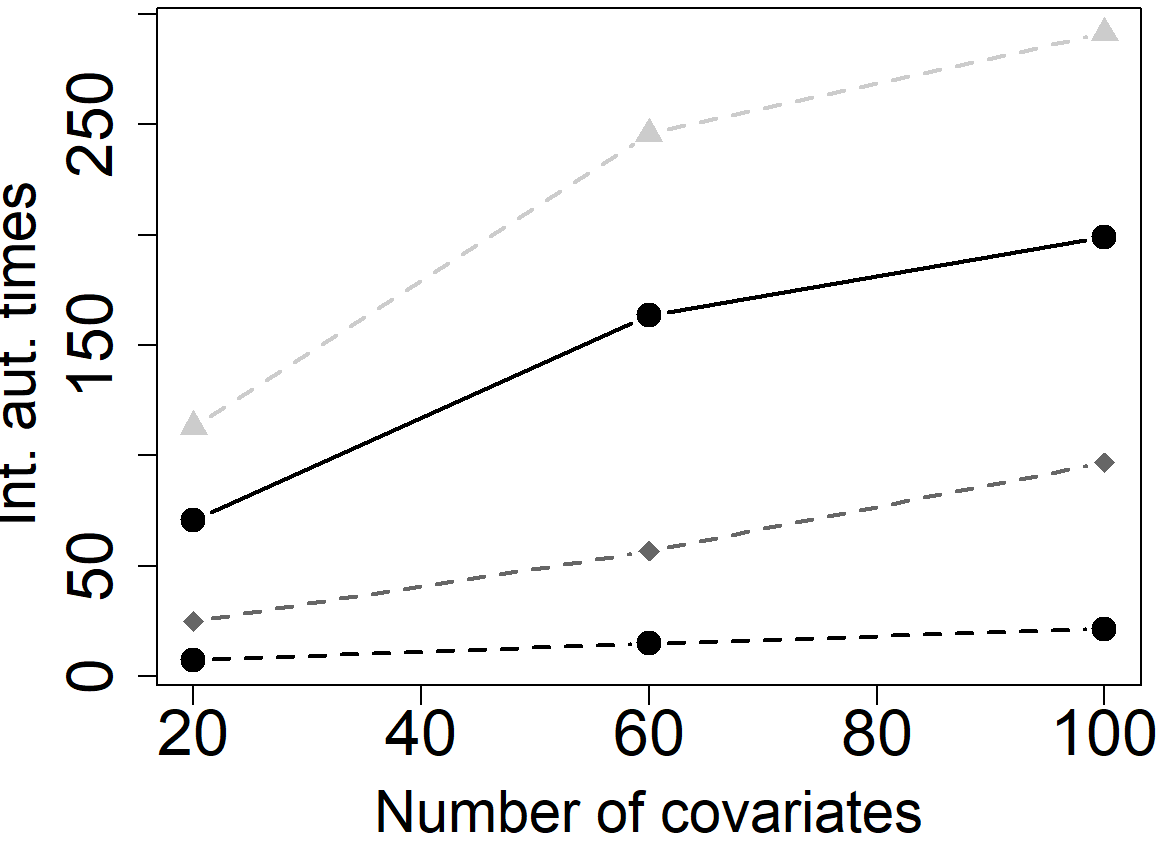} \quad
\includegraphics[width=.53\textwidth]{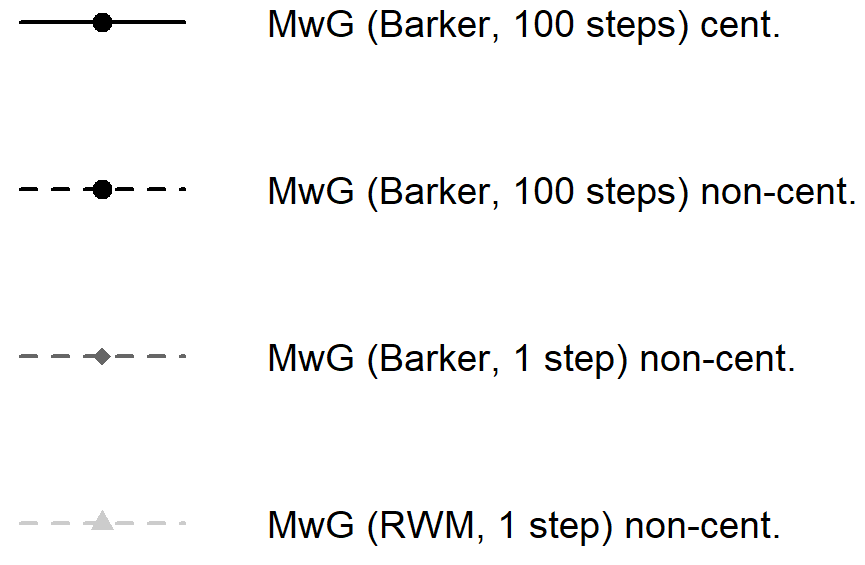}
 \caption{\small{
Median IATs of four MCMC schemes targeting the posterior distribution of model \eqref{logistic_model}, as a function of the number of covariates $d$ and $n = d/2$. 
The median refers to repetitions over datasets randomly generated according to model \eqref{logistic_model} with $\alpha^* = 1$ and $\Sigma = \frac{1}{d}\mathbb{I}$. Full lines: centered parametrization. Dotted lines: non-centered parametrization.}}
 \label{fig:logisticRegression2}
\end{figure}
Next, we consider also the non-centered version of model \eqref{logistic_model}, which can be formulated as
\begin{equation}\label{logistic_model_nonC}
\begin{aligned}
Y_i \mid \bm{\beta},\alpha \sim \text{Bernoulli}\left(\frac{e^{\bm{\beta}^\top X_i/\sqrt{\alpha}}}{1+e^{\bm{\beta}^\top X_i/\sqrt{\alpha}}} \right), \quad \bm{\beta} \mid \alpha \sim N\left(\bm{0}, \Sigma\right),
\quad 
\alpha&\sim\text{Gamma}(a,b).
\end{aligned}
\end{equation}
Similarly to \eqref{two_blocks_MH_nested_logistic}, coordinate-wise samplers can be used to sample from the posterior distribution $\pi(\d\alpha,\d\bm{\beta})=\mathcal{L}(\d\alpha,\d\bm{\beta}\mid Y)$, with the caveat that $\pi(\d\alpha\mid \bm{\beta})$ is not available in closed form and thus we perform a MH update for it instead of an exact Gibbs one. Given that $\alpha$ is a one-dimensional parameter, we found a RWM update of it to be sufficient for our purposes. Figure \ref{fig:logisticRegression2} reports the IAT estimates for the coordinate-wise samplers obtained with the centered (full lines) and non-centered (dotted lines) parametrizations when $d \in \{20,60,100\}$ and $n = d/2$. In accordance with the discussion above, the non-centered parametrization behaves significantly better than the centered one in this regime with $d/n$ relatively large. Note that even the two MwG schemes with $100$ Barker steps per iteration exhibit very different behaviours and in particular the non-centered one is significantly more robust to large values of $d$.

\section{Data augmentation schemes for discretely-observed diffusions}\label{sec:diffusions}

Many real-life phenomena of interest, for example in  biology, physics and finance, can be described as a diffusion over $[0, T]$ defined through a Stochastic Differential Equation (SDE) such as
\begin{equation}\label{eq:sde_general}
\d X_t = b(\theta, X_t)\d t + \sigma(X_t) \d B_t, \quad t \in [0, T],
\end{equation}
where $B_t$ is a Brownian motion on $\R$ and $\theta \in \R^p$ is a vector of parameters on which we want to make inference. See e.g. \cite{roberts2001inference, beskos2006retrospective, papaspiliopoulos2013data} and references therein for a review.

The functions $b$ and $\sigma$ are assumed to satisfy the basic regularity conditions (e.g. locally Lipschitz, $\sigma$ bounded below) which imply the existence of a weakly unique solution. 
For ease of exposition in this section, we shall restrict ourselves to the case where the function $\sigma$ is known, although our results could be readily extended
to the case of unknown $\sigma$ through standard techniques (see e.g.\ \cite{roberts2001inference}). Multivariate extensions are also possible.

Proceeding in the standard way, we apply the Lamperti transformation $x \to \eta(x)$ to $X_t$ as in \eqref{eq:sde_general}, with $\eta(x) = \int_z^x\sigma(u)^{-1}\, \d u$, from which we obtain a diffusion with unit coefficient. Therefore without loss of generality we can consider
\begin{equation}\label{eq:sde}
\d X_t = b(\theta, X_t)\d t + \d B_t, \quad t \in [0, T].
\end{equation}
Assuming hypothetically we observed the entire trajectory $X_{[0, T]} = \{X_t\;:\;t \in [0, T]\}$, the likelihood of $\theta$ is given by the well-known Girsanov formula, which reads
\begin{equation}\label{Girsanov_formula}
G\left(X_{[0, T]}, \theta\right) = \text{exp}\left\{\int_0^Tb(X_t, \theta) \, \d X_t-\frac{1}{2}\int_0^Tb^2(X_t,\theta) \, \d t\right\},
\end{equation}
from which the posterior distribution of $\theta$, given a prior distribution $\pi(\d \theta)$, has density given by
\begin{equation}\label{eq:post_theta}
\pi\left(\theta \mid X_{[0, T]}, \theta\right) \propto  G\left(X_{[0, T]}, \theta\right)\pi(\theta),
\end{equation}
assuming $\pi(\d \theta)$ admits a density $\pi(\theta)$ with respect to the Lebesgue measure.

In practice we observe \eqref{eq:sde} at fixed points $\T = (t_0, t_1, \dots, t_N)$ where we assume for simplicity $t_i-t_{i-1} = \Delta$ for every $i$. If $\Delta$ is small enough typically a discretization of the integrals in \eqref{Girsanov_formula} can often be employed
(as in the Euler-Maruyama method) circumventing the need for data-augmentation. More realistically, when data is sparser, the likelihood is given by
\[
f\left(X_{t_0}, \dots, X_{t_N} \right) = f_\theta(X_{t_0})\prod_{i = 1}^Nf_\theta(X_{t_{i-1}}, X_{t_i}),
\]
where $f_\theta(x, z)$ is the transition probability density associated to \eqref{eq:sde} of passing from $x$ to $z$ in an interval $\Delta$ of time. However $f_\theta$ is in general intractable and not available in closed form. 
To circumvent this problem, a data augmentation scheme has been proposed in the literature (\cite{roberts2001inference, beskos2006retrospective, papaspiliopoulos2013data}), where at each step the missing data $Y = \{X_t\}_{t \not\in \T}$ are imputed. This can be described as a coordinate-wise scheme on $\sX = \R^p \times \prod_{i = 1}^N\R^{(t_{i-1},t_i)}$ with invariant distribution $\pi(\d \theta, \d Y \mid X_{\T})$. The associated operator $P$ can be formally defined as
\begin{equation}\label{eq:operator_diffusion}
P = \frac{1}{2}G_1+\frac{1}{2}P_2,
\end{equation}
where
\[
 G_{1}\left((\theta, Y), \left(\d \theta', \d Y' \right)\right) = \delta_{Y}(\d Y')\pi\left(\d \theta \mid X_{[0, T]}\right)
 \]
is the exact step for $\theta$, which we assume to be feasible thanks to the explicit likelihood in \eqref{Girsanov_formula}, and
\begin{equation}\label{eq:update_diffusion}
P_{2}\left((\theta, Y), \left(\d \theta', \d Y' \right)\right) = \delta_{\theta}(\d \theta')\prod_{i = 1}^NP_i^{\theta}\left(Y_i, \d Y_i' \right),
\end{equation}
with $Y_i$ denoting the evolution of the diffusion over the interval $(t_{i-1}, t_i)$ and $P_i^{\theta}$ a Markov operator on $\R^{(t_{i}-t_{i-1})}$ with invariant distribution $\pi\left(\d Y_i \mid \theta, X_{t_{i-1}}, X_{t_i}\right)$. In words, step \eqref{eq:update_diffusion} requires updating separately the evolution of the diffusion in the $N$ sub-intervals defined by $\T$: this can be done independently thanks to the Markov property of \eqref{eq:sde}. A common choice for $P_i^\theta$, see e.g. \cite{roberts2001inference}, is an Independent Metropolis-Hastings as in Section \ref{sec:ind}, where the new path $Y_i'$ is sampled according to a Brownian bridge on $[t_{i}-t_{i-1}]$ constrained to be equal to $X_{t_{i-1}}$ and $X_{t_i}$ at the endpoints and it is accepted with probability
\begin{equation}\label{eq:acceptance_diffusions}
\alpha(Y, Y') = \min \left\{1, \frac{G(Y', \theta)}{G(Y, \theta)} \right\}.
\end{equation}
For the technical details we refer to \cite{roberts2001inference}.

We make the following assumptions:
\begin{enumerate}
\item[(C1)] The function $b(\theta, x)$ is differentiable in $x$ for every $\theta$ and continuous in $\theta$ for every $x$.
\item[(C2)] The function $b^2(\theta, x)+b'(\theta,x)$ is bounded below for every $\theta$.
\item[(C3)] The prior $\pi(\d \theta)$ is supported on a compact space $S \subset \R$.
\item[(C4)] There exists $c>0$,  not depending on $N$, such that $X_{t_i} \in [-c,c]$ for every $i = 1, \dots, N$.
\end{enumerate}
Assumptions $(C1)-(C2)$ are common in this literature (see e.g. \cite{beskos2006retrospective}) and are satisfied in many interesting cases \citep{polson1994bayes}.
On the other hand, assumptions $(C3)$ and especially $(C4)$ are arguably strong assumptions, which allow for technical simplifications but may not be satisfied in many practical scenarios.
An alternative would be to consider a compact set $K$ which retains most of the posterior mass, as in \eqref{eq:cond_bound}.  The next proposition shows that we can then provide a lower bound on the conductance of $P$ in \eqref{eq:operator_diffusion} in terms of the corresponding Gibbs sampler.
\begin{proposition}\label{prop:conductance_diffusion}
Let $P$ be as in \eqref{eq:operator_diffusion}, with $P_i^{\theta}$ operators of the Independent Metropolis-Hastings as in \eqref{eq:acceptance_diffusions}. Then under assumptions $(C1)-(C4)$ there exists a constant $\kappa = \kappa(c)>0$ such that
\begin{equation}\label{eq:cond_cond_diffusions}
\Phi(P) \geq \kappa \Phi(G),    
\end{equation}
for every $N$, $\T = (t_0, \dots, t_N)$, with $0 \leq t_0 < \dots < t_N \leq T$, and $X_\T = \left( X_{t_0}, \dots, X_{t_N}\right)$.
\end{proposition}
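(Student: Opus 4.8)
The plan is to apply the first inequality of Corollary~\ref{cor:bound_conductance} with $s=0$ and $K=\sX$, which reduces the claim to a lower bound on the conditional conductance $\kappa(P,\sX)=\min\{\kappa_1(G_1,\sX),\kappa_2(P_2,\sX)\}$ of the two-block kernel \eqref{eq:operator_diffusion} that is uniform in $N$, $\T$ and $X_\T$. The first block updates $\theta$ exactly, so its conditional conductance equals that of exact sampling from $\pi(\d\theta\mid X_{[0,T]})$; since the flux of an i.i.d.\ update through a set $B$ equals $\pi(B)\pi(B^c)$, this conductance is $\inf_{\pi(B)\le 1/2}\pi(B^c)\ge 1/2$. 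The whole difficulty therefore lies in the path-update block $P_2$.

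For $P_2$ I would use the factorization \eqref{eq:update_diffusion}, $P_2^\theta=\prod_{i=1}^N P_i^\theta$, and Lemma~\ref{lemma_product_conductance} to get $\kappa(P_2^\theta)\ge \min_i \kappa^2(P_i^\theta)/4$; it then suffices to bound $\kappa(P_i^\theta)$ from below uniformly over $i$, over $\theta\in S$ and over all admissible partitions. Each $P_i^\theta$ is an Independent Metropolis--Hastings kernel with target the diffusion bridge $\pi(\d Y_i\mid \theta,X_{t_{i-1}},X_{t_i})$ and proposal the Brownian bridge $Q_i$. By the argument behind Proposition~\ref{prop:conductance_ind}, it is enough to produce a constant $M$ bounding the importance weight $w_i(Y_i)=G(Y_i,\theta)/\int G(\cdot,\theta)\,\d Q_i$: the minorization $P_i^\theta(Y_i,\cdot)\ge M^{-1}\pi(\cdot\mid\theta,X_{t_{i-1}},X_{t_i})$ then yields $\kappa(P_i^\theta)\ge 1/(2M)$.

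To control $w_i$, I would rewrite the Girsanov density \eqref{Girsanov_formula} on $(t_{i-1},t_i)$ via It\^o's formula: with $A(\theta,x)=\int_0^x b(\theta,u)\,\d u$ one has $\int_{t_{i-1}}^{t_i} b\,\d X=A(\theta,X_{t_i})-A(\theta,X_{t_{i-1}})-\tfrac12\int_{t_{i-1}}^{t_i}b'(\theta,X_s)\,\d s$, so that $G(Y_i,\theta)\propto \exp\{-\tfrac12\int_{t_{i-1}}^{t_i}[b^2+b'](\theta,X_s)\,\d s\}$ up to an endpoint factor depending only on $(\theta,X_{t_{i-1}},X_{t_i})$, which cancels in the normalized weight $w_i$. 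Assumption $(C2)$, made uniform over the compact set $S$ of $(C3)$ through the continuity in $(C1)$, gives $b^2+b'\ge -c_0$, whence the numerator of $w_i$ is at most $\exp\{c_0(t_i-t_{i-1})/2\}\le \exp\{c_0 T/2\}$. For the denominator I would restrict the Brownian-bridge expectation to the tube $\{\sup_s|X_s-\lambda_s|\le r\}$, with $\lambda_s$ the linear interpolation of the endpoints; on this event $X_s-\lambda_s$ is a pinned bridge whose law depends only on $t_i-t_{i-1}\le T$, so the tube has probability at least $\rho=\rho(r)>0$ uniformly over the interval length and the endpoints, and by $(C4)$ the path stays in $[-c-r,c+r]$, where $b^2+b'\le C=C(c,r,S)<\infty$; hence the integrand is at least $\exp\{-CT/2\}$. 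Combining, $w_i\le M:=\rho^{-1}\exp\{(c_0+C)T/2\}$, a constant depending on $c$ but not on $N$ or the partition.

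Assembling the pieces gives $\kappa(P_i^\theta)\ge 1/(2M)$, hence $\kappa(P_2^\theta)\ge 1/(16M^2)$ and $\kappa(P,\sX)\ge \min\{1/2,1/(16M^2)\}=:\kappa(c)>0$, and Corollary~\ref{cor:bound_conductance} then delivers $\Phi(P)\ge \kappa(c)\Phi(G)$ for every $N$, $\T$ and $X_\T$. The main obstacle is the uniform-in-$N$ lower bound on the normalizing constants $\int G(\cdot,\theta)\,\d Q_i$: one must ensure that neither the tube probability nor the integrand bound degenerates as the partition is refined or enlarged, which is precisely where the boundedness of the endpoints $(C4)$ and the compactness of $S$ are used. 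A secondary technical point is upgrading the pointwise-in-$\theta$ lower bound of $(C2)$ to one uniform over $\theta\in S$ and all $x\in\R$, since the bridge paths are not confined to a compact region.
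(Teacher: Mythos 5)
Your proposal is correct and follows essentially the same route as the paper: reduce to the conditional conductance via Corollary \ref{cor:bound_conductance}, use the product structure of $P_2$ with Lemma \ref{lemma_product_conductance}, and bound each $\kappa(P_i^\theta)$ through the IMH importance-weight bound of Proposition \ref{prop:conductance_ind}, controlling the Girsanov weight uniformly via $(C1)$--$(C4)$. The only difference is that you make explicit, via the Brownian-bridge tube argument, the lower bound on the normalizing constant $\int G(\cdot,\theta)\,\d Q_i$ that the paper states more tersely as the bound $g(X_{t_{i-1}},X_{t_i})/f_\theta(X_{t_{i-1}},X_{t_i})\leq R$ following from $(C3)$ and $(C4)$.
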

Note that, although not explicitly indicated in our notation, the kernels $P$ and $G$ depend on $N$ (as well as on $\T$ and $X_\T$).
Indeed, the key and non-trivial part in the statement of Theorem \ref{prop:conductance_diffusion} is that the constant $\kappa$ in \eqref{eq:cond_cond_diffusions} does not depend on $N$, $\T$ and $X_\T$.
This, for example, implies that, under assumptions $(C1)-(C4)$, the decrease of conductance passing from $G$ to $P$ remains bounded as $N\to\infty$.
Extensions of Proposition \ref{prop:conductance_diffusion} to the case of approximate conductance, in order to relax assumption $(C3)$, are direct following the approach of the previous sections.

Proposition \ref{prop:conductance_diffusion} illustrates the applicability of the techniques developed in this paper in the context of parameters inference for diffusions: in this case the missing data require the imputation of an infinite-dimensional object, i.e. a diffusion trajectory. Under the above assumptions, Proposition \ref{prop:conductance_diffusion} reduces the problem of determining the performances of $P$ as in \eqref{eq:operator_diffusion} to the task of studying the associated Gibbs sampler. Similarly to the applications discussed in the Sections \ref{sec:hierarchical} and \ref{sec:log_regr_hyper}, the latter can entail major simplifications through dimensionality reduction: for example, in the popular setting where $b(x, \theta)$ is a polynomial of degree $m$ in $\theta$, 
techniques analogous to Lemma \ref{dimensionality_reduction_logistic} and Remark \ref{remark:dimensionality_reduction} allow to reduce the problem to analysing a Gibbs sampler $\tilde{G}$ targeting a $\ell$-dimensional target with $\ell=\mathcal{O}(2m)$.
We leave more detailed examples with specific classes of diffusions to future work.

\section{Discussion}
The results of Section \ref{sec:main} show that performances of a coordinate-wise scheme can be controlled monitoring two factors: (a) whether conditional updates are close enough to exact sampling (i.e. the conditional conductance is far from zero) and (b) whether the GS itself is a good scheme for the specific sampling problem.
For example, the first inequality in \eqref{eq:cond_bound} bounds the slow-down of a general coordinate-wise sampler relative to GS (measured in terms of reduction of conductance) with a multiplicative term of the form $\min_{i \in \{1, \dots, d\}} \, \kappa_i(P_i, \sX)$, defined in \eqref{minimum_conditional_conductance}.  
An interesting aspect of such a `slow-down' factor is that it 
depends on $d$ only through the minimum operation. 
Thus, if the GS updates are replaced by moderately good MH conditional updates (e.g.\ ones with conductance uniformly bounded away from $0$) the resulting MwG sampler will incur in a slow-down that is constant with respect to $d$. 
This observation agrees with the observation that, provided the full-conditional distributions are well-behaved, MwG tends to perform similarly to the corresponding GS (see e.g.\ Figure \ref{fig:binom_simulations}), even when $d$ is large.

Note that the results of Section \ref{sec:main}, in their current form, strongly relies on the random scan architecture, where at each iteration a randomly chosen  block is updated. Extensions to other popular orderings, e.g.\ deterministic-scan ones, are less trivial than one might expect. 
In particular, naive applications of the techniques employed in this paper would lead to conductance bounds depending on multiplicative factors of the form $\prod_{i=1}^d \kappa_i(P_i, K)$, which scale exponentially badly with $d$.
Developing tight bounds for deterministic-scan coordinate-wise samplers 
is an interesting direction for future work.



Our results 
provide simple and intuitive guidance for practitioners using MwG-type schemes, which is to think separately of the two potential sources of slow mixing: (a) bad conditional updates and (b) strong dependence among parameters (which might slow down GS). Also, they justify applying the various techniques developed in the literature to improve GS convergence \citep{gelfand1995efficient,PRS03, papaspiliopoulos2007general,khare2011spectral,yu2011center,kastner2014ancillarity} to the broader class of MwG and general coordinate-wise samplers.

Beyond being interesting per se (in terms of improving our understanding of commonly used coordinate-wise samplers), the general results of Section \ref{sec:main} are motivated by and applied to the analysis of MwG schemes targeting high-dimensional non-conjugate Bayesian hierarchical models, where we manage to derive dimension-free bounds on total variation mixing times as $J\to\infty$. As illustrated in Section \ref{sec:motivating}, these are computationally challenging (as well as widely used) models, where competing sampling algorithms (including popular gradient-based MCMC schemes, such as NUTS) exhibit, either empirically, theoretically or both, a total computational cost scaling super-linearly with the number of groups $J$. On the contrary, our results imply a total computational cost of coordinate-wise samplers that scales linearly with $J$, thus providing state-of-the-art performances for high-dimensional non-conjugate Bayesian hierarchical models. 


Our use of the approximate conductance and of perturbation results (see e.g.\ Section \ref{sec:conductance_distance}) is motivated by statistical applications, where we seek to combine MCMC convergence analysis with Bayesian posterior asymptotic results (such as the celebrated Bernstein-von Mises theorem). 
Note that, despite being a powerful and potentially useful approach in various contexts, rigorous combinations of MCMC theory and Bayesian asymptotics are relatively scarce in the literature and, beyond few exceptions \cite{roberts2001approximate,belloni2009computational,kamatani2014local}, mostly recent \cite{nickl2022polynomial,negrea2022statistical,tang2022computational,AZ23}. 
We hope that the techniques developed in this paper might serve as starting point to develop a better quantitative understanding of popular coordinate-wise samplers for other classes of high-dimensional structured Bayesian models (e.g.\ times series, factor models, Gaussian processes, etc), thus reducing the significant gap between theory and practice still present in this area.





\medskip
\medskip

\textbf{Funding. }
GZ acknowledges support from the European Research Council (ERC), through StG ``PrSc-HDBayLe'' grant ID 101076564.
GOR was supported by EPSRC grants Bayes for Health (R018561) CoSInES (R034710) PINCODE (EP/X028119/1), EP/V009478/1 and by the UKRI grant, OCEAN, EP/Y014650/1.

\bibliographystyle{chicago}
\bibliography{Bib_MwG}

@article{ascolani2024entropy,
  title={{Entropy contraction of the Gibbs sampler under log-concavity}},
  author={Ascolani, Filippo and Lavenant, Hugo and Zanella, Giacomo},
  journal={arXiv preprint arXiv:2410.00858},
  year={2024}
}

@article{liu1994collapsed,
  title={{The collapsed Gibbs sampler in Bayesian computations with applications to a gene regulation problem}},
  author={Liu, Jun S},
  journal={Journal of the American Statistical Association},
  volume={89},
  number={427},
  pages={958--966},
  year={1994},
  publisher={Taylor \& Francis}
}

@article{deligiannidis2018ergodic,
  title={Which ergodic averages have finite asymptotic variance?},
  author={Deligiannidis, George and Lee, Anthony},
  journal={The Annals of Applied Probability},
  volume={28},
  number={4},
  pages={2309--2334},
  year={2018},
  publisher={JSTOR}
}

@article{luu2024gibbs,
  title={{Is Gibbs sampling faster than Hamiltonian Monte Carlo on GLMs?}},
  author={Luu, Son and Xu, Zuheng and Surjanovic, Nikola and Biron-Lattes, Miguel and Campbell, Trevor and Bouchard-C{\^o}t{\'e}, Alexandre},
  journal={Proceedings of the 28th International Conference on Artificial Intelligence and Statistics (AISTATS)},
  year={2025},
Volume = {258}
}

@misc{robert1999monte,
  title={Monte Carlo Statistical Methods},
  author={Robert, CP},
  year={1999},
  publisher={Springer-Verlag New York}
}

@article{roberts1996geometric,
  title={Geometric convergence and central limit theorems for multidimensional Hastings and Metropolis algorithms},
  author={Roberts, Gareth O and Tweedie, Richard L},
  journal={Biometrika},
  volume={83},
  number={1},
  pages={95--110},
  year={1996},
  publisher={Oxford University Press}
}

@article{baxendale2005renewal,
  title={{Renewal theory and computable convergence rates for geometrically ergodic Markov chains}},
  journal={The Annals of Applied Probability},
  volume={15},
  number={1},
  pages={700-738},
  author={Baxendale, Peter H},
  year={2005}
}

@article{he2016scan,
  title={{Scan order in Gibbs sampling: Models in which it matters and bounds on how much}},
  author={He, Bryan D and De Sa, Christopher M and Mitliagkas, Ioannis and R{\'e}, Christopher},
  journal={Advances in neural information processing systems},
  volume={29},
  year={2016}
}

@article{gaitonde2024comparison,
  title={{Comparison Theorems for the Mixing Times of Systematic and Random Scan Dynamics}},
  author={Gaitonde, Jason and Mossel, Elchanan},
  journal={arXiv preprint arXiv:2410.11136},
  year={2024}
}

@article{roberts2015surprising,
  title={{Surprising convergence properties of some simple Gibbs samplers under various scans}},
  author={Roberts, Gareth O and Rosenthal, Jeffrey S},
  journal={International Journal of Statistics and Probability},
  volume={5},
  number={1},
  pages={51--60},
  year={2015}
}

@article{diaconis2010stochastic,
  title={{Stochastic alternating projections}},
  author={Diaconis, Persi and Khare, Kshitij and Saloff-Coste, Laurent},
  journal={Illinois Journal of Mathematics},
  volume={54},
  number={3},
  pages={963--979},
  year={2010},
  publisher={Duke University Press}
}

@article{khare2011spectral,
  title={{A spectral analytic comparison of trace-class data augmentation algorithms and their sandwich variants}},
  author={Khare, Kshitij and Hobert, James P},
  journal={The Annals of Statistics},
volume = {39},
number = {5},
pages = {2585-2606},
  year={2011}
}

@article{kamatani2014local,
  title={{Local consistency of Markov chain Monte Carlo methods}},
  author={Kamatani, Kengo},
  journal={Annals of the Institute of Statistical Mathematics},
  volume={66},
  number={1},
  pages={63--74},
  year={2014},
  publisher={Springer}
}

@article{tang2022computational,
  title={{On the computational complexity of Metropolis-adjusted Langevin algorithms for Bayesian posterior sampling}},
  author={Tang, Rong and Yang, Yun},
  journal={Journal of Machine Learning Research},
  volume={25},
  number={157},
  pages={1--79},
  year={2024}
}

@article{negrea2022statistical,
  title={{Statistical inference with stochastic gradient algorithms}},
  author={Negrea, Jeffrey and Yang, Jun and Feng, Haoyue and Roy, Daniel M and Huggins, Jonathan H},
  journal={arXiv preprint arXiv},
  volume={2207},
  year={2022}
}

@article{nickl2022polynomial,
  title={{On polynomial-time computation of high-dimensional posterior measures by Langevin-type algorithms}},
  author={Nickl, Richard and Wang, Sven},
  journal={Journal of the European Mathematical Society},
  volume={26},
  number={3},
  pages={1031--1112},
  year={2022}
}

@article{roberts2001approximate,
  title={{Approximate predetermined convergence properties of the Gibbs sampler}},
  author={Roberts, Gareth O and Sahu, Sujit K},
  journal={Journal of Computational and Graphical Statistics},
  volume={10},
  number={2},
  pages={216--229},
  year={2001},
  publisher={Taylor \& Francis}
}

@article{belloni2009computational,
  title={{On the computational complexity of MCMC-based estimators in large samples}},
  author={Belloni, Alexandre and Chernozhukov, Victor},
  year={2009}
}

@article{kastner2014ancillarity,
  title={{Ancillarity-sufficiency interweaving strategy (ASIS) for boosting MCMC estimation of stochastic volatility models}},
  author={Kastner, Gregor and Fr{\"u}hwirth-Schnatter, Sylvia},
  journal={Computational Statistics \& Data Analysis},
  volume={76},
  pages={408--423},
  year={2014},
  publisher={Elsevier}
}

@article{yu2011center,
  title={{To center or not to center: That is not the question: an Ancillarity--Sufficiency Interweaving Strategy (ASIS) for boosting MCMC efficiency}},
  author={Yu, Yaming and Meng, Xiao Li},
  journal={Journal of Computational and Graphical Statistics},
  volume={20},
  number={3},
  pages={531--570},
  year={2011},
  publisher={Taylor \& Francis}
}

@article{gelfand1995efficient,
  title={{Efficient parametrisations for normal linear mixed models}},
  author={Gelfand, Alan E and Sahu, Sujit K and Carlin, Bradley P},
  journal={Biometrika},
  volume={82},
  number={3},
  pages={479--488},
  year={1995},
  publisher={Oxford University Press}
}

@article{neath2009variable,
  title={{Variable-at-a-time implementations of Metropolis-Hastings}},
  author={Neath, Ronald C and Jones, Galin L},
  journal={arXiv preprint arXiv:0903.0664},
  year={2009}
}

@article{johnson2013component,
  title={{Component-wise Markov chain Monte Carlo: Uniform and geometric ergodicity under mixing and composition}},
  author={Johnson, Alicia A and Jones, Galin L and Neath, Ronald C},
  year={2013}
}

@article{jarner2000geometric,
  title={{Geometric ergodicity of Metropolis algorithms}},
  author={Jarner, S{\o}ren Fiig and Hansen, Ernst},
  journal={Stochastic processes and their applications},
  volume={85},
  number={2},
  pages={341--361},
  year={2000},
  publisher={Elsevier}
}

@article{biswas2019estimating,
  title={{Estimating convergence of Markov chains with L-lag couplings}},
  author={Biswas, Niloy and Jacob, Pierre E and Vanetti, Paul},
  journal={Advances in Neural Information Processing Systems},
  volume={32},
  year={2019}
}

@article{hoffman2014no,
  title={{The No-U-Turn sampler: adaptively setting path lengths in Hamiltonian Monte Carlo}},
  author={Hoffman, Matthew D and Gelman, Andrew},
  journal={J. Mach. Learn. Res.},
  volume={15},
  number={1},
  pages={1593--1623},
  year={2014}
}

@Misc{RStan,
    title = {{RStan}: the {R} interface to {Stan}},
    author = {{Stan Development Team}},
    note = {R package version 2.32.5},
    year = {2024},
    url = {https://mc-stan.org/},
  }

@article{livingstone2022barker,
  title={{The Barker proposal: combining robustness and efficiency in gradient-based MCMC}},
  author={Livingstone, Samuel and Zanella, Giacomo},
  journal={Journal of the Royal Statistical Society Series B: Statistical Methodology},
  volume={84},
  number={2},
  pages={496--523},
  year={2022},
  publisher={Oxford University Press}
}

@article{jones2014convergence,
  title={{Convergence of conditional Metropolis-Hastings samplers}},
  author={Jones, Galin L and Roberts, Gareth O and Rosenthal, Jeffrey S},
  journal={Advances in Applied Probability},
  volume={46},
  number={2},
  pages={422--445},
  year={2014},
  publisher={Cambridge University Press}
}

@article{qin2022telescope,
  title={{Spectral telescope: Convergence rate bounds for random-scan Gibbs samplers based on a hierarchical structure}},
  author={Qin, Qian and Wang, Guanyang},
  journal={The Annals of Applied Probability},
  volume={34},
  number={1B},
  pages={1319--1349},
  year={2024},
  publisher={Institute of Mathematical Statistics}
}

@article{qin2022convergence,
  title={{Convergence rates of two-component MCMC samplers}},
  author={Qin, Qian and Jones, Galin L},
  journal={Bernoulli},
  volume={28},
  number={2},
  pages={859--885},
  year={2022},
  publisher={Bernoulli Society for Mathematical Statistics and Probability}
}

@article{roberts1997geometric,
  title={{Geometric ergodicity and hybrid Markov chains}},
  author={Roberts, Gareth and Rosenthal, Jeffrey},
  year={1997}
}

@article{B97,
  title={{Isoperimetric constants for product probability measures}},
  author={Bobkov, Sergey G and Houdr{\'e}, Christian},
  journal={The Annals of Probability},
  pages={184--205},
  year={1997},
  publisher={JSTOR}
}

@book{LP17,
  title={{Markov chains and mixing times}},
  author={Levin, David A and Peres, Yuval},
  volume={107},
  year={2017},
  publisher={American Mathematical Soc.}
}

@article{CJ23,
  title={{A calculus for Markov chain Monte Carlo: studying approximations in algorithms}},
  author={Caprio, R. and Johansen, A.M.},
  journal={arXiv preprint arXiv:2310.03853},
  year={2023}
}

@article{AL22,
  title={{Explicit convergence bounds for Metropolis Markov chains: Isoperimetry, spectral gaps and profiles}},
  author={Andrieu, Christophe and Lee, Anthony and Power, Sam and Wang, Andi Q},
  journal={The Annals of Applied Probability},
  volume={34},
  number={4},
  pages={4022--4071},
  year={2024},
  publisher={Institute of Mathematical Statistics}
}

@article{papaspiliopoulos2007general,
  title={{A General Framework for the Parametrization of Hierarchical Models}},
  author={Papaspiliopoulos, Omiros and Roberts, Gareth O. R. and Sk{\"o}ld, Martin},
  journal={Statistical Science},
  pages={59--73},
  year={2007},
  publisher={JSTOR}
}

@article{AZ23,
  title={{Dimension-free mixing times of {G}ibbs samplers for {B}ayesian hierarchical models}},
  author={Ascolani, Filippo and Zanella, Giacomo},
  journal={Ann. Statist.},
  volume = {In press},
  year={2024}
}

@article{amit1996convergence,
  title={{Convergence properties of the {G}ibbs sampler for perturbations of {G}aussians}},
  author={Amit, Yali},
  journal={The Annals of Statistics},
  volume={24},
  number={1},
  pages={122--140},
  year={1996},
  publisher={Institute of Mathematical Statistics}
}

@article{B13,
  title={{Optimal tuning of the hybrid Monte Carlo algorithm}},
  author={Beskos, A. and Pillai, N. and Roberts, G. and Sanz-Serna, J. and Stuart, A.},
  journal={Bernoulli},
  volume={19},
  pages={1501--1534},
  year={2013}
}

@article{besag1993spatial,
  title={{Spatial statistics and Bayesian computation}},
  author={Besag, Julian and Green, Peter J},
  journal={Journal of the Royal Statistical Society Series B: Statistical Methodology},
  volume={55},
  number={1},
  pages={25--37},
  year={1993},
  publisher={Oxford University Press}
}

@article{gelfand1990sampling,
  title={{Sampling-based approaches to calculating marginal densities}},
  author={Gelfand, Alan E and Smith, Adrian FM},
  journal={Journal of the American statistical association},
  volume={85},
  number={410},
  pages={398--409},
  year={1990},
  publisher={Taylor \& Francis}
}

@book{B11,
  title={{Handbook of Markov Chain Monte Carlo}},
  author={Brooks, S. and Gelman, A. and Jones, G. L. and Meng, X.},
  year={2011},
  publisher={Chapman and Hall}
}

@article{C92,
  title={{Explaining the {G}ibbs Sampler}},
  author={Casella, G. and George, E. I.},
  journal={Am. Stat.},
  volume={46},
  pages={167--174},
  year={1992}
}

@article{CLM23,
  title={{Solidarity of Gibbs Samplers: the spectral gap}},
  author={Chlebicka, Iwona and Latuszy{\'n}ski, Krzysztof and Miasojedow, B.},
  journal={The Annals of Applied Probability},
  volume={35},
  number={1},
  pages={142--157},
  year={2025},
  publisher={Institute of Mathematical Statistics}
}

@article{D17,
  title={{Theoretical Guarantees for Approximate Sampling from Smooth and Log-Concave Densities}},
  author={Dalalyan, A. S.},
  journal={J. R. Stat. Soc. Ser. B.},
  volume={79},
  pages={651--676},
  year={2017}
}

@article{D08,
  title={{Gibbs Sampling, Exponential Families and Orthogonal Polynomials}},
  author={Diaconis, P. and Khare, K. and Saloff-Coste, L.},
  journal={Stat. Sci.},
  volume={23},
  pages={151--178},
  year={2008}
}

@article{DM17,
  title={{Nonasymptotic convergence analysis for the unadjusted Langevin algorithm}},
  author={Durmus, A. and Moulines, E.},
  journal={Ann. Appl. Probab.},
  volume={27},
  pages={1551--1587},
  year={2017}
}

@article{D19,
  title={{Log--concave sampling: Metropolis--Hastings algorithms are fast!}},
  author={Dwivedi, R. and Chen, Y. and Wainwright, M. J. and Yu, B.},
  journal={J. Mach. Learn. Res.},
  volume={20},
  pages={1--42},
  year={2019}
}

@article{F21,
  title={{mcmcse: Monte Carlo Standard Errors for MCMC}},
  author={Flegal, J. M. and Hughes, J. and Vats, D. and Gupta, K. and Maji, U.},
  journal={R package},
  year={2021}
}

@book{G13,
  title={{Bayesian Data Analysis}},
  author={Gelman, A. and Carlin, J. B. and Stern, H. S. and Dunson, D. B. and Vehtari, A. and Rubin, D. B.},
  year={2013},
  publisher={CRC press}
}

@book{GH07,
  title={{Data Analysis Using Regression and Multilevel/Hierarchical Models}},
  author={Gelman, A. and Hill, J. L.},
  year={2007},
  publisher={Cambridge University Press}
}

@article{GW92,
  title={{Adaptive Rejection Sampling for Gibbs Sampling}},
  author={Gilks, W. R. and Wild, P.},
  journal={J. R. Stat. Soc. Ser. C},
  volume={41},
  pages={337--348},
  year={1992}
}

@article{GF15,
  title={{A Practical Sequential Stopping Rule for High-Dimensional Markov Chain Monte Carlo}},
  author={Gong, L. and Flegal, J. M.},
  journal={J. Comput. Graph. Stat.},
  volume={25},
  pages={684--700},
  year={2015}
}

@article{G15,
  title={{Bayesian computation: a summary of the current state, and samples backwards and forwards}},
  author={Green, P. J. and Latuszynski, K. and Pereyra, M. and Robert, C. P.},
  journal={Stat. Comput.},
  volume={25},
  pages={835--862},
  year={2015}
}

@article{JH21,
  title={{Dimension free convergence rates for Gibbs samplers for Bayesian linear mixed models}},
  author={Jin, Z. and Hobert, J. P.},
  journal={Stoch. Process. Their Appl.},
  volume={148},
  pages={25--67},
  year={2022}
}

@article{K09,
  title={{Rates of convergence of some multivariate Markov chains with polynomial eigenfunctions}},
  author={Khare, Kshitij and Zhou, Hua},
  journal={Ann. Appl. Probab.},
  volume={2},
  pages={737--777},
  year={2009}
}

@article{K14,
  title= {{Local consistency of Markov chain Monte Carlo methods}},
  author={Kamatani, K.},
  journal={Ann. Inst. Stat. Math.},
  volume={66},
  pages={63--74},
  year={2014}
}

@article{LS93,
  title={{Random Walks in a Convex Body and an Improved Volume Algorithm}},
  author={Lov\'asz, L. and Simonovits, M.},
  journal={Random Struct. and Alg.},
  volume={4},
  pages={359--412},
  year={1993}
}

@article{MFR23,
  title={{Computing Bayes: From then ‘til now}},
  author={Martin, Gael M and Frazier, David T and Robert, Christian P},
  journal={Statistical Science},
  volume={39},
  number={1},
  pages={3--19},
  year={2024},
  publisher={Institute of Mathematical Statistics}
}

@inproceedings{PRS03,
  title={{Non-Centered Parameterizations for Hierarchical Models and Data Augmentation (with discussion)}},
  author={Papaspiliopoulos, O. and Roberts, G. O. and Sk\"old, M.},
  booktitle={Bayesian Statistics (J. M. Bernardo, M. J. Bayarri, J. O. Berger, A. P. Dawid, D. Heckerman, A. F. M. Smith and M. West, eds.)},
  pages={307--326},
  year={2003}
}

@article{SPZ23,
  title={Scalable Bayesian computation for crossed and nested hierarchical models},
  author={Papaspiliopoulos, Omiros and Stumpf-F{\'e}tizon, Timoth{\'e}e and Zanella, Giacomo},
  journal={Electronic Journal of Statistics},
  volume={17},
  number={2},
  pages={3575--3612},
  year={2023},
  publisher={The Institute of Mathematical Statistics and the Bernoulli Society}
}

@article{Q19,
  title={{Convergence complexity analysis of Albert and Chib's algorithm for Bayesian probit regression}},
  author={Qin, Q. and Hobert, J. P.},
  journal={Ann. Statist.},
  volume={47},
  pages={2320--2347},
  year={2019}
}

@article{Q22,
  title={{Wasserstein-based methods for convergence complexity analysis of MCMC with applications}},
  author={Qin, Q. and Hobert, J. P.},
  journal={Ann. Appl. Prob.},
  volume={32},
  pages={124--166},
  year={2022}
}

@article{R98,
  title={{Optimal scaling of discrete approximations to Langevin diffusions}},
  author={Roberts, G. O. and Rosenthal, J. S.},
  journal={J. R. Stat. Soc. Ser. B},
  volume={60},
  pages={255--268},
  year={1998}
}

@article{R01,
  title={{Markov Chains and De-Initializing Processes}},
  author={Roberts, G. O. and Rosenthal, J. S.},
  journal={Scand. J. Stat.},
  volume={28},
  pages={489--504},
  year={2001}
}

@article{R97,
  title={{Updating Schemes, Correlation Structure, Blocking and Parameterization for the Gibbs Sampler}},
  author={Roberts, G. O. and Sahu, S. H.},
  journal={J. R. Stat. Soc. Ser. B},
  volume={59},
  pages={291--317},
  year={1997}
}

@article{roberts2001inference,
  title={{On inference for partially observed nonlinear diffusion models using the Metropolis--Hastings algorithm}},
  author={Roberts, Gareth O and Stramer, Osnat},
  journal={Biometrika},
  volume={88},
  number={3},
  pages={603--621},
  year={2001},
  publisher={Oxford University Press}
}

@article{beskos2006retrospective,
  title={{Retrospective exact simulation of diffusion sample paths with applications}},
  author={Beskos, Alexandros and Papaspiliopoulos, Omiros and Roberts, Gareth O},
  journal={Bernoulli},
  volume={12},
  number={6},
  pages={1077--1098},
  year={2006},
  publisher={Bernoulli Society for Mathematical Statistics and Probability}
}

@article{R95,
  title={{Minorization Conditions and Convergence Rates for Markov Chain Monte Carlo}},
  author={Rosenthal, J. S.},
  journal={J. Am. Stat. Assoc},
  volume={90},
  pages={558--566},
  year={1995}
}

@article{T22,
  title={{Computational Complexity of Metropolis-Adjusted Langevin Algorithms for Bayesian Posterior Sampling}},
  author={Tang, R. and Yang, Y.},
  journal={arXiv preprint arXiv:2206.06491},
  year={2022}
}

@book{V00,
  title={{Asymptotic Statistics}},
  author={van der Vaart, A. W.},
  year={2000},
  publisher={Cambridge University Press}
}

@article{Y17,
  title={{Complexity results for MCMC derived from quantitative bounds}},
  author={Yang, J. and Rosenthal, J. S.},
  journal={Ann. Appl. Prob.},
  volume={33},
  pages={1459--1500},
  year={2023}
}

@article{WSC22,
  title={{Minimax Mixing Time of the Metropolis-Adjusted Langevin Algorithm for Log-Concave Sampling}},
  author={Wu, K. and Schmidler, S. and Chen, Y.},
  journal={J. Mach. Learn. Res.},
  volume={23},
  pages={1--63},
  year={2022}
}

@article{PZ20,
  title={{Scalable inference for crossed random effects models}},
  author={Papaspiliopoulos, O. and Roberts, G. and Zanella, G.},
  journal={Biometrika},
  volume={107},
  pages={25--40},
  year={2020}
}

@article{polson1994bayes,
  title={Bayes factors for discrete observations from diffusion processes},
  author={Polson, Nicholas G and Roberts, Gareth O},
  journal={Biometrika},
  volume={81},
  number={1},
  pages={11--26},
  year={1994},
  publisher={Oxford University Press}
}

@article{papaspiliopoulos2013data,
  title={Data augmentation for diffusions},
  author={Papaspiliopoulos, Omiros and Roberts, Gareth O and Stramer, Osnat},
  journal={Journal of Computational and Graphical Statistics},
  volume={22},
  number={3},
  pages={665--688},
  year={2013},
  publisher={Taylor \& Francis}
}

@article{tong2020mala,
  title={{MALA-within-{G}ibbs samplers for high-dimensional distributions with sparse conditional structure}},
  author={Tong, Xin T and Morzfeld, Matthias and Marzouk, Youssef M},
  journal={SIAM Journal on Scientific Computing},
  volume={42},
  number={3},
  pages={A1765--A1788},
  year={2020},
  publisher={SIAM}
}

@article{qin2023spectral,
  title={{Spectral gap bounds for reversible hybrid Gibbs chains}},
  author={Qin, Qian and Ju, Nianqiao and Wang, Guanyang},
  journal={The Annals of Statistics},
  volume={53},
  number={4},
  pages={1613--1638},
  year={2025},
  publisher={Institute of Mathematical Statistics}
}

@article{madras2002markov,
  title={Markov chain decomposition for convergence rate analysis},
  author={Madras, Neal and Randall, Dana},
  journal={Annals of Applied Probability},
  pages={581--606},
  year={2002},
  publisher={JSTOR}
}

@article{SmiRob93,
  title={Bayesian computation via the {G}ibbs sampler and related {M}arkov chain {M}onte {C}arlo methods},
  author={Smith, Adrian FM and Roberts, Gareth O},
  journal={Journal of the Royal Statistical Society: Series B (Methodological)},
  volume={55},
  number={1},
  pages={3--23},
  year={1993},
  publisher={Wiley Online Library}
}

\begin{appendices}

\section{Additional examples}\label{examples}

\subsection{Tightness of the bound in Corollary \ref{cor:bound_conductance}}\label{tightness_example}
Define
\begin{equation}\label{defin:P}
P = \frac{1}{d}\sum_{i = 1}^dP_i, \quad P_i = cG_i+(1-c)I,
\end{equation}
for $i = 1, \dots, d$ and $c\in(0,1)$. 
It follows that $P_i(\partial A) = cG_i(\partial A)$ for every $A\subseteq \sX$, so that 
$\Phi(P) = c\Phi(G)$. Also, by \eqref{defin:P}, we have
\[
P_i^{\x_{-i}}(x_i, \d y_i) = cG_i^{\x_{-i}}(x_i, \d y_i)+(1-c)\delta_{x_i}(y_i),
\]
so that 
\[
\int_{B} P^{\x_{-i}}_i(x_i, B^c) \pi_i (dx_i|\x_{-i}) = c\int_{B} G^{\x_{-i}}_i(x_i, B^c) \pi_i (dx_i|\x_{-i}) = c\pi_i(B \mid \x_{-i})\pi_i(B^c \mid \x_{-i})
\]
for every $B\subseteq \sX_i$
By \eqref{conditional_conductance} we conclude $\kappa(P_i^{\x_{-i}}) = c$ for every $\x\in\sX$, from which it follows $\Phi(P) = \kappa(P)\Phi(G)$ as desired.

\subsection{Convergence of the stationary distributions does not imply convergence of the conductances}\label{example_asymp_conduct}

Let $\sX = \R^2$ and $\pi$ 
be a bivariate standard normal distribution.
Define $\pi_n$ to be the truncation of $\pi$ on the set $C_n$, where
\[
C_n = \left\{(-\infty, n] \times (-\infty, n]\right\} \, \bigcup \,\left\{ [n, + \infty) \times [n, +\infty)\right\}.
\]
Let $G_n$ and $G$ be the operators of the associated Gibbs samplers. Then, it is not difficult to show that $\lTV \pi_n - \pi \rTV \to 0$ as $n \to \infty$ and $\Phi(G_n) = 0$ for every $n$, since it suffices to take $A = [n, + \infty) \times [n, +\infty)$ in \eqref{s_conductance}. On the other hand $\Phi(G) = 1$, since $G$ is a Gibbs Sampler (GS) on a distribution with independent components.

\section{Reversible and positive semi-definite coordinate-wise Markov chains}
Let $P$ as in \eqref{eq:coordinate_wise}. The next lemma shows that reversibility and positive semi-definiteness of $P$ follow by the ones of $P_i$ and in turn of $P_i^{\x_{-i}}$.
\begin{lemma}\label{lemma:rev_pos_supp}
Let $P=\frac{1}{d} \sum_{i=1}^dP_i$ be a Markov kernel. Then, if $P_i$ is $\pi$-reversible and positive semi-definite for every $i$, then the same holds for $P$.

Moreover, let $P_i$ be as in \eqref{eq:coordinate_wise}. Then $P_i$ is $\pi$-reversible and positive semi-definite if and only if $P_i^{\x_{-i}}$ is $\pi(\cdot \mid \x_{-i})$-reversible and positive semi-definite for $\pi(\d \x_{-i})$-almost every $\x_{-i}$.
\end{lemma}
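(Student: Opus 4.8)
The plan is to treat the two assertions separately, reducing both to the disintegration $\pi(\d\x)=\pi_{-i}(\d\x_{-i})\,\pi_i(\d x_i\mid\x_{-i})$, where $\pi_{-i}$ denotes the marginal law of $\x_{-i}$ under $\pi$. The first assertion, that $\pi$-reversibility and positive semi-definiteness of each $P_i$ pass to $P=\frac1d\sum_iP_i$, is immediate, since both properties are preserved by non-negative linear combinations: reversibility $\pi(\d\x)P(\x,\d\y)=\pi(\d\y)P(\y,\d\x)$ is a linear identity in $P$ and hence follows by summing the corresponding identities for the $P_i$, while positive semi-definiteness follows from $\langle Pf,f\rangle_\pi=\frac1d\sum_i\langle P_if,f\rangle_\pi\ge0$ for every $f\in L^2(\pi)$.

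For the second assertion I would disintegrate the relevant bilinear forms along the fibres $\{\x_{-i}=\text{const}\}$. Using $P_i(\x,\d\y)=P_i^{\x_{-i}}(x_i,\d y_i)\,\delta_{\x_{-i}}(\d\y_{-i})$ together with the disintegration of $\pi$, a direct Fubini computation gives, for $f\in L^2(\pi)$,
\[
\langle P_if,f\rangle_\pi=\int_{\sX_{-i}}\Big\langle P_i^{\x_{-i}}f(\cdot,\x_{-i}),\,f(\cdot,\x_{-i})\Big\rangle_{\pi_i(\cdot\mid\x_{-i})}\,\pi_{-i}(\d\x_{-i}),
\]
and, by testing against an arbitrary $F(\x,\y)$, an analogous fibre-wise expression for the reversibility pairing $\int\!\int F\,\pi(\d\x)P_i(\x,\d\y)$. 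Both ``if'' directions are then immediate: if $P_i^{\x_{-i}}$ is $\pi_i(\cdot\mid\x_{-i})$-reversible and positive semi-definite for $\pi_{-i}$-a.e.\ $\x_{-i}$, the corresponding integrands are respectively symmetric and non-negative, and integrating yields the two properties for $P_i$.

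The converse (``only if'') directions are where the real work lies, and I expect them to be the main obstacle, as one must pass from a single identity/inequality for $P_i$ to a fibre-wise statement valid for $\pi_{-i}$-almost every $\x_{-i}$. I would avoid an explicit measurable-selection theorem by exploiting separability: fix a countable family $\{g_k\}_{k\ge1}$ of bounded measurable functions on $\sX_i$ that is dense in $L^2(\nu)$ for every probability measure $\nu$ on $\sX_i$. For positive semi-definiteness, applying the displayed identity to $f(x_i,\x_{-i})=g_k(x_i)\mathbf{1}_E(\x_{-i})$ shows $\int_E\langle P_i^{\x_{-i}}g_k,g_k\rangle_{\pi_i(\cdot\mid\x_{-i})}\,\pi_{-i}(\d\x_{-i})\ge0$ for every measurable $E\subseteq\sX_{-i}$, so $\x_{-i}\mapsto\langle P_i^{\x_{-i}}g_k,g_k\rangle$ is non-negative off a $\pi_{-i}$-null set $N_k$. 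On the full-measure set $N=\bigcap_kN_k^c$ one has $\langle P_i^{\x_{-i}}g_k,g_k\rangle\ge0$ for all $k$; since each $P_i^{\x_{-i}}$ is an $L^2(\pi_i(\cdot\mid\x_{-i}))$-contraction (being a Markov kernel), approximating an arbitrary $g$ by the $g_k$ and passing to the limit gives $\langle P_i^{\x_{-i}}g,g\rangle\ge0$, i.e.\ positive semi-definiteness for $\x_{-i}\in N$.

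The reversibility converse is entirely analogous: applying the fibre-wise pairing to product test functions $g_k(x_i)g_\ell(y_i)\mathbf{1}_E(\x_{-i})$ and letting $E$ vary forces, for each pair $(k,\ell)$, the fibre-wise symmetry of the pairing off a $\pi_{-i}$-null set, after which the countable intersection over all $(k,\ell)$ and a density argument yield $\pi_i(\cdot\mid\x_{-i})$-reversibility of $P_i^{\x_{-i}}$ for a.e.\ $\x_{-i}$. The only genuinely delicate point is justifying the existence of the family $\{g_k\}$ that is $L^2(\nu)$-dense uniformly over the measures $\nu=\pi_i(\cdot\mid\x_{-i})$; I would settle this by the standard separability argument on Polish (standard Borel) state spaces, where a countable algebra of bounded continuous functions separating points is, by Stone--Weierstrass together with tightness of each $\nu$, dense in $L^2(\nu)$ for every finite Borel measure $\nu$.
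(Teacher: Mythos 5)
Your proposal is correct, and for the forward (``if'') direction and the first assertion it coincides with the paper's argument: linearity for $P=\frac1d\sum_iP_i$, and a Fubini/disintegration of the bilinear forms along the fibres $\{\x_{-i}=\mathrm{const}\}$. Where you genuinely diverge is the converse (``only if'') direction. The paper argues by contraposition: if the set $C$ of $\x_{-i}$ for which $P_i^{\x_{-i}}$ fails to be reversible (or positive semi-definite) has positive measure, it selects for each $\x_{-i}\in C$ a pair of witnessing sets $A(\x_{-i}),B(\x_{-i})$ (or a witnessing function $g^{\x_{-i}}$), glues them into global sets $A,B$ (or a global $f(\x)=\mathbbm{1}_C(\x_{-i})g^{\x_{-i}}(x_i)$), and integrates to contradict the global property. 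This is shorter but tacitly assumes a measurable selection of the witnesses (and that the failure of symmetry can be taken with a uniform sign over $C$, which is harmless after splitting $C$). Your route instead tests the global identity against a fixed countable family $g_k(x_i)\mathbf{1}_E(\x_{-i})$ (resp.\ $g_k(x_i)g_\ell(y_i)\mathbf{1}_E(\x_{-i})$), lets $E$ vary to localize, discards countably many null sets, and extends by $L^2$-density using the contraction property of a $\pi_i(\cdot\mid\x_{-i})$-invariant kernel; this avoids any selection argument at the cost of invoking a standard Borel/Polish structure on $\sX_i$ to obtain a single countable family dense in $L^2(\nu)$ for all the conditional laws $\nu$ simultaneously. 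Both arguments are valid under the paper's implicit standing assumptions; yours is the more careful one on the measurability point, the paper's is the more economical.
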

\begin{proof}
If $A, B \subset \sX$, then
\[
\begin{aligned}
\int_A\int_B P(\x, \d \y)\pi(\d \x) &= \frac{1}{d} \sum_{i=1}^d\int_A\int_B P_i(\x, \d \y)\pi(\d \x)\\
& = \frac{1}{d} \sum_{i=1}^d\int_B\int_A P_i(\x, \d \y)\pi(\d \x)  = \int_B\int_A P(\x, \d \y)\pi(\d \x),
\end{aligned}
\]
where the second equality follows by $\pi$-reversibility of $P_i$. Moreover, let $f \, : \, \sX \, \to \, \sX$ be such that $\int f^2(\x) \pi(\d \x) < \infty$ and notice that
\[
\begin{aligned}
\int \left(Pf(\x)\right)f(\x) \pi(\d \x) &= \frac{1}{d} \sum_{i=1}^d \int \left(P_if(\x)\right)f(\x) \pi(\d \x) \geq 0,
\end{aligned}
\]
by positive semi-definiteness of $P_i$. Thus, $P$ is $\pi$-reversible and positive semi-definite.

As regards the second point, assume that $P_i^{\x_{-i}}$ is $\pi(\cdot \mid \x_{-i})$-reversible and positive semi-definite for $\pi(\d \x_{-i})$-almost every $\x_{-i}$. Then, letting $A(\x_{-i}) = \left\{x_i \, ; \, (x_i, \x_{-i}) \in A \right\}$ and similarly for $B(\x_{-i})$, we have that
\[
\begin{aligned}
\int_A\int_B P_i(\x, \d \y)\pi(\d \x) &= \int \left[\int_{A(\x_{-i})}\int_{B(\x_{-i})} P_i^{\x_{-i}}(x_i,\d y_i) \pi_i(\d x_i \mid \x_{-i}) \right]\pi(\d \x_{-i}) \\
& = \int \left[\int_{B(\x_{-i})}\int_{A(\x_{-i})} P_i^{\x_{-i}}(x_i,\d y_i)\pi_i(\d x_i \mid \x_{-i}) \right]\pi(\d \x_{-i}) \\
&= \int_B\int_A P_i(\x, \d \y)\pi(\d \x).
\end{aligned}
\]
Moreover
\[
\int \left(P_if(\x)\right)f(\x) \pi(\d \x) = \int \left[\int \left(P_i^{\x_{-i}}f(\x)\right)f(\x)\pi_i(\d x_i \mid \x_{-i}) \right]\pi(\d \x_{-i}) \geq 0,
\]
so that $P_i$ is $\pi$-reversible and positive semi-definite.

Conversely, assume that $C = \left\{\x_{-i} \, ; \, P_i^{\x_{-i}} \text{ not reversible} \right\} \subset \sX_{-i}$ is such that $\pi(C) > 0$. Then there exist classes $\{ A(\x_{-i})\}_{\x_{-i} \in C}$ and $\{ B(\x_{-i})\}_{\x_{-i} \in C}$ such that $A(\x_{-i}) \subset \sX_i$, $B(\x_{-i}) \subset \sX_i$ and
\[
\int_{A(\x_{-i})}\int_{B(\x_{-i})} P_i^{\x_{-i}}(x_i,\d y_i) \pi_i(\d x_i \mid \x_{-i}) > \int_{B(\x_{-i})}\int_{A(\x_{-i})} P_i^{\x_{-i}}(x_i,\d y_i) \pi_i(\d x_i \mid \x_{-i})
\]
for every $\x_{-i} \in C$. Then, let
\[
A = \left\{\x \, ; \, \x_{-i} \in C, x_i \in A(\x_{-i}) \right\}, \quad \text{and} \quad B = \left\{\x \, ; \, \x_{-i} \in C, x_i \in B(\x_{-i}) \right\}
\]
and notice that
\[
\begin{aligned}
\int_A\int_B P_i(\x, \d \y)\pi(\d \x) &= \int_C \left[\int_{A(\x_{-i})}\int_{B(\x_{-i})} P_i^{\x_{-i}}(x_i,\d y_i) \pi_i(\d x_i \mid \x_{-i}) \right]\pi(\d \x_{-i}) \\
& > \int_C \left[\int_{B(\x_{-i})}\int_{A(\x_{-i})} P_i^{\x_{-i}}(x_i,\d y_i)\pi_i(\d x_i \mid \x_{-i}) \right]\pi(\d \x_{-i}) \\
&= \int_A\int_B P_i(\x, \d \y)\pi(\d \x),
\end{aligned}
\]
so that $P_i$ cannot be $\pi$-reversible. Similarly, if $C = \left\{\x_{-i} \, ; \, P_i^{\x_{-i}} \text{ not positive semi-definite} \right\} \subset \sX_{-i}$ is such that $\pi(C) > 0$, then there exists a class $\{ g^{\x_{-i}}\}_{\x_{-i} \in C}$ such that $g^{\x_{-i}} \, : \, \sX_i \, \to \, \sX_i$ and
\[
\int \left(P_i^{\x_{-i}}g^{\x_{-i}}(x_i)\right)g^{\x_{-i}}(x_i)\pi_i(\d x_i \mid \x_{-i}) < 0,
\]
for every $\x_{-i} \in C$. Then, letting $f(\x) = \mathbbm{1}_{C}(\x_{-i})g^{\x_{-i}}(x_i)$ we have that
\[
\int \left(P_if(\x)\right)f(\x) \pi(\d \x) = \int_C \left[\int \left(P_i^{\x_{-i}}g^{\x_{-i}}(x_i)\right)g^{\x_{-i}}(x_i)\pi_i(\d x_i \mid \x_{-i}) \right]\pi(\d \x_{-i}) < 0,
\]
so that $P_i$ cannot be positive semi-definite.
\end{proof}

\section{Alternative definition of $s$-conductance}\label{alternative_definition}
An alternative definition relative to \eqref{s_conductance} is given by
\begin{equation}\label{modified_conductance}
\tilde{\Phi}_s(P) = \inf \left\{\frac{P(\partial A)}{\pi(A)-s}; \, s < \pi(A) \leq \frac{1}{2}, A \subset \sX \right\}.
\end{equation}
Note that $\Phi_s(P) \leq \tilde{\Phi}_s(P)$ for every $s \geq 0$ and $\tilde{\Phi}_0(P) = \Phi_0(P) = \Phi(P)$.

The next theorem shows that the conclusions of Corollary \ref{cor:bound_conductance} hold also for the conductance defined as in \eqref{modified_conductance}.

\begin{theorem}
Let $P$ be a coordinate-wise operator as in \eqref{eq:coordinate_wise}, with target distribution $\pi \in \sP(\sX)$, and $\kappa(P, K)$ as in \eqref{minimum_conditional_conductance}. Then for every $0\leq s \leq1/2$ and $K\subseteq \sX$ we have
\begin{align}\label{eq:cond_cond_bound_no_K}
\tilde{\Phi}_s(P) \geq 
\kappa(P, \sX)\tilde{\Phi}_{s}(G) 
\end{align}
and
\begin{align}\label{eq:cond_cond_bound}
\tilde{\Phi}_s(P) \geq \kappa(P,K)\frac{2(s-s')}{1-2s}\tilde{\Phi}_{s'}(G)-\frac{2\pi(K^c)}{1-2s} \left(\frac{1}{d}\sum_{i = 1}^d\kappa_i(P_i,K)\right)\,.
\end{align}
\end{theorem}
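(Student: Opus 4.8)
The plan is to mirror the proof of Corollary \ref{cor:bound_conductance}, passing from the flux of $P$ to that of $G$ via Theorem \ref{main}, but now controlling the denominator $\pi(A)-s$ of \eqref{modified_conductance}, which (unlike $\pi(A)$ in \eqref{s_conductance}) may vanish as $\pi(A)\downarrow s$. For the first bound \eqref{eq:cond_cond_bound_no_K} I would simply take $K=\sX$ in Theorem \ref{main}: then $\pi(A\cap\sX^c)=0$ and $\kappa_i(P_i,\sX)\geq\kappa(P,\sX)$, so summing $P_i(\partial A)\geq\kappa_i(P_i,\sX)G_i(\partial A)$ over $i$ and dividing by $d$ gives $P(\partial A)\geq\kappa(P,\sX)G(\partial A)$ for every $A\subset\sX$. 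Dividing by $\pi(A)-s>0$ and taking the infimum over $\{s<\pi(A)\leq 1/2\}$ yields \eqref{eq:cond_cond_bound_no_K}.

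For \eqref{eq:cond_cond_bound} I would first combine Theorem \ref{main} over the coordinates. Writing $c:=\frac1d\sum_{i=1}^d\kappa_i(P_i,K)$ and using $\kappa_i(P_i,K)\geq\kappa(P,K)$ together with $G_i(\partial A)\geq0$ and $\pi(A\cap K^c)\leq\pi(K^c)$, I obtain $P(\partial A)\geq\kappa(P,K)G(\partial A)-\pi(K^c)\,c$. Fix now $A$ with $s<\pi(A)\leq 1/2$ and abbreviate $a=\pi(A)$, $\phi=\tilde\Phi_{s'}(G)$, $q=\pi(K^c)$; in the relevant regime $s'\leq s$ we have $s'<a\leq1/2$, so $A$ is admissible in \eqref{modified_conductance} for $\tilde\Phi_{s'}(G)$ and hence $G(\partial A)\geq\phi(a-s')$. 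Substituting and splitting $a-s'=(a-s)+(s-s')$ gives the key identity
\[
\frac{P(\partial A)}{a-s}\;\geq\;\frac{\kappa(P,K)\,\phi\,(a-s')-qc}{a-s}\;=\;\kappa(P,K)\,\phi+\frac{N}{a-s},\qquad N:=\kappa(P,K)(s-s')\phi-qc .
\]
Since $\tfrac{2}{1-2s}=\tfrac{1}{1/2-s}$, the right-hand side of \eqref{eq:cond_cond_bound} is exactly $\tfrac{2}{1-2s}N=\tfrac{N}{1/2-s}=:g^*$.

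The argument then splits on the sign of $N$, which is precisely where the vanishing denominator is tamed. If $N\geq0$, then from $0<a-s\leq 1/2-s$ we get $\tfrac{N}{a-s}\geq\tfrac{N}{1/2-s}=g^*$, and adding $\kappa(P,K)\phi\geq0$ gives $\tfrac{P(\partial A)}{a-s}\geq g^*$; taking the infimum over admissible $A$ proves \eqref{eq:cond_cond_bound}. If instead $N<0$, then $g^*<0$ and the bound is trivial, since every flux is non-negative so that $\tilde\Phi_s(P)\geq0>g^*$.

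The main obstacle, and the reason the constant $2(s-s')$ appears here rather than the sharper factor available for the denominator $\pi(A)$, is exactly the blow-up of $1/(a-s)$ as $a\downarrow s$: one cannot bound the negative term $\pi(A\cap K^c)\,c/(a-s)$ uniformly as in the proof of Corollary \ref{cor:bound_conductance}. The resolution is the shift to $s'<s$, which forces $a-s'\geq s-s'>0$ and thereby keeps the flux term $\kappa(P,K)G(\partial A)$ bounded below, together with the observation that whenever the resulting lower bound $g^*$ is negative the statement holds automatically by non-negativity of $\tilde\Phi_s(P)$. The remaining ingredients (the summation of Theorem \ref{main} and the elementary manipulation above) are routine.
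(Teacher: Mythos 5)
Your proof is correct and follows essentially the same route as the paper's: both pass through Theorem \ref{main} and the averaged bound $P(\partial A)\geq \kappa(P,K)G(\partial A)-\pi(K^c)\,c$, then use $\pi(A)-s\leq (1-2s)/2$ together with the factor $(s-s')$ to bring in $\tilde{\Phi}_{s'}(G)$. Your explicit case split on the sign of $N$ is just a more careful write-up of a step the paper handles implicitly via non-negativity of the flux.
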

\begin{proof}
Let $A \subset \sX$. By Theorem \ref{main} we have
\[
\begin{aligned}
P(\partial A) 
&=\frac{1}{d}\sum_{i = 1}^d P_i(\partial A)
\geq \frac{1}{d}\sum_{i = 1}^d\kappa_i(P_i, \sX) G_i(\partial A)
\geq \kappa(P, \sX)\frac{1}{d}\sum_{i = 1}^dG_i(\partial A).
\end{aligned}
\]
The inequality in \eqref{eq:cond_cond_bound_no_K} is obtained from the above by dividing by $\pi(A)-s$ and taking the infimum over $A\subset \sX$ such that $1/2 \geq \pi(A) > s$.

We now consider \eqref{eq:cond_cond_bound}. Let $A \subset \sX$ be such that $1/2 \geq \pi(A) > s$. Since $\pi(A) - s\leq (1-2s)/2$, we have
\[
\begin{aligned}
\frac{P(\partial A)}{\pi(A) - s} &\geq \left(\frac{2}{1-2s} \right)\kappa(P,K)G(\partial A)-\left(\frac{2}{1-2s} \right)\left(\frac{1}{d}\sum_{i = 1}^d\kappa_i(P_i,K)\right)\pi(K^c).
\end{aligned}
\]
Since $\pi(A)>s$ we also have
\[
\begin{aligned}
G(\partial A) &= \frac{\pi(A) - s'}{\pi(A) - s'}G(\partial A) \geq (s-s')\frac{G(\partial A)}{\pi(A) - s'}\,.
\end{aligned}
\]
The above imply
\[
\begin{aligned}
\frac{P(\partial A)}{\pi(A) - s} &\geq \kappa(P,K)\frac{2(s-s')}{1-2s}\frac{G(\partial A)}{\pi(A)-s'}-\frac{1}{d}\sum_{i = 1}^d\kappa_i(P_i,K)\left(\frac{2}{1-2s} \right)\pi(K^c).
\end{aligned}
\]
The desired inequality follows by taking the infimum over $A\subset \sX$ such that $1/2 \geq \pi(A) > s$.
\end{proof}

\section{Extension of Corollary \ref{cor:bound_conductance} to non-uniform updating probabilities}

Given probability weights $\w = (w_1, \dots, w_d)$ such that $w_i > 0$ and $\sum_{i = 1}^dw_i = 1$, we define
\begin{equation}\label{eq:coordinate_wise_nonuniform}
P^\w=\sum_{i=1}^dw_iP_i, \quad P_i(\x, \d \y) = P_i^{\x_{-i}}(x_i, \d y_i)\delta_{\x_{-i}}\left( \d\y_{-i}\right).
\end{equation}
The choice $w_i = 1/d$ leads to the uniform case in \eqref{eq:coordinate_wise}. Similarly we define $G^\w = \sum_{i=1}^dw_iG_i$. The next corollary shows that the same results of Corollary \ref{cor:bound_conductance} hold for arbitrary weights.

\begin{corollary}\label{cor:bound_conductance_nonuniform}
Let $P^\w$ be as in \eqref{eq:coordinate_wise_nonuniform}. Then for every $s \in [0, 1/2)$ and $K\subseteq \sX$ we have
\begin{align}\label{eq:cond_bound}
\Phi_s(P^\w)& \geq 
\kappa(P^\w, \sX)\Phi_{s}(G^\w) \quad \text{and} \quad \Phi_s(P^\w) \geq 
\kappa(P^\w,K)\Phi_{s}(G^\w)-\frac{\pi(K^c)}{s}\left(\sum_{i = 1}^dw_i\kappa_i(P_i,K)\right)\,,
\end{align}
with $\kappa(P^\w, K)$ as in \eqref{minimum_conditional_conductance}.
Moreover, $\Phi_s(G^\w) \geq \Phi_s(P^\w)$.
\end{corollary}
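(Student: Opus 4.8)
The plan is to mirror the proof of Corollary \ref{cor:bound_conductance} essentially verbatim, observing that Theorem \ref{main} is stated coordinate-wise and involves neither the mixing weights nor the aggregated kernel: its two conclusions, $P_i(\partial A) \geq \kappa_i(P_i, K)(G_i(\partial A) - \pi(A \cap K^c))$ and $G_i(\partial A) \geq P_i(\partial A)$, hold for each fixed $i$ regardless of how the $P_i$ are combined. Likewise, the quantity $\kappa(P^\w, K) = \min_i \kappa_i(P_i, K)$ in \eqref{minimum_conditional_conductance} does not depend on $\w$. Hence the only modification needed is to replace the averaging weight $1/d$ by $w_i$ at the aggregation step.

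First I would bound the flux of $P^\w$ through an arbitrary $A \subset \sX$. Using $P^\w(\partial A) = \sum_{i=1}^d w_i P_i(\partial A)$ together with the first inequality of Theorem \ref{main},
\[
P^\w(\partial A) \geq \sum_{i=1}^d w_i \kappa_i(P_i, K) G_i(\partial A) - \Big(\sum_{i=1}^d w_i \kappa_i(P_i, K)\Big)\pi(A \cap K^c).
\]
Since each flux $G_i(\partial A)$ is nonnegative and $\kappa_i(P_i, K) \geq \kappa(P^\w, K)$ for every $i$, the first sum is at least $\kappa(P^\w, K)\sum_{i=1}^d w_i G_i(\partial A) = \kappa(P^\w, K) G^\w(\partial A)$. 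Bounding $\pi(A \cap K^c) \leq \pi(K^c)$ in the second sum yields
\[
P^\w(\partial A) \geq \kappa(P^\w, K) G^\w(\partial A) - \Big(\sum_{i=1}^d w_i \kappa_i(P_i, K)\Big)\pi(K^c).
\]

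To pass from fluxes to $s$-conductances I would divide by $\pi(A)$ and take the infimum over $A$ with $s < \pi(A) \leq 1/2$; on this range the bound $1/\pi(A) < 1/s$ controls the error term, delivering the second inequality of the corollary. The first inequality follows by specializing to $K = \sX$, for which $\pi(K^c) = 0$ and $\kappa(P^\w, \sX) = \min_i \kappa_i(P_i, \sX)$. Finally, the reverse bound $\Phi_s(G^\w) \geq \Phi_s(P^\w)$ is immediate from the second conclusion of Theorem \ref{main}: summing $G_i(\partial A) \geq P_i(\partial A)$ against the weights gives $G^\w(\partial A) \geq P^\w(\partial A)$ for every $A$, and dividing by $\pi(A)$ and taking the infimum preserves the inequality.

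There is essentially no genuine obstacle here; the only point requiring a moment's care is the step that pulls $\kappa(P^\w, K) = \min_i \kappa_i(P_i, K)$ out of the weighted sum $\sum_i w_i \kappa_i(P_i, K) G_i(\partial A)$. This is legitimate precisely because the fluxes $G_i(\partial A)$ are nonnegative, so replacing each $\kappa_i(P_i, K)$ by their common lower bound can only decrease the sum; the identical argument with $w_i = 1/d$ underlies Corollary \ref{cor:bound_conductance}, which is why the weighted extension is a routine bookkeeping exercise rather than a new result.
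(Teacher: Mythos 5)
Your proposal is correct and follows essentially the same route as the paper's own proof: decompose the flux as $P^\w(\partial A)=\sum_i w_i P_i(\partial A)$, apply Theorem \ref{main} coordinate-wise, pull out $\kappa(P^\w,K)=\min_i\kappa_i(P_i,K)$ using nonnegativity of the fluxes, bound $\pi(A\cap K^c)\leq\pi(K^c)$ and $1/\pi(A)<1/s$, and sum $G_i(\partial A)\geq P_i(\partial A)$ for the reverse inequality. The only cosmetic difference is that you obtain the first inequality by specializing $K=\sX$ while the paper treats it separately, which is immaterial.
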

\begin{proof}
Let $A \subset \sX$. By Theorem \ref{main} we have
\[
\begin{aligned}
P^\w(\partial A) 
&= \sum_{i = 1}^d w_iP_i(\partial A)
\geq \sum_{i = 1}^dw_i\kappa_i(P_i, \sX) G_i(\partial A)
\geq \kappa(P^\w, \sX)\sum_{i = 1}^dw_iG_i(\partial A).
\end{aligned}
\]
The first inequality in \eqref{eq:cond_bound} is obtained dividing by $\pi(A)$ and taking the infimum over $A\subset \sX$ such that $1/2 \geq \pi(A) > s$.

As regards the other inequality, again by Theorem \ref{main} we have
\begin{align*}
P^\w(\partial A) 
&\geq \sum_{i = 1}^dw_i\kappa_i(P_i,K)G_i(\partial A)-\left(\sum_{i = 1}^dw_i\kappa_i(P_i,K)\right)\pi(A \cap K^c)\\
& \geq \kappa(P^\w,K)G^\w(\partial A)-\left(\sum_{i = 1}^dw_i\kappa_i(P_i,K)\right)\pi(K^c).
\end{align*}
Let now $A \subset \sX$ be such that $1/2 \geq \pi(A) > s$. Therefore by the above we have
\[
\frac{P^\w(\partial A)}{\pi(A)} \geq \kappa(P^\w,K)\frac{G^\w(\partial A)}{\pi(A)}-\left(\sum_{i = 1}^dw_i\kappa_i(P_i,K)\right)\frac{\pi(K^c)}{s},
\]
from which the right inequality in \eqref{eq:cond_bound} follows.

Finally, since $P^{\x_{-i}}$ is reversible and positive semi-definite, again by Theorem \ref{main} we have
\[
G^\w(\partial A) = \sum_{i = 1}^dw_i G_i(\partial A) \geq \sum_{i = 1}^dw_i P_i(\partial A) = P(\partial A),
\]
from which we immediately deduce $\Phi_s(G^\w) \geq \Phi_s(P^\w)$.
\end{proof}

\section{Regularity assumptions (B4)-(B6) for model \eqref{one_level_nested}}\label{regularity_assumptions}
Define $\bm{T} = \left(\sum_{j = 1}^JT_1(\theta_j), \dots, \sum_{j = 1}^JT_S(\theta_j) \right)$, with $T_s$ as in \eqref{exponential_family}, and let
\begin{align}\label{posterior_moments}
M^{(p)}_{s}(\psi \mid y) &= E \left[T^p_s(\theta_j) \mid Y_j = y, \psi \right]\,,\\
M^{(p)}_{s,s'}(\psi \mid y) &= E \left[T^p_s(\theta_j)T^p_{s'}(\theta_j) \mid Y_j = y, \psi \right],
\end{align}
be the posterior moments of $\bm{T}$ given $\psi$, denote $M^{(p)}(\psi \mid y) = \left(M_1^{(p)}(\psi \mid y), \dots, M_S^{(p)}(\psi \mid y) \right) \in \R^S$  and
\begin{equation}\label{def_C_V}
\left[C(\psi)\right]_{s,d} = E_{Y_j} \left[\partial_{\psi_d}M_s^{(1)}\left(\psi \mid Y_j \right) \right], \quad \left[V(\psi) \right]_{s,s'} = E_{Y_j} \left[\text{Cov}\left(T_s(\theta_j), T_{s'}(\theta_j) \mid Y_j, \psi \right) \right],
\end{equation}
with $s, s' = 1, \dots S$ and $d = 1, \dots, D$. Moreover we write $B_\delta$ for the ball of center $\psi^*$ and radius $\delta$, and denote expectations with respect to the law of $Y_j$ as defined in $(B1)$ by $E_{Y_j}[\cdot]$. Finally, we define the posterior characteristic function of $T(\theta_j) = \left(T_1(\theta_j), \dots, T_S(\theta_j)   \right)$ and $\sum_{j = 1}^kT(\theta_j)$, given $\psi$, as
$
\varphi\left(t \mid Y_j, \psi \right)= E \left[e^{it^\top T(\theta_j)} \mid Y_j, \psi \right]$ for $t\in\R^S$.
and
$
\varphi^{(k)}\left(t \mid Y_{1:k}, \psi \right) = \prod_{j = 1}^k\varphi\left(t \mid Y_j, \psi \right)
$, 
respectively.
Assumptions $(B4)-(B6)$ now read:
\begin{enumerate}
\item[$(B4)$] The expectation $M^{(p)}_{s}(\psi \mid y)$ is well defined for every $y$ and $p = 1, \dots, 6$. Moreover, there exist $\delta_4 > 0$ and $C$ finite constant such that for every $\psi \in B_{\delta_4}$ it holds
$E_{Y_j}\left[\left \lvert  \partial_{\psi_d}M^{(6)}_s(\psi \mid Y_j) \right\rvert \right] < C$, 
$E_{Y_j}\left[\left \lvert \partial_{\psi_d}\partial_{\psi_{d'}}M^{(1)}_s(\psi \mid Y_j) \right\rvert \right] < C$, \newline $
E_{Y_j}\left[\left \lvert  \partial_{\psi_d}M^{(1)}_{s, s'}(\psi \mid Y_j) \right\rvert \right] < C$ and $E_{Y_j}\left[\left \lvert  \partial_{\psi_d}\left\{M^{(1)}_{s}(\psi \mid Y_j)M^{(1)}_{s'}(\psi \mid Y_j)\right\} \right\rvert \right] < C$
for $s, s' = 1, \dots, S$ and  $d, d' = 1, \dots, D$.
 Finally, the matrix $V(\psi^*)$ defined in \eqref{def_C_V} is non singular.
\item[$(B5)$] There exist $k \geq 1$ and $\delta_5 >0$ such that
\[
\sup_{\psi \in B_{\delta_5}} \, \int_{\R^S} \left \lvert\varphi^{(k)}\left(t \mid Y_{1:k}, \psi \right) \right\rvert^2 \, \d t < \infty,
\]
for almost every $Y_1,\dots,Y_k \simiid Q_{\psi^*}$.
\item[$(B6)$] There exist $k' \geq 1$ and $\delta_6 >0$ such that
\[
\sup_{\psi \in B_{\delta_6}} \, \sup_{|t| > \epsilon} \left \lvert \varphi^{(k')}\left(t \mid Y_{1:k'}, \psi \right) \right \rvert < \phi(\epsilon),
\]
for almost every $Y_1,\dots,Y_k \simiid Q_{\psi^*}$, with $\phi(\epsilon) < 1$ for every $\epsilon > 0$.
\end{enumerate}
Some discussion on the interpretation and applicability of assumptions $(B4)$-$(B6)$ can be found in Appendix B of \cite{AZ23}.

\section{Proofs}
\subsection{Proof of Lemma \ref{conductance_convergence}}
\begin{proof}
Using the notation $\tilde{\Phi}_s(P)$ in \eqref{modified_conductance}, by Corollary $1.5$ (point (b)) in \cite{LS93} we have that
\[
\lTV \mu P^t - \pi \rTV \leq Ms + M\left(1-\frac{\tilde{\Phi}^2_s(P)}{2} \right)^t\,.
\]
The result now follows by noticing that $\Phi_s(P) \leq \tilde{\Phi}_s(P)$ for every $s \geq 0$.
\end{proof}
\subsection{Proof of Theorem \ref{main}}
For every $A \subseteq \sX$, $i = 1, \dots, d$ and $\x_{-i} \in \sX_{-i}$  we denote
\[
A_i(\x_{-i}) = \{x_i\in\sX_i\,:\, \x \in A\}, \quad A_i^c(\x_{-i}) = \{x_i \in\sX_i\,:\, \x \not\in A\}.
\]
Notice that $A_i(\x_{-i}) \subseteq \sX_i$ and $A_i^c(\x_{-i}) \subseteq \sX_i$.

\begin{proof}[Proof of Theorem \ref{main}]
By \eqref{eq:coordinate_wise} we have
\begin{equation}\label{key_equality_coord}
\begin{aligned}
&P_i(\partial A)
 = \int_A P_i(\x, A^c) \pi(\d \x)= \int_{\x_{-i} }  \int_{A_i(\x_i)}P_i^{\x_{-i}}\left(x_i, A_i^c(\x_{-i})\right)\pi_i(\d x_i \mid \x_{-i}) \pi (\d \x_{-i})\\
& = \int_{\x_{-i} }  \left[\frac{\int_{A_i(\x_i)}P^{\x_{-i}}_i\left(x_i,A_i^c(\x_{-i})\right)\pi_i(\d x_i \mid \x_{-i})}{\pi_{i}\left(A_i \left(\x_{-i}\right) \mid \x_{-i} \right) \pi_{i} \left(A_i^c \left(\x_{-i} \right) \mid \x_{i}\right)}\right]\pi_{i}\left(A_i \left(\x_{-i}\right) \mid \x_{-i} \right) \pi_{i} \left(A_i^c \left(\x_{-i}\right) \mid \x_{-i} \right) \pi (\d \x_{-i})\,,
\end{aligned}
\end{equation}
where, with a slight abuse of notation, $\pi (\d \x_{-i})$ denotes the marginal distribution of $X_{-i}$ under $X\sim\pi$.  

If $\pi_{i}\left(A_i \left(\x_{-i}\right) \mid \x_{-i} \right) \leq 1/2$, by \eqref{conditional_conductance} we have
\[
\frac{\int_{A_i(\x_i)}P_i^{\x_{-i}}\left(x_i, A_i^c(\x_{-i})\right)\pi_i(\d x_i \mid \x_{-i})}{\pi_{i}\left(A_i \left(\x_{-i}\right) \mid \x_{-i} \right) \pi_{i} \left(A_i^c \left(\x_{-i}\right) \mid \x_{-i} \right)}\geq \frac{\int_{A_i(\x_i)}P_i^{\x_{-i}}\left(x_i, A_i^c(\x_{-i})\right)\pi_i(\d x_i \mid \x_{-i})}{\pi_{i}\left(A_i \left(\x_{-i}\right) \mid \x_{-i} \right)} \geq \kappa\left(P_i^{\x_{-i}}\right).
\]
If instead $\pi_{i}\left(A_i^c \left(\x_{-i}\right) \mid \x_{-i} \right) \leq 1/2$, since $P^{\x_{-i}}$ is reversible we have
\[
\begin{aligned}
\frac{\int_{A_i(\x_i)}P_i^{\x_{-i}}\left(x_i, A_i^c(\x_{-i})\right)\pi_i(\d x_i \mid \x_{-i})}{\pi_{i}\left(A_i \left(\x_{-i}\right) \mid \x_{-i} \right) \pi_{i} \left(A_i^c \left(\x_{-i}\right) \mid \x_{-i} \right)} & = \frac{\int_{A_i^c(\x_i)}P_i^{\x_{-i}}\left(x_i, A_i(\x_{-i})\right)\pi_i(\d x_i \mid \x_{-i})}{\pi_{i}\left(A_i \left(\x_{-i}\right) \mid \x_{-i} \right) \pi_{i} \left(A_i^c \left(\x_{-i}\right) \mid \x_{-i} \right)}\\
&\geq \frac{\int_{A_i^c(\x_i)}P_i^{\x_{-i}}\left(x_i, A_i(\x_{-i})\right)\pi_i(\d x_i \mid \x_{-i})}{\pi_{i}\left(A_i^c \left(\x_{-i}\right) \mid \x_{-i} \right)} \geq \kappa\left(P_i^{\x_{-i}}\right).
\end{aligned}
\]
In conclusion
\[
\frac{\int_{A_i(\x_i)}P_i^{\x_{-i}}\left(x_i, A_i^c(\x_{-i})\right)\pi_i(\d x_i \mid \x_{-i})}{\pi_{i}\left(A_i \left(\x_{-i}\right) \mid \x_{-i} \right) \pi_{i} \left(A_i^c \left(\x_{-i}\right) \mid \x_{-i} \right)}\geq \kappa\left(P_i^{\x_{-i}}\right),
\]
and thus
\[
\begin{aligned}
P_i(\partial A)
&\geq
\int_{\x_{-i} }\kappa\left(P_i^{\x_{-i}}\right)\pi_{i}\left(A_i \left(\x_{-i}\right) \mid \x_{-i} \right) \pi_{i} \left(A_i^c \left(\x_{-i}\right) \mid \x_{-i} \right)  \pi(\d \x_{-i}) 
  \\
&=  \int_{\x\in A}\kappa\left(P_i^{\x_{-i}}\right)\pi_{i} (A_i^c (\x_{-i})\mid \x_{-i} )  \pi(\d \x)
  \\
&\geq \int_{\x \in A \cap K }\kappa\left(P_i^{\x_{-i}}\right) \pi_{i} \left(A_i^c \left(\x_{-i}\right) \mid \x_{-i} \right)  \pi(\d \x)    \\
&\geq \kappa_i(P_i,K)\,  \int_{\x \in A \cap K } \pi_{i} \left(A_i^c \left(\x_{-i}\right) \mid \x_{-i} \right)  \pi(\d \x).
\end{aligned}
\]
Moreover
\[
\begin{aligned}
\int_{\x \in A \cap K } \pi_{i} \left(A_i^c \left(\x_{-i}\right) \mid \x_{-i}\right)  \pi(\d \x) &= \int_{\x \in A } \pi_{i} \left(A_i^c \left(\x_{-i}\right) \mid \x_{-i} \right)  \pi(\d \x)-\int_{\x \in A \cap K^c } \pi_{i} \left(A_i^c \left(\x_{-i}\right) \mid \x_{-i} \right)  \pi(\d \x)\\
&\geq \int_{\x \in A } \pi_{i} \left(A_i^c \left(\x_{-i}\right) \mid \x_{-i} \right)  \pi(\d \x)-\pi\left(A \cap K^c\right).
\end{aligned}
\]
Combining the two previous inequalities and using $\pi_{i} (A_i^c (\x_{-i} ) \mid \x_{-i} )=G_{i}^{\x_{-i}}(x_i, A_i^c \left(\x_{-i}\right)) $ we get
\[
\begin{aligned}
P_i(\partial A)&\geq \kappa_i(P_i,K)\left(\int_{\x \in A } \pi_{i} (A_i^c (\x_{-i} ) \mid \x_{-i} )  \pi(\d \x)-\pi\left(A \cap K^c\right) \right)\\
& = \kappa_i(P_i,K)\left(\int_{\x \in A} \pi (\d\x ) G_{i}(\x, A^c)-\pi(A \cap K^c)\right)
\end{aligned}
\]
as desired. 

As regards the second point, since $P_i^{\x_{-i}}$ is reversible and positive semi-definite, it is possible to show that
\begin{equation}\label{cons_positive_def}
\frac{\int_{A_i(\x_i)}P_i^{\x_{-i}}\left(x_i, A_i^c(\x_{-i})\right)\pi(\d x_i \mid \x_{-i})}{\pi_{i}\left(A_i \left(\x_{-i}\right) \mid \x_{-i} \right) \pi_{i} \left(A_i^c \left(\x_{-i}\right) \mid \x_{-i} \right)} \leq 1.
\end{equation}
Indeed, since $P_i^{\x_{-i}}$ is invariant with respect to $\pi(\cdot \mid \x_{-i})$ it holds
\[
\int_{A_i(\x_i)}P_i^{\x_{-i}}\left(x_i, A_i^c(\x_{-i})\right)\pi(\d x_i \mid \x_{-i})+\int_{A_i^c(\x_i)}P_i^{\x_{-i}}\left(x_i, A_i^c(\x_{-i})\right)\pi(\d x_i \mid \x_{-i}) = \pi_i\left(A_i^c(\x_{-i}) \mid \x_{-i} \right).
\]
Moreover, since $P_i^{\x_{-i}}$ is positive semi-definite, by e.g. Lemma $1.1$ in \cite{LS93} we have
\[
\int_{A_i^c(\x_i)}P_i^{\x_{-i}}\left(x_i, A_i^c(\x_{-i})\right)\pi(\d x_i \mid \x_{-i}) \geq \pi_i^2\left(A_i^c(\x_{-i}) \mid \x_{-i}\right),
\]
from which \eqref{cons_positive_def} follows. Combining \eqref{key_equality_coord} with \eqref{cons_positive_def} we obtain
\[
P_i(\partial A) \leq \int_{\x_{-i} } \pi_{i}\left(A_i \left(\x_{-i}\right) \mid \x_{-i} \right) \pi_{i} \left(A_i^c \left(\x_{-i}\right) \mid \x_{-i}\right) \pi (\d \x_{-i}) = G_i(\partial A).
\]
\end{proof}
\subsection{Proof of Corollary \ref{cor:bound_conductance}}
\begin{proof}
Let $A \subset \sX$. By Theorem \ref{main} we have
\[
\begin{aligned}
P(\partial A) 
&=\frac{1}{d}\sum_{i = 1}^d P_i(\partial A)
\geq \frac{1}{d}\sum_{i = 1}^d\kappa_i(P_i, \sX) G_i(\partial A)
\geq \kappa(P, \sX)\frac{1}{d}\sum_{i = 1}^dG_i(\partial A).
\end{aligned}
\]
The first inequality in \eqref{eq:cond_bound} is obtained dividing by $\pi(A)$ and taking the infimum over $A\subset \sX$ such that $1/2 \geq \pi(A) > s$.

As regards the other inequality, again by Theorem \ref{main} we have
\begin{align*}
P(\partial A) 
&\geq \frac{1}{d}\sum_{i = 1}^d\kappa_i(P_i,K)G_i(\partial A)-\left(\frac{1}{d}\sum_{i = 1}^d\kappa_i(P_i,K)\right)\pi(A \cap K^c)\\
& \geq \kappa(P,K)G(\partial A)-\left(\frac{1}{d}\sum_{i = 1}^d\kappa_i(P_i,K)\right)\pi(K^c).
\end{align*}
Let now $A \subset \sX$ be such that $1/2 \geq \pi(A) > s$. Therefore by the above we have
\[
\frac{P(\partial A)}{\pi(A)} \geq \kappa(P,K)\frac{G(\partial A)}{\pi(A)}-\left(\frac{1}{d}\sum_{i = 1}^d\kappa_i(P_i,K)\right)\frac{\pi(K^c)}{s},
\]
from which the right inequality in \eqref{eq:cond_bound} follows.

Finally, since $P^{\x_{-i}}$ is reversible and positive semi-definite, again by Theorem \ref{main} we have
\[
G(\partial A) = \frac{1}{d}\sum_{i = 1}^d G_i(\partial A) \geq \frac{1}{d}\sum_{i = 1}^d P_i(\partial A) = P(\partial A),
\]
from which we immediately deduce $\Phi_s(G) \geq \Phi_s(P)$.
\end{proof}
\subsection{Proof of Proposition \ref{prop:conductance_ind}}
\begin{proof}
By definition of $M$, for every $x_i$ and $y_i$ we have
\[
\alpha(x_i, y_i) \geq \min \left\{ 1, \frac{1}{M}\frac{ \d \pi_i(\cdot \mid \x_{-i})}{\d Q_i}(y_i)\right\} = \frac{1}{M}\frac{ \d \pi_i(\cdot \mid \x_{-i})}{\d Q_i}(y_i),
\]
so that by \eqref{eq:operator_MH} it holds
\[
P_i^{\x_{-i}}(x_i, B^c) \geq \frac{1}{M}\int_{B^c}\frac{ \d \pi_i(\cdot \mid \x_{-i})}{\d Q_i}(y_i)Q_i(\d y_i) = \frac{1}{M}\pi_i(B^c \mid \x_{-i}),
\]
for every $B \subset \sX_i$. This implies that $\kappa(P_i^{\x_{-i}}) \geq 1/M$ for every $i = 1, \dots, d$ and the result follows by Corollary \ref{cor:bound_conductance}.
\end{proof}
\subsection{Proof of Proposition \ref{prop:IMH_logconcave}}
We need a preliminary lemma.
\begin{lemma}\label{lemma:prop_logconcave}
Let $\pi$ be a log-concave distribution on $\R^d$ with parameters $m$ and $l$ and mode $\x^*$. Then for very $\x \in \R^d$ it holds
\[
\frac{\pi(\x)}{N\left(\x \mid x^*, \frac{1}{m}\I_d\right)} \leq \left(\frac{L}{m}\right)^{\frac{d}{2}}.
\]
\end{lemma}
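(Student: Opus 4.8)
The plan is to compare $\pi$ directly with the Gaussian $N(\cdot\mid\x^*,m^{-1}\I_d)$ by exploiting the two-sided quadratic control on $f:=-\log\pi$ coming from $m$-strong convexity and $L$-smoothness (the global, single-block analogue of \eqref{eq:m_convex_L_smooth}), with everything anchored at the mode $\x^*$, where $\nabla f(\x^*)=0$. First I would make the normalizing constant explicit, writing
\[
N(\x\mid\x^*,m^{-1}\I_d)=\left(\frac{m}{2\pi}\right)^{d/2}\exp\left(-\frac{m}{2}\|\x-\x^*\|^2\right),
\]
so that the quantity to be bounded equals $\pi(\x)\,(2\pi/m)^{d/2}\exp(\tfrac{m}{2}\|\x-\x^*\|^2)$.

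The argument then splits into two elementary steps. For the numerator I would use the strong-convexity (left) inequality evaluated at the base point $\x^*$: since $\nabla f(\x^*)=0$, it gives $f(\x)\ge f(\x^*)+\tfrac{m}{2}\|\x-\x^*\|^2$, equivalently $\pi(\x)\le\pi(\x^*)\exp(-\tfrac{m}{2}\|\x-\x^*\|^2)$. Substituting this into the ratio cancels the Gaussian exponent exactly and leaves the $\x$-independent bound $\pi(\x^*)(2\pi/m)^{d/2}$. Thus the whole estimate reduces to controlling the peak value $\pi(\x^*)$.

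To bound $\pi(\x^*)$ I would use the smoothness (right) inequality, again at $\x^*$: it yields $f(\y)\le f(\x^*)+\tfrac{L}{2}\|\y-\x^*\|^2$, hence $\pi(\y)\ge\pi(\x^*)\exp(-\tfrac{L}{2}\|\y-\x^*\|^2)$. Integrating over $\R^d$ and using $\int\pi=1$ gives $1\ge\pi(\x^*)(2\pi/L)^{d/2}$, i.e.\ $\pi(\x^*)\le(L/2\pi)^{d/2}$. Multiplying the two bounds collapses the factors of $2\pi$ and produces $(L/m)^{d/2}$, as claimed. There is essentially no serious obstacle here; the only points requiring minor care are getting the Gaussian normalizing constant right and noting that strong convexity guarantees $\x^*$ is the unique stationary point of $f$, so that $\nabla f(\x^*)=0$ may legitimately be invoked in both quadratic inequalities.
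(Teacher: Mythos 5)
Your proposal is correct and follows essentially the same route as the paper: strong convexity anchored at the mode (where $\nabla f(\x^*)=0$) cancels the Gaussian exponent and reduces everything to bounding the peak value $\pi(\x^*)=e^{-f(\x^*)}$, which is then controlled by integrating the smoothness lower bound $\pi(\y)\ge\pi(\x^*)e^{-\frac{L}{2}\|\y-\x^*\|^2}$ against the normalization $\int\pi=1$. The only difference is cosmetic (you work multiplicatively where the paper works with $\log$ of the ratio; incidentally the $\geq$ in the paper's first display is a typo for $\leq$, as your version makes clear).
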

\begin{proof}
Denoting $f(\x) = -\log \pi(\x)$ we have
\begin{equation}\label{bound_logratio_logconcave}
\begin{aligned}
\log \frac{\pi(\x)}{N\left(\x \mid x^*, \frac{1}{m}\I_d\right)} &= -f(\x)+\frac{m}{2}\lE\x - \x^* \rE^2-\frac{d}{2}\log \left(\frac{m}{2\pi}\right)\\
& = -f(\x^*)-\left[f(\x) -f(\x^*)-\frac{m}{2}\lE\x - \x^* \rE^2\right]-\frac{d}{2}\log \left(\frac{m}{2\pi}\right)\\
& \geq -f(\x^*)-\frac{d}{2}\log \left(\frac{m}{2\pi}\right),
\end{aligned}
\end{equation}
by \eqref{eq:m_convex_L_smooth}. Moreover, notice that
\[
1 = \int_{\R^d}e^{-f(\x)}\, \d \x = e^{-f(\x^*)}\int_{\R^d}e^{-\left[f(\x)-f(\x^*)\right]}\, \d \x.
\]
By \eqref{eq:m_convex_L_smooth} it holds $f(\x)-f(\x^*) \leq \frac{L}{2}\lE\x - \x^* \rE^2$ which implies
\[
\int_{\R^d}e^{-\left[f(\x)-f(\x^*)\right]}\, \d \x \geq \int_{\R^d}e^{-\frac{L}{2}\lE\x - \x^* \rE^2}\, \d \x = \left(\frac{2\pi}{L} \right)^{\frac{d}{2}},
\]
so that
\[
e^{-f(\x^*)} \leq \left(\frac{L}{2\pi} \right)^{\frac{d}{2}}.
\]
Combining this with \eqref{bound_logratio_logconcave} the result follows.
\end{proof}
\begin{proof}[Proof of Proposition \ref{prop:IMH_logconcave}]
By Lemma \ref{lemma:prop_logconcave} it is possible to apply Proposition \ref{prop:conductance_ind} with $M = \min_i \inf_{\x_{-i}}\, \left(\frac{m(\x_{-i})}{L(\x_{-i})}\right)^{\frac{d_i}{2}}$, from which the result follows.
\end{proof}
\subsection{Proof of Proposition \ref{prop:RWM_logconcave}}
\begin{proof}
The first part of the statement follows directly from Corollary $35$ in \cite{AL22}, while the second part is a consequence of Corollary \ref{cor:bound_conductance}.
\end{proof}
\subsection{Proof of Theorem \ref{theorem:distance_operators}}
\begin{proof}
For every $i = 1, 2$ and $A \subset \sX$ denote with $\pi_i^A(\cdot) = \pi_i(\cdot \cap A)/\pi_i(A)$ the restriction of $\pi_i$ to $A$. It is clear that $\pi_i^A \in \sN\left(\pi_i, 1/\pi_i(A)\right)$. Fix now $A \subset \sX$ such that $s < \pi_1(A) \leq A$ and notice that by the triangular inequality we have
\[
\begin{aligned}
\left \lvert \frac{P_1(\partial A)}{\pi_1(A)}-\frac{P_2(\partial A)}{\pi_2(A)}  \right \rvert &= \left \lvert \pi_1^AP_1(A^c)-\pi_2^AP_2(A^c)  \right \rvert\\
& \leq \left \lvert \pi_1^AP_1(A^c)-\pi_1^AP_2(A^c)  \right \rvert+\left \lvert \pi_1^AP_2(A^c)-\pi_2^AP_2(A^c)  \right \rvert\\
& \leq \Delta(P_1, P_2, 1/s) + \lTV \pi_1^A-\pi_2^A \rTV.
\end{aligned}
\]
Moreover, for every $B \subset \sX$ we have
\[
\begin{aligned}
\left \lvert \pi_1^A(B)-\pi_2^A(B) \right \rvert &= \left \lvert \frac{\pi_1(B \cap A)}{\pi_1(A)}-\frac{\pi_2(B \cap A)}{\pi_2(A)} \right \rvert\\
& \leq \left \lvert \frac{\pi_1(B \cap A)}{\pi_1(A)}-\frac{\pi_2(B \cap A)}{\pi_1(A)} \right \rvert+\left \lvert \frac{\pi_2(B \cap A)}{\pi_1(A)}-\frac{\pi_2(B \cap A)}{\pi_2(A)} \right \rvert\\
& \leq \frac{\delta}{s}+\left \lvert \frac{1}{\pi_1(A)}-\frac{1}{\pi_2(A)} \right \rvert \pi_2(A) = \frac{\delta}{s}+\left \lvert\frac{\pi_2(A)-\pi_1(A)}{\pi_1(A)} \right \rvert\\
& \leq \frac{2\delta}{s}.
\end{aligned}
\]
Therefore we have
\[
\left \lvert \frac{P_1(\partial A)}{\pi_1(A)}-\frac{P_2(\partial A)}{\pi_2(A)}  \right \rvert \leq \Delta(P_1, P_2, 1/s)+\frac{2\delta}{s},
\]
for every $A \subset \sX$ such that $s < \pi_1(A) \leq A$, so that it holds
\[
\begin{aligned}
\Phi_s(P_1) &= \inf \left\{\frac{P_1(\partial A)}{\pi_1(A)}, \, s < \pi_1(A) \leq \frac{1}{2} \right\}\\
& \geq \inf \left\{\frac{P_2(\partial A)}{\pi_2(A)}, \, s < \pi_1(A) \leq \frac{1}{2} \right\}+\Delta(P_1, P_2, 1/s)+\frac{2\delta}{s}\\
& \geq \inf \left\{\frac{P_2(\partial A)}{\pi_2(A)}, \, s-\delta < \pi_2(A) \leq \frac{1}{2} \right\}+\Delta(P_1, P_2, 1/s)+\frac{2\delta}{s}\\
& = \Phi_{s-\delta}(P_2)-\Delta\left(P_1, P_2, 1/s \right)-\frac{2\delta}{s},
\end{aligned}
\]
as desired.
\end{proof}

\subsection{Proof of Lemma \ref{lemma:Delta_GS}}
It is easy to prove that an equivalent way to define $\sN$ as in \eqref{N_class} is given by
\begin{align}\label{equivalent_N_class}
\sN\left(\pi, M \right)=&
\left\{\mu\in\mathcal{P}(\sX)\,:\,\frac{\d \mu}{\d \pi}(\x) \leq M \hbox{ for all }A\subseteq \sX\right\}, &M \geq 1,\,\pi \in \mathcal{P}(\sX) \,,
\end{align}
where $\frac{\d \mu}{\d \pi}$ denotes the Radon-Nikodym derivative of $\mu$ with respect to $\pi$.
\begin{proof}[Proof of Lemma \ref{lemma:Delta_GS}]
Recall that, for $\pi_1 \in \sP(\sX)$ and $\pi_2 \in \sP(\sX)$, the total variation distance is defined as
\[
\lTV \pi_1 - \pi_2 \rTV = \sup_{f \in \sH(\sX)} \, \biggl\lvert \int_\sX f(\x) \pi_1(\d \x) - \int_\sX f(\x) \pi_2(\d \x) \biggr\rvert,
\]
where
\[
\sH(\sX) = \left\{ f \mid 
 f \, : \, \sX \, \to \, [0,1] \text{ measurable} \right\}.
\]
For notational convenience, in the following we denote
\[
f(\x_{-i}, y_i) = f(x_1, \dots, x_{i-1}, y_i, x_{i+1}, \dots, x_d),
\]
for every $f \in \sH(\sX)$ and $ i = 1, \dots, d$. Moreover, we write $\pi_{j, i}(\d x_i \mid \d \x_{_i})$, with $j = 1,2$ and $i = 1, \dots, d$, to denote the conditional distribution of the $i$-th coordinate induced by $\pi_j$.

Let now $f \in \sH(\sX)$ and $\mu \in \sN(\pi_1, M)$. Then with $j = 1,2$ we have
\[
\begin{aligned}
\int_{\sX} f(\y) \mu G_{j,i}(\d \y) &= \int_{\sX} \int_{\sX_i}f(\x_{-i}, y_i) \pi_{j,i}(\d y_i \mid \x_{-i}) \mu(\d \x)\\
& = \int_{\sX} \int_{\sX_i}\frac{\d \mu}{\d \pi_{1}}(\x)f( \x_{-i}, y_i) \pi_{j,i}(\d y_i \mid \x_{-i}) \pi_1(\d \x).
\end{aligned}
\]
Then we have
\[
\begin{aligned}
\biggl \lvert \int_{\sX}& f(\d \y) \mu G_{1,i}(\d \y) - \int_{\sX} f(\d \y) \mu G_{2,i}(\d \y) \biggr \rvert \\
& = M \left \lvert\int_{\sX} \int_{\sX_i}\frac{\d \mu}{\d \pi_{1}}(\x)\frac{f(\x_{-i}, y_i)}{M} \pi_{1,i}(\d y_i \mid \x_{-i}) \pi_1(\d \x)-\int_{\sX} \int_{\sX_i}\frac{\d \mu}{\d \pi_{1}}(\x)\frac{f(\x_{-i}, y_i)}{M} \pi_{2,i}(\d y_i \mid \x_{-i}) \pi_1(\d \x) \right \rvert\\
& = M \left \lvert\int_{\sX} \int_{\sX_i}g(\x, y_i) \pi_{1,i}(\d y_i \mid \x_{-i})\pi_1(\d \x)-\int_{\sX} \int_{\sX_i}g(\x, y) \pi_{2,i}(\d y_i \mid \x_{-i}) \pi_1(\d \x) \right \rvert,
\end{aligned}
\]
where $g(\x, y_i) = \frac{\d \mu}{\d \pi_{1}}(\x)\frac{f(\d \x_{-i}, y)}{M} \in \sH(\sX \times \sX_i)$, by \eqref{equivalent_N_class}. Moreover, notice that we can disintegrate as follows
\[
\pi_{j,i}(\d y_i \mid \x_{-i})\pi_1(\d \x) = \pi_{1,i}(\d x_i \mid \x_{-i})\pi_{j,i}(\d y_i \mid \x_{-i})\pi_1(\d \x_{-i}),
\]
with $j = 1, 2$. Therefore
\[
\begin{aligned}
\biggl \lvert &\int_{\sX} f(\y) \mu G_{1,i}(\d \y) - \int_{\sX} f(\y) \mu G_{2,i}(\d \y) \biggr \rvert \\
& = 
\begin{aligned}
    M \biggl\lvert\int_{\sX}\int_{\sX_i}g(\x, y_i)\pi_{1,i}(\d x_i \mid \x_{-i}) &\pi_{1,i}(\d y_i \mid \x_{-i})\pi_1(\d \x_{-i})\\
    &-\int_{\sX}\int_{\sX_i}g(\x, y_i)\pi_{1,i}(\d x_i \mid \x_{-i}) \pi_{2,i}(\d y_i \mid \x_{-i}) \pi_1(\d \x_{-i}) \biggr \rvert
    \end{aligned}\\
& =M \left \lvert\int_{\sX}h(\x_{-i}, y_i) \pi_{1,i}(\d y_i \mid \x_{-i})\pi_1(\d \x_{-i})-\int_{\sX}h(\x_{-i}, y_i) \pi_{2,i}(\d y_i \mid \x_{-i}) \pi_1(\d \x_{-i}) \right \rvert,
\end{aligned}
\]
where $h(\x_{-i}, y_i) = \int_{\sX_i}g(\x, y)\pi_{1,i}(\d x_i \mid \x_{-i}) \, \in \sH(\sX)$. Thus we get that
\[
\lTV \mu G_{1,i}-\mu G_{2,i} \rTV \leq M \sup_{h \in \sH(\sX)} \, \left \lvert\int_{\sX}h(\x_{-i}, y_i) \pi_{1,i}(\d y_i \mid \x_{-i})\pi_1(\d \x_{-i})-\int_{\sX}h(\x_{-i}, y_i) \pi_{2,i}(\d y_i \mid \x_{-i}) \pi_1(\d \x_{-i}) \right \rvert.
\] 
By the triangular inequality, for every $h \in \sH(\sX)$ we have
\[
\begin{aligned}
\biggl \lvert \int_{\sX}h(\x_{-i}, y_i)& \pi_{1,i}(\d y_i \mid \x_{-i})\pi_1(\d \x_{-i})-\int_{\sX}h(\x_{-i}, y_i) \pi_{2,i}(\d y_i \mid \x_{-i}) \pi_1(\d \x_{-i}) \biggr \rvert \\
&\leq \biggl \lvert\int_{\sX}h(\x_{-i}, y_i) \pi_{1,i}(\d y_i \mid \x_{-i})\pi_1(\d \x_{-i})-\int_{\sX}h(\x_{-i}, y_i) \pi_{2,i}(\d y_i \mid \x_{-i}) \pi_2(\d \x_{-i}) \biggr \rvert\\
&+\biggl \lvert\int_{\sX}h(\x_{-i}, y_i) \pi_{2,i}(\d y_i \mid \x_{-i})\pi_2(\d \x_{-i})-\int_{\sX}h(\x_{-i}, y_i) \pi_{2,i}(\d y_i \mid \x_{-i}) \pi_1(\d \x_{-i}) \biggr \rvert\\
& \leq 2\lTV \pi_1-\pi_2 \rTV,
\end{aligned}
\]
as desired.
\end{proof}

\subsection{Proof of Theorem \ref{theorem:delta_Gibbs_operators}}
\begin{proof}
Let $M=1/s$.
By $G_1=d^{-1}\sum_{i=1}^dG_{1,i}$,  $G_2=d^{-1}\sum_{i=1}^dG_{2,i}$, and the definition of $\Delta$ in \eqref{delta_operators}, we have
\begin{align*}
\Delta(G_1, G_2, 1/s) 
&\leq 
\frac{1}{d}
\sum_{i=1}^d
\Delta(G_{1,i}, G_{2,i}, 1/s)\,,
\end{align*}
where we used the triangle inequality for the total variation norm and the fact that $G_1$ and $G_{1,i}$ for $i=1,\dots,d$ share the same invariant distribution $\pi_1$. 
Combining the latter with Lemma \ref{lemma:Delta_GS} we obtain 
\begin{align*}
\Delta(G_1, G_2, 1/s) 
&\leq 
\frac{2\delta}{s}\,.
\end{align*}
The desired statement follows by combining the above inequality with 
Theorem \ref{theorem:distance_operators}.
\end{proof}

\subsection{Proof of Lemma \ref{lemma_product_conductance}}
\begin{proof}
Denote with Gap$(P)$ the spectral gap of the operator $P$ (see e.g. \cite{AL22} for the definition). It is well-known (see e.g.\ Lemma $2$ in \cite{PZ20}) that Gap$(P) = \min_{j \in \{1, \dots, J\}} \, \text{Gap}(P_j)$. By Cheeger's bounds (e.g. Lemma 5 in \cite{AL22}) we have
\[
\Phi(P) \geq \frac{\text{Gap}(P)}{2} = \min_{j \in \{1, \dots, J\}} \, \frac{\text{Gap}(P_j)}{2} \geq \min_{j \in \{1, \dots, J\}} \, \frac{\Phi^2(P_j)}{4},
\]
as desired.
\end{proof}
\subsection{Proof of Theorem \ref{theorem_one_level_nested}}
\begin{proof}
Fix $s> 0$ and let $K^* = S^*\times \R^{\ell J}$. By Corollary \ref{cor:bound_conductance} we have
\[
\Phi_s(P_j) \geq \kappa\left(P_J, K^*\right)\Phi_{s}(G_J)-\frac{1-\pi_J(K^*)}{s}.
\]
By assumption $(C)$ and \eqref{eq:product_cond_conductance} we have $\kappa\left(P, K^*\right) \geq \kappa^2/4$, so that
\[
\Phi_s(P_j) \geq \frac{\kappa^2}{4}\Phi_{s}(G_J)-\frac{1-\pi_J(K^*)}{s}.
\]
Moreover, by the Bernstein-von Mises Theorem (e.g. Theorem $10.1$ in \cite{V00}), whose assumptions are met thanks to $(B1)-(B3)$, we can deduce
\[
Q_{\psi^*}^{(J)}\left(\pi_J(K^*) \to 1 \right) \to 1,
\]
as $J \to \infty$. Thus we conclude
\[
\begin{aligned}
Q_{\psi^*}^{(J)}\left(\Phi_s(P_J) \geq \frac{\kappa^2}{8}\Phi_s(G_J) \right) \leq Q_{\psi^*}^{(J)}\left(\frac{1-\pi_J(K^*)}{s} \leq  \frac{\kappa^2}{8}\Phi_s(G_J) \right) \to 1,
\end{aligned}
\]
as $J \to \infty$.
\end{proof}

\subsection{Proof of Theorem \ref{theorem: gibbs_one_level_nested}}
This proof requires multiple steps, which we summarize here:
\begin{enumerate}
\item We show that the $s$-conductance of an arbitrary operator $P$ can be lower bounded by the associated mixing times (Lemma \ref{lemma:s_conductance_mix_time}).
\item We prove that the mixing times of the Gibbs sampler $G$ are uniformly bounded in probability (Lemma \ref{lemma_mixing_times_gibbs_hierarchical}). In order to do this, we exploit a sufficiency structure which is available for hierarchical models (Lemma \ref{sufficient_lemma}).
\item Finally, we combine points $1.$ and $2.$ to provide a lower bound on the $s$ conductance of $G$.
\end{enumerate}
We need a preliminary lemma, which is well-known and whose proof (inspired by the one of Theorem $7.3$ in \cite{LP17} for discrete Markov chains) is included for completeness. 
\begin{lemma}\label{lower_bound_flux}
Let $P$ be a Markov kernel which is reversible with respect to $\pi$. Then it holds
\[
\lTV \mu_A P^t-\pi \rTV \geq \frac{1}{2}-t\,\frac{P(\partial A)}{\pi(A)}
\]
for every $A \subset \sX$ such that $\pi(A) \leq 1/2$ and $t \geq 1$, with $\mu_A(\cdot) = \pi(\cdot \cap A)/\pi(A)$.
\end{lemma}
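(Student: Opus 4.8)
The plan is to lower bound the total variation distance by evaluating both measures on the set $A$ itself, and then to argue that a chain started from $\mu_A$ stays inside $A$ with high probability over $t$ steps because it can only escape through the low-flux boundary $\partial A$. First, since the total variation norm is the supremum of $|\nu(B)-\pi(B)|$ over measurable $B$, testing on $B=A$ gives $\lTV \mu_A P^t - \pi \rTV \geq \mu_A P^t(A)-\pi(A)$. Using $\pi(A)\leq 1/2$, it therefore suffices to establish the occupation bound $\mu_A P^t(A) \geq 1 - t\,P(\partial A)/\pi(A)$, after which the claimed inequality follows immediately.

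To prove this occupation bound, the key observation is that the warm-start property is preserved by $P$. Indeed $\mu_A \in \sN(\pi, 1/\pi(A))$, i.e.\ $\d\mu_A/\d\pi \leq 1/\pi(A)$ pointwise, and whenever a measure $\mu$ satisfies $\mu\leq c\,\pi$ one has $\mu P(B)=\int P(\x,B)\mu(\d\x)\leq c\int P(\x,B)\pi(\d\x)=c\,\pi(B)$ by positivity and $\pi$-invariance of $P$; hence by induction $\mu_A P^{s} \in \sN(\pi,1/\pi(A))$ for every $s\geq 0$. Next I would let $(X_s)_{s\geq 0}$ be the chain with $X_0\sim\mu_A$ and introduce the first-exit time $\tau=\inf\{s\geq 1:X_s\in A^c\}$. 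Since $\{X_t\in A\}\supseteq\{\tau>t\}$ and $\{\tau=s\}\subseteq\{X_{s-1}\in A,\,X_s\in A^c\}$, a union bound yields $\mu_A P^t(A)=\P(X_t\in A)\geq 1-\sum_{s=1}^{t}\P(X_{s-1}\in A,\,X_s\in A^c)$. Each summand is the flux of $\mu_A P^{s-1}$ across $\partial A$, namely $\P(X_{s-1}\in A,\,X_s\in A^c)=\int_A P(\x,A^c)\,(\mu_A P^{s-1})(\d\x)$, and applying the pointwise bound $\d(\mu_A P^{s-1})/\d\pi\leq 1/\pi(A)$ together with $P(\x,A^c)\geq 0$ bounds this by $\tfrac{1}{\pi(A)}\int_A P(\x,A^c)\pi(\d\x)=P(\partial A)/\pi(A)$. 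Summing over the $t$ steps gives the occupation bound, completing the argument.

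The step requiring the most care is the control of the boundary-crossing probability at \emph{intermediate} times $s>1$: a naive union bound over $\{X_s\in A^c\}$ fails, since $\P(X_s\in A^c)\to\pi(A^c)$ can be close to $1/2$, so one must instead bound the one-step transition event $\{X_{s-1}\in A,\,X_s\in A^c\}$, which is precisely what the first-exit decomposition delivers. The other delicate point, on which the flux bound at each step rests, is the propagation of the inequality $\mu_A P^{s-1}\leq\pi/\pi(A)$; here only $\pi$-invariance (stationarity) of $P$ is actually used, so that reversibility, while assumed in the statement, is not essential to this particular estimate.
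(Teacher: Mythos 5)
Your proof is correct, and it takes a genuinely different route from the paper's. The paper first shows, using reversibility (and the restriction structure of $\mu_A$), that $(\mu_A P)(B)\leq \mu_A(B)$ for $B\subseteq A$ and $(\mu_A P)(B)\geq \mu_A(B)$ for $B\subseteq A^c$, which yields the exact identity $\lTV \mu_A P-\mu_A\rTV = (\mu_A P)(A^c)=P(\partial A)/\pi(A)$; it then iterates via the triangle inequality and the TV-contraction property of Markov kernels to get $\lTV \mu_A P^t-\mu_A\rTV\leq t\,P(\partial A)/\pi(A)$, and concludes with $\lTV\mu_A-\pi\rTV\geq\pi(A^c)\geq 1/2$ and one more triangle inequality. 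You instead prove the occupation bound $\mu_A P^t(A)\geq 1-t\,P(\partial A)/\pi(A)$ directly, by propagating the warm-start domination $\mu_A P^s\leq \pi/\pi(A)$ (which needs only stationarity) and decomposing the escape event via the first-exit time, then test the TV distance on the set $A$. Both arguments are sound and give the identical constant; yours is the more probabilistic and makes explicit that only $\pi$-invariance is used, a point you correctly flag (in fact the paper's own computation could also avoid reversibility, since $\int_A P(x,B)\pi(dx)\leq\int_{\sX}P(x,B)\pi(dx)=\pi(B)$ for $B\subseteq A$, but it invokes reversibility as written). The paper's version buys a clean intermediate statement about $\lTV\mu_A P^t-\mu_A\rTV$ that is reusable on its own; yours buys a sharper interpretation as a bound on the probability of ever leaving $A$ within $t$ steps.
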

\begin{proof}
Let $B \subset A$. Then by reversibility of $P$ we have
\[
\begin{aligned}
(\mu_A P)(B)-\mu_A(B) &= \frac{1}{\pi(A)}\left[\int_A P(x, B) \pi(\d x) -\pi(B)\right] \\
& = \frac{1}{\pi(A)}\left[\int_B P(x, A) \pi(\d x) -\pi(B)\right] \leq 0.
\end{aligned}
\]
If instead $B \subset A^c$ we have $(\mu_A P)(B)-\mu_A(B) = (\mu_A P)(B) \geq 0$. Thus
\[
\lTV \mu_A P-\mu_A \rTV = (\mu_A P)(A^c) = \frac{P(\partial A)}{\pi(A)}.
\]
Using repeatedly the triangle inequality and the monotonicity of the total variation distance with respect to transition kernel multiplications, we obtain
\begin{equation}\label{eq:lower_flux}
\lTV \mu_A P^t-\mu_A \rTV \leq t \frac{P(\partial A)}{\pi(A)}
\end{equation}
for every $t \geq 1$. Moreover
\[
\lTV \mu_A - \pi \rTV \geq \pi(A^c) \geq \frac{1}{2},
\]
so that by \eqref{eq:lower_flux} and the triangle inequality
\[
\begin{aligned}
\frac{1}{2} &\leq \lTV \mu_A - \pi \rTV \leq \lTV \mu_A P^t-\pi \rTV+\lTV \mu_A P^t-\mu_A \rTV\\
& \leq \lTV \mu_A P^t-\pi \rTV + t \frac{P(\partial A)}{\pi(A)},
\end{aligned}
\]
from which the result follows.
\end{proof}
We can use Lemma \ref{lower_bound_flux} to provide a lower bound on $\Phi_s(P)$ in terms of the corresponding mixing times.
\begin{lemma}\label{lemma:s_conductance_mix_time}
Let $P$ be a $\pi$-reversible Markov kernel. For every $s,\epsilon\in(0,1/2)$ 
we have 
\[
\Phi_s(P) \geq \frac{1/2-\epsilon}{t_{mix}(P, \epsilon,s^{-1})}
\,.
\]
\end{lemma}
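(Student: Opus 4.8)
The plan is to bound the flux $P(\partial A)/\pi(A)$ of each set $A$ appearing in the definition \eqref{s_conductance} of $\Phi_s(P)$ from below by invoking Lemma \ref{lower_bound_flux}, and then to convert this per-set bound into a statement about the worst-case mixing time over warm starts. Concretely, I would fix any $A \subset \sX$ with $s < \pi(A) \le 1/2$ and set $\mu_A(\cdot) = \pi(\cdot \cap A)/\pi(A)$. The first step is to verify that $\mu_A$ is a warm start of the right order: since $\mu_A(B) = \pi(B\cap A)/\pi(A) \le \pi(B)/\pi(A) < \pi(B)/s$ for every $B \subseteq \sX$, we have $\mu_A \in \sN(\pi, s^{-1})$ with $\sN$ as in \eqref{N_class}.

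Next I would set $T = t_{mix}(P, \epsilon, s^{-1})$ and use the definition \eqref{absolute_warm_start} of the worst-case mixing time: since $\mu_A \in \sN(\pi, s^{-1})$, it holds $T \ge t_{mix}(P, \epsilon, \mu_A)$, and because $t \mapsto \lTV \mu_A P^t - \pi \rTV$ is non-increasing (as $\pi P = \pi$ and $P$ is a contraction in total variation), this yields $\lTV \mu_A P^T - \pi \rTV \le \epsilon$. Applying Lemma \ref{lower_bound_flux} with $t = T$ then gives $\epsilon \ge \lTV \mu_A P^T - \pi \rTV \ge \tfrac{1}{2} - T\,P(\partial A)/\pi(A)$, which rearranges to $P(\partial A)/\pi(A) \ge (1/2 - \epsilon)/T$. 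Taking the infimum over all admissible $A$ produces exactly $\Phi_s(P) \ge (1/2-\epsilon)/t_{mix}(P, \epsilon, s^{-1})$, as claimed.

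The one point requiring care — and the only real obstacle — is the applicability of Lemma \ref{lower_bound_flux}, which needs $T \ge 1$. This holds in the non-trivial case: whenever a set $A$ with $s < \pi(A)\le 1/2$ exists, the associated $\mu_A$ satisfies $\lTV \mu_A - \pi \rTV \ge \pi(A^c) \ge 1/2 > \epsilon$, so no warm start in $\sN(\pi, s^{-1})$ mixes in zero steps and hence $T \ge 1$; if no such $A$ exists then $\Phi_s(P) = +\infty$ by convention and the inequality is vacuous. A secondary subtlety is the interplay between the infimum defining $t_{mix}(P, \epsilon, \mu_A)$ and the non-strict bound $\lTV \mu_A P^T - \pi \rTV \le \epsilon$ at $t = T$, which is exactly what the monotonicity argument above resolves; both points are routine once stated.
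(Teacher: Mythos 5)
Your proposal is correct and follows essentially the same route as the paper's proof: restrict $\pi$ to a set $A$ with $s<\pi(A)\leq 1/2$ to obtain an $s^{-1}$-warm start, apply Lemma \ref{lower_bound_flux} at $t=t_{mix}(P,\epsilon,s^{-1})$, and take the infimum over $A$. Your explicit checks of the $t\geq 1$ hypothesis and of the monotonicity needed to pass from the mixing-time infimum to the bound at $t=T$ are points the paper leaves implicit, but they do not change the argument.
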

\begin{proof}
Let $M = 1/s$. For any $A\subseteq \sX$ with $s<\pi(A)\leq1/2$ define
\[
\mu_A(B) = \frac{\pi(B \cap A)}{\pi(A)}, \quad B \subseteq\sX\,.
\]
By construction $\mu_A$ is a $M$-warm start.
Moreover, by Lemma \ref{lower_bound_flux} we have
\[
\lTV \mu_A P^t-\pi \rTV \geq \frac{1}{2}-t\,\frac{P(\partial A)}{\pi(A)}.
\]
Taking the infimum with respect to $A$ such that $\pi(A) > s$, we get
\[
\sup_{\mu \in \sN(\pi, s^{-1})}\, \lTV \mu P^t-\pi \rTV \geq 
\sup_{s<\pi(A)<1/2}\, \lTV \mu P^t-\pi \rTV \geq 
\frac{1}{2}-t\Phi_s(P).
\]
It follows 
$$
\Phi_s(P)\geq t^{-1}\left(\frac{1}{2}-\sup_{\mu \in \sN(\pi, s^{-1})}\, \lTV \mu P^t-\pi \rTV\right)\,.
$$
Taking $t=t_{mix}(P,\epsilon,s^{-1})$ completes the proof.
\end{proof}
We need moreover two other preliminary Lemmas. These can be seen as the analogue of Lemma $4.1$ and Theorem $4.2$ in \cite{AZ23} to the setting of random-scan Gibbs sampler, and the proofs follow similar lines. 
Let $\sX = \sX_1 \times \sX_2$ be a product space and $\pi \in \sP(\sX)$, whose associated Gibbs sampler has operator $G$  as in \eqref{two_blocks_gibbs_nested}. Let $T \, : \, \sX_1 \, \to \, \hat{\sX}_1$, with $\hat{\sX}_1$ with $\hat{\sX}_1$ being an arbitrary measurable space and $T$ such that
\begin{equation}\label{sufficient_T}
    \L(X_2 \mid X_1) = \L(X_2 \mid T(X_1))\quad\hbox{ under }(X_1,X_2)\sim \pi.
\end{equation}
For example, in the case of model \eqref{one_level_nested}, $\L(\psi \mid \bm{\theta}, Y_{1:J}) = \L\left(\psi \mid \bm{T}(\bm{\theta}), Y_{1:J}\right)$, with $\bm{T}$ the sufficient statistics defined in Section \ref{regularity_assumptions}.

Let $\left(T^{(t)}, X_2^{(t)} \right)_{t \geq 1}=\left(T(X_1^{(t)}), X_2^{(t)} \right)_{t \geq 1}$ be the stochastic process obtained as a time-wise mapping of the Markov chain $\left(X_1^{(t)}, X_2^{(t)} \right)_{t \geq 1}$, with operator $G$, under $(x_1, x_2)\mapsto (T(x_1), x_2)$. The latter process contains all the information characterising the convergence of $\left(X_1^{(t)}, X_2^{(t)} \right)_{t \geq 1}$, in the sense made precise in the following lemma. 
Below we denote by 
$\hat{\pi}=\L(T(X_1),X_2)$ under $(X_1,X_2)\sim \pi$, i.e.\ the push-forward of $\pi$ under $(x_1, x_2)\mapsto (T(x_1), x_2)$, by $\hat{\pi}_1(\d t\mid x_2)$ and $\hat{\pi}_2(\d x_2\mid t)$ its conditional distributions and by 
$\hat{G}$ the kernel of the two-block Gibbs sampler targeting $\hat{\pi}$.
Under this notation \eqref{sufficient_T} can be written as $\pi_2(\d x_2\mid x_1)=\hat{\pi}_2(\d x_2\mid T(x_1))$.
\begin{lemma}\label{sufficient_lemma}
Assume \eqref{sufficient_T} holds. Then, the process $\left(T^{(t)}, X_2^{(t)} \right)_{t \geq 1}$ is a Markov chain, its transition kernel coincides with $\hat{G}$, and 
\begin{align*}
t_{mix}(G, \epsilon, M)
&= t_{mix}(\hat{G}, \epsilon, M)
&(M, \epsilon) \in [1, \infty) \times (0,1)\,.
\end{align*}
\end{lemma}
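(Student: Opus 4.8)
The plan is to handle the three assertions in order, treating parts (1)--(2) via strong lumpability and part (3) via a pushforward/lifting argument that exploits the sufficiency condition \eqref{sufficient_T}. Write $\phi(x_1,x_2)=(T(x_1),x_2)$ for the lumping map, so that $\hat{\pi}=\phi_\#\pi$. The first step is to verify that $G$ is \emph{strongly lumpable} with respect to $\phi$, i.e.\ that $G(\x,\phi^{-1}(\hat A))$ depends on $\x=(x_1,x_2)$ only through $\phi(\x)$, and moreover equals $\hat G(\phi(\x),\hat A)$. This is checked on the two components of $G=\frac12 G_1+\frac12 G_2$ separately. For the $X_2$-update, the sufficiency \eqref{sufficient_T} says the law of the new $X_2$ given $\x$ depends on $x_1$ only through $T(x_1)$, so it matches the corresponding update of $\hat G$ driven by $\hat\pi(\d x_2\mid t_1)$. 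For the $X_1$-update, the new state is $(X_1',x_2)$ with $X_1'\sim\pi_1(\cdot\mid x_2)$, and the law of $T(X_1')$ is by definition the pushforward of $\pi_1(\cdot\mid x_2)$ under $T$, which is precisely $\hat\pi_1(\d t_1\mid x_2)$. Once lumpability and the identification of the lumped kernel with $\hat G$ are in place, parts (1) and (2) follow from the standard fact that the image of a strongly lumpable Markov chain is Markov with the lumped kernel for \emph{every} initial law: one conditions on the full history, uses $\P(\phi(X^{(t+1)})\in\hat A\mid X^{(0)},\dots,X^{(t)})=G(X^{(t)},\phi^{-1}(\hat A))=\hat G(\phi(X^{(t)}),\hat A)$, and then projects onto the coarser filtration.

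For part (3) I would first record the disintegration of the target. Condition \eqref{sufficient_T} is equivalent to the conditional independence $X_1\perp X_2\mid T(X_1)$ under $\pi$; consequently the fiber conditional $\kappa(\d x_1\mid t_1):=\L_\pi(X_1\mid T(X_1)=t_1)$ does \emph{not} depend on $x_2$, and $\pi$ factorizes as $\pi(\d x)=\kappa(\d x_1\mid t_1)\,\hat\pi(\d t_1,\d x_2)$ with $t_1=T(x_1)$. Two observations about warm starts \eqref{N_class} will be used: if $\mu\in\sN(\pi,M)$ then $\phi_\#\mu\in\sN(\hat\pi,M)$ (since $\phi_\#\mu(\hat A)=\mu(\phi^{-1}\hat A)\le M\pi(\phi^{-1}\hat A)=M\hat\pi(\hat A)$); and conversely any $\hat\mu\in\sN(\hat\pi,M)$ \emph{lifts} to $\mu:=\kappa\otimes\hat\mu\in\sN(\pi,M)$ with $\phi_\#\mu=\hat\mu$, where the warm bound $\mu\le M\pi$ is inherited from $\hat\mu\le M\hat\pi$ because both $\mu$ and $\pi$ share the same fiber conditional $\kappa$.

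The link between the two convergence profiles is the intertwining $\phi_\#(\mu G^t)=(\phi_\#\mu)\hat G^{\,t}$, which is immediate from lumpability. A general pushforward contraction gives $\lTV \phi_\#(\mu G^t)-\hat\pi\rTV\le \lTV \mu G^t-\pi\rTV$; conversely, whenever the time-$t$ law $\nu_t:=\mu G^t$ shares $\pi$'s fiber conditional $\kappa$, the two measures agree on every fiber of $\phi$ and hence $\lTV \nu_t-\pi\rTV=\lTV \phi_\#\nu_t-\hat\pi\rTV$ (write $\nu_t(A)-\pi(A)=\int\kappa(A_{t_1,x_2}\mid t_1)\,\d[\phi_\#\nu_t-\hat\pi]$ and bound the integrand by $1$). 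For a lifted start $\mu=\kappa\otimes\hat\mu$ this shared-conditional property propagates for all $t$: an $X_1$-update resets the fiber conditional to $\kappa$, while an $X_2$-update changes $X_2$ using $x_1$ only through $T(x_1)$ and therefore preserves both $\kappa$ and the conditional independence $X_1\perp X_2\mid T(X_1)$. This yields $\lTV \mu G^t-\pi\rTV=\lTV \hat\mu\hat G^{\,t}-\hat\pi\rTV$, hence $t_{mix}(\hat G,\epsilon,M)\le t_{mix}(G,\epsilon,M)$ after taking the supremum over $\hat\mu\in\sN(\hat\pi,M)$.

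The main obstacle is the reverse inequality $t_{mix}(G,\epsilon,M)\le t_{mix}(\hat G,\epsilon,M)$, which must be established for an \emph{arbitrary} (not necessarily lifted) warm start $\mu\in\sN(\pi,M)$. Here one needs the shared-conditional identity $\nu_t(\cdot\mid\phi)=\kappa$ to hold for the iterates of a general start, so that the TV distance of $\nu_t$ to $\pi$ is governed entirely by the lumped marginal $\phi_\#\nu_t=(\phi_\#\mu)\hat G^{\,t}$. The delicate point, specific to the random-scan architecture, is that the first $X_1$-update (which is what forces the fiber conditional to equal $\kappa$) may be delayed: unlike deterministic scan, where every iteration contains an $X_1$-update and the property holds for all $t\ge1$, random scan leaves a residual contribution from trajectories that have so far updated only $X_2$. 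Controlling this residual and thereby matching the worst-case mixing time over $\sN(\pi,M)$ exactly with that over $\sN(\hat\pi,M)$ is the crux of the proof, and is precisely where the sufficiency structure \eqref{sufficient_T} must be leveraged with care; this mirrors, in the random-scan setting, the role played by the per-iteration exact update in Lemma~4.1 of \cite{AZ23}.
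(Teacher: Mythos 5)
Your treatment of the Markovianity claim and of the identification of the lumped kernel with $\hat G$ is correct and essentially the same as the paper's: the paper computes $\L(\d T^{(t)},\d X_2^{(t)}\mid T^{(t-1)},X_2^{(t-1)})$ directly and uses \eqref{sufficient_T} exactly where your lumpability check does. Likewise, your lifted start $\mu=\kappa\otimes\hat\mu$ for the inequality $t_{mix}(\hat G,\epsilon,M)\le t_{mix}(G,\epsilon,M)$ matches the paper's choice $\mu(\d x_1,\d x_2)=\int\alpha(\d x_1\mid t,x_2)\,\nu(\d t,\d x_2)$, and your observation that warmness is preserved in both directions under the pushforward and the lift is the same as in the paper.

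The gap is in the reverse inequality, and you have named it yourself without closing it: you never establish $\lTV \mu G^t-\pi\rTV\le\lTV \phi_\#\mu\,\hat G^{\,t}-\hat\pi\rTV$ for an \emph{arbitrary} warm start $\mu$, which is exactly what is needed for $t_{mix}(G,\epsilon,M)\le t_{mix}(\hat G,\epsilon,M)$. Your proposed route, namely showing that the fiber conditional of $\mu G^t$ equals $\kappa$ for all $t$, does break down under random scan: with probability $2^{-t}$ no $X_1$-update has occurred by time $t$, so the fiber conditional still carries the initial law, and "controlling this residual" is left as an acknowledged open step. The paper does not repair this argument; it takes a different tool entirely. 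It observes that $(X_1^{(t)},X_2^{(t)})_{t\ge1}$ and $(T^{(t)},X_2^{(t)})_{t\ge1}$ are \emph{co-deinitialising} in the sense of Roberts and Rosenthal (2001) and invokes Corollary 2 of that paper to obtain the exact identity $\lTV \mu G^t-\pi\rTV=\lTV \nu\hat G^{\,t}-\hat\pi\rTV$ for \emph{every} initial $\mu$, with $\nu$ the pushforward of $\mu$. That corollary is a statement about total variation distances rather than about fiber conditionals, and it is precisely what bypasses the delayed-first-update obstruction you describe. Without it, or an equivalent argument bounding the residual, your proposal proves only one of the two inequalities and therefore does not establish the stated equality of mixing times.
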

\begin{proof}
The Markovianity of the sequence $\left(T^{(t)}, X_2^{(t)} \right)_{t \geq 1}$ follows by the one of $\left(X_2^{(t)}\right)_{t \geq 1}$, which is well known \citep{D08, R01}. We now show that $\left(T^{(t)}, X_2^{(t)} \right)_{t \geq 1}$ admits $\hat{P}$ as kernel. 
Using the definition of $G$ and the law of total probability conditioning on $X_1^{(t)}$, the conditional distribution of $\left(T^{(t)}, X_2^{(t)} \right)$ given $\left(T^{(t-1)}, X_2^{(t-1)} \right)$ is given by
\begin{align*}
&\L\biggl(\d T^{(t)}, \d X_2^{(t)} \mid T^{(t-1)}, X_2^{(t-1)} \biggr) 
\\&= 
\frac{1}{2}
\delta_{X_2^{(t-1)}}(\d X_2^{(t)})
\int
\delta_{T(x_1)}(\d T^{(t)})
\pi_1\left(\d x_1 \mid X_2^{(t-1)} \right)
\\
&+\frac{1}{2}
\delta_{T^{(t-1)}}(\d T^{(t)})
\int \pi_2(\d X_2^{(t)}\mid x_1)
\alpha\left(\d x_1 \mid T^{(t-1)}, X_2^{(t-1)}\right)
\end{align*}
where 
$\alpha\left(\d x_1 \mid t, x_2\right)$ denotes the conditional distribution of $X_1$ given $T(X_1)=t$ and $X_2=x_2$ when $(X_1,X_2)\sim \pi$.
Note that, by construction, $\int \mathbbm{1}(T(x_1)=t)\alpha\left(\d x_1 \mid t, x_2 \right)=1$ where $\mathbbm{1}$ denotes the indicator function.
Combining the latter with \eqref{sufficient_T} we have
\begin{equation}\label{key_equality}
\begin{aligned}
&\int \pi_2(\d X_2^{(t)}\mid x_1)
\alpha\left(\d x_1 \mid T^{(t-1)}, X_2^{(t-1)}\right)
\\
& = \int
\hat{\pi}_2\left(\d X_2^{(t)} \mid T(x_1) \right)
\alpha\left(\d x_1 \mid T^{(t-1)}, X_2^{(t-1)}\right)
= 
\hat{\pi}_2\left(\d X_2^{(t)} \mid T^{(t-1)} \right)\,.
\end{aligned}
\end{equation}
Also, by definition of $\hat{\pi}$ we have
$$\int
\delta_{T(x_1)}(\d T^{(t)})
\pi_1\left(\d x_1 \mid X_2^{(t-1)} \right)
=
\hat{\pi}_1\left(\d T^{(t)} \mid X_2^{(t-1)} \right)\,.
$$
Combining the above we obtain
\begin{align*}
\L\biggl(\d T^{(t)}, \d X_2^{(t)}& \mid T^{(t-1)}, X_2^{(t-1)} \biggr) \\
&=\frac{1}{2}\hat{\pi}_1\left(\d T^{(t)} \mid X_2^{(t-1)} \right)\delta_{X_2^{(t-1)}}(X_2^{(t)})
+\frac{1}{2}\hat{\pi}_2\left(\d X_2^{(t)} \mid T^{(t-1)}\right)\delta_{T^{(t-1)}}(T^{(t)})\\&
= \hat{G}\left(\left(T^{(t-1)}, X_2^{(t-1)} \right),\left(\d T^{(t)}, \d X_2^{(t)} \right)\right) \,,
\end{align*}
as desired. From the above one can easily deduce that $\left(X_1^{(t)}, X_2^{(t)} \right)_{t \geq 1}$ and $\left(T^{(t)}, X_2^{(t)} \right)_{t \geq 1}$ are \emph{co-deinitializing} as in \cite{R01} and thus, by 
Corollary 2 therein, for every $\mu \in\sP \left(\sX_1\times \sX_2 \right)$ we have
\begin{equation}\label{equality_suff}
\lTV \mu G^t-\pi \rTV = \lTV \nu \hat{G}^t-\hat{\pi}\rTV,
\end{equation}
where $\nu\in \mathcal{P} \left(\hat{\sX}_1\times \sX_2 \right)$ is the push forward of $\mu$ under $(x_1, x_2)\mapsto (T(x_1), x_2)$. Moreover, by \eqref{N_class}  we have that $\nu \in \sN \left(\hat{\pi}, M \right)$ whenever $\mu \in \sN \left(\pi, M \right)$. 
It follows that $t_{mix}(G, \epsilon, M)\leq t_{mix}(\hat{G},\epsilon, M)$. For the reverse inequality, fix $\nu \in \sN \left(\hat{\pi}, M \right)$ and take
\[
\mu(\d x_1, \d x_2) = \int \alpha\left(\d x_1 \mid t,x_2\right)\nu(\d t, \d x_2).
\]
By \eqref{N_class} we have $\mu \in \sN \left(\pi, M \right)$ and thus \eqref{equality_suff}. It follows $t_{mix}(\hat{G}, \epsilon, M)\leq t_{mix}(G,\epsilon, M)$ as desired.
\end{proof}

\begin{lemma}\label{lemma_mixing_times_gibbs_hierarchical}
Consider model \eqref{one_level_nested} and the Gibbs sampler defined in \eqref{two_blocks_gibbs_nested}. Then, under assumptions $(B1)$-$(B6)$, for every $\psi^* \in \R^D$, $M\geq 1$ and $\epsilon > 0$, there exists $T(\psi^*, \epsilon, M)$ such that
\begin{align}
Q_{\psi^*}^{(J)}\left(t_{mix}(G_J, \epsilon, M)  \leq T(\psi^*, \epsilon, M) \right) &\to 1,
\qquad \hbox{ as }J \to \infty\,.
\end{align}
\end{lemma}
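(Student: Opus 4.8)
The plan is to collapse the growing-dimensional sampler $G_J$ onto a fixed-dimensional one using the sufficiency structure of the model, and then to control the mixing of the latter through a Bernstein--von Mises / local limit argument combined with the perturbation bound of Theorem \ref{theorem:delta_Gibbs_operators}. First I would exploit that the exponential-family prior \eqref{exponential_family} makes the full conditional $\pi_J(\d\psi\mid\bm{\theta})$ depend on $\bm{\theta}$ only through the sufficient statistic $\bm{T}=(\sum_jT_1(\theta_j),\dots,\sum_jT_S(\theta_j))$, so that $\L(\psi\mid\bm{\theta},Y_{1:J})=\L(\psi\mid\bm{T}(\bm{\theta}),Y_{1:J})$, which is exactly \eqref{sufficient_T} with $X_1=\bm{\theta}$, $X_2=\psi$ and $T=\bm{T}$. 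Lemma \ref{sufficient_lemma} then gives $t_{mix}(G_J,\epsilon,M)=t_{mix}(\hat{G}_J,\epsilon,M)$, where $\hat{G}_J$ is the two-block Gibbs sampler targeting the pushforward posterior $\hat{\pi}_J=\L(\bm{T},\psi\mid Y_{1:J})\in\sP(\R^{S}\times\R^{D})$, whose dimension $S+D$ does not grow with $J$. It thus remains to bound in probability the mixing time of a Gibbs sampler on this fixed-dimensional, data-dependent target, mirroring Lemma~4.1 and Theorem~4.2 of \cite{AZ23}.

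Next I would identify a nondegenerate Gaussian limit. Since mixing times for warm starts are invariant under block-diagonal affine reparametrizations (which map $\sN(\hat{\pi}_J,M)$ bijectively onto the warm-start class of the reparametrized target and commute with the blockwise Gibbs updates), I recenter $\psi$ at the maximum marginal likelihood estimator $\hat{\psi}_J$ and rescale by $\sqrt{J}$, and recenter and rescale $\bm{T}$ by $\sqrt{J}$ likewise, obtaining $\tilde{\pi}_J$ and $\tilde{G}_J$ with $t_{mix}(\tilde{G}_J,\cdot)=t_{mix}(\hat{G}_J,\cdot)$. The goal is to show that $\tilde{\pi}_J$ converges in total variation, in $Q_{\psi^*}^{(\infty)}$-probability, to a fixed centered Gaussian $\tilde{\pi}_\infty$ on $\R^{S}\times\R^{D}$. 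Conditionally on $\psi$, $\bm{T}=\sum_jT(\theta_j)$ is a sum of independent vectors, so the moment conditions and non-singularity of $V(\psi^*)$ in $(B4)$ yield a central limit theorem with conditional asymptotic covariance $V(\psi^*)$, while $(B5)$--$(B6)$ upgrade this to a local limit theorem, i.e.\ to density (hence total-variation) convergence of the $\bm{T}$-marginal. Combining this with the Bernstein--von Mises theorem for $\psi$ under $(B1)$--$(B3)$ and the cross term controlled by $C(\psi^*)$ in \eqref{def_C_V}, one obtains the joint Gaussian limit $\tilde{\pi}_\infty$. Its covariance is nondegenerate: the $\psi$-marginal covariance is $\Fisher(\psi^*)^{-1}\succ0$ by $(B3)$, and the Schur complement (the conditional covariance of $\bm{T}$ given $\psi$) equals $V(\psi^*)\succ0$ by $(B4)$, so the joint block covariance is positive definite and the maximal correlation between the two blocks is strictly below one.

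Finally I would transfer the bound by perturbation. For the fixed Gaussian $\tilde{\pi}_\infty$ the random-scan two-block Gibbs sampler $\tilde{G}_\infty$ is positive semi-definite and, since the block maximal correlation is strictly below one, has a positive spectral gap; Cheeger's inequality then gives $g_\infty:=\Phi(\tilde{G}_\infty)=\Phi_0(\tilde{G}_\infty)>0$, whence $\Phi_{s'}(\tilde{G}_\infty)\geq g_\infty$ for every $s'\geq0$. Fixing $s=\epsilon/(2M)$ and writing $\delta_J=\lTV\tilde{\pi}_J-\tilde{\pi}_\infty\rTV$, Theorem \ref{theorem:delta_Gibbs_operators} gives, on the event $\{\delta_J\leq s\}$,
\[
\Phi_s(\tilde{G}_J)\;\geq\;\Phi_{s-\delta_J}(\tilde{G}_\infty)-\frac{4\delta_J}{s}\;\geq\;g_\infty-\frac{4\delta_J}{s}.
\]
Because the previous step gives $\delta_J\to0$ in $Q_{\psi^*}^{(\infty)}$-probability, the event $\{\delta_J\leq g_\infty s/8\}$ has probability tending to one, and on it $\Phi_s(\tilde{G}_J)\geq g_\infty/2>0$. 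Lemma \ref{conductance_convergence} applied with this $s$ then bounds $t_{mix}(\tilde{G}_J,\epsilon,M)=t_{mix}(G_J,\epsilon,M)$ by a constant $T(\psi^*,\epsilon,M)$ depending only on $\epsilon$, $M$ and $g_\infty=g_\infty(\psi^*)$, with probability tending to one, which is the claim.

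The main obstacle is the second step: establishing total-variation, rather than merely weak, convergence of the rescaled and data-dependent target $\tilde{\pi}_J$ to its Gaussian limit. This is precisely where the local-limit-theorem machinery behind $(B4)$--$(B6)$ is needed, both to obtain density convergence of the $\bm{T}$-marginal and to control the randomness of the centering $\hat{\psi}_J$ and of the empirical conditional moments under the data-generating law $Q_{\psi^*}^{(\infty)}$; the remaining steps are comparatively routine given the tools already developed.
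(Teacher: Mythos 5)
Your proposal follows essentially the same route as the paper's proof: sufficiency reduction via Lemma \ref{sufficient_lemma} to a fixed-dimensional Gibbs sampler on $(\bm{T},\psi)$, an affine recentering/rescaling, total-variation convergence of the rescaled posterior to a nondegenerate Gaussian (which the paper obtains by citing Lemma C.18 of \cite{AZ23}, resting on the same Bernstein--von Mises and local-limit-theorem machinery behind $(B4)$--$(B6)$ that you sketch), positivity of the limiting Gaussian Gibbs conductance (the paper cites \cite{amit1996convergence}, you argue it via maximal correlation and Cheeger, which amounts to the same thing), and finally the perturbation bound of Theorem \ref{theorem:delta_Gibbs_operators} combined with Lemma \ref{conductance_convergence}. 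The argument is correct and matches the paper's proof in structure and in every essential step.
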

\begin{proof}
Let $\bm{T}$ be as in Section \ref{regularity_assumptions},  
$\hat{\pi}_J=\L(\bm{T}(\bm{\theta}), \psi \mid Y_{1:J})$ under $(\bm{\theta}, \psi)\sim \pi_J$ and $\hat{G}_J$ be the kernel of the Gibbs sampler targeting $\hat{\pi}_J$. 
Then, by Lemma \ref{sufficient_lemma} it holds $t_{mix}(G_J, \epsilon, M) = t_{mix}(\hat{G}_J, \epsilon, M)$. 
Let $\tilde{\psi}$ and $\tilde{\bm{T}}$ be the one-to-one transformations defined in $(17)$ and $(19)$ of \cite{AZ23}. Denoting $\tilde{\pi}_J = \L( \tilde{\bm{T}}, \tilde{\psi}\mid Y_{1:J})$, by an analogous version of Lemma $2.1$ in \cite{AZ23} we have $t_{mix}(\hat{G}_J, \epsilon, M) = t_{mix}(\tilde{G}_J, \epsilon, M)$, where $\tilde{G}_J$ is the operator of the Gibbs sampler on $\tilde{\pi}_J$. 
By Lemma $C.18$ in \cite{AZ23} we have that 
\begin{equation}\label{eq:conv_posterior}
\lTV \tilde{\pi}_J-\tilde{\pi}\rTV \to 0
\end{equation}
 as $J \to \infty$ in $Q_{\psi^*}^{(\infty)}$-probability, where $\tilde{\pi}$ is a multivariate Normal distribution with non-singular covariance matrix. Thus, by Theorem $1$ in \cite{amit1996convergence} we have that $\Phi(\tilde{G}) > 0$, where $\tilde{G}$ is the operator of the Gibbs sampler on $\tilde{\pi}$. By Theorem \ref{theorem:delta_Gibbs_operators}, \eqref{eq:conv_posterior} implies that
 \[
 \lim \inf_J \, \Phi_s(\tilde{G}_J) \geq \Phi(\tilde{G}) > 0
 \]
 in $Q_{\psi^*}^{(\infty)}$-probability for every $s > 0$. The result then follows by Lemma \ref{conductance_convergence} with
 \[
 T(\psi^*, \epsilon, M) = \frac{\log(2M)-\log(\epsilon)}{-\log \left(1-\frac{\Phi^2(G)}{2} \right)}.
 \]
\end{proof}
\begin{proof}[Proof of Theorem \ref{theorem: gibbs_one_level_nested}]
Lemma \ref{lemma:s_conductance_mix_time} with $\epsilon = 1/4$ implies
$$\Phi_s(G_J)  \geq \frac{1}{4t_{mix}(G_J, 1/4, s^{-1})}\,.$$
Lemma \ref{lemma_mixing_times_gibbs_hierarchical} with $M = 1/s$ and $\epsilon = 1/4$ implies that there exists $T = T(\psi^*, \epsilon, M) < \infty$ such that $Q_{\psi^*}^{(J)}\left(t_{mix}(G_J, \epsilon, M)\leq T\right) \to 1$ as $J \to \infty$. It follows 
$Q_{\psi^*}^{(J)}\left(\Phi_s(G_J)  \geq g(s) \right)  \to 1$ as $J\to\infty$ with $g(s) = 1/(4T)>0$.
\end{proof}

\subsection{Proof of Corollary \ref{mixing_times_warm_hier}}
\begin{proof}
Combining Theorems \ref{theorem: gibbs_one_level_nested} and \ref{theorem_one_level_nested} we get
\begin{equation}\label{eq:bound_cond_MH}
Q_{\psi^*}^{(J)}\left(\Phi_s(P_J) \geq \frac{\kappa^2}{8}g(s) \right) \to 1,
\end{equation}
as $J \to \infty$, with $g(s) > 0$ for every $s > 0$. Let now
 \[
 T(\psi^*, \epsilon, M) = \frac{\log(2M)-\log(\epsilon)}{-\log \left(1-\frac{\kappa^4g^2(\epsilon/(2M))}{128} \right)},
 \]
then by Lemma \ref{conductance_convergence} and \eqref{eq:bound_cond_MH} we conclude
\[
Q_{\psi^*}^{(J)}\left(t_{mix}(P_J, \epsilon, M)  \leq T(\psi^*, \epsilon, M) \right) \to 1
\]
as $J \to \infty$.
\end{proof}

\subsection{Proof of Proposition \ref{prop:binary_MH}}
\begin{proof}
Without loss of generality we can assume $S^*$ to be compact, which by Weierstrass' Theorem implies that assumption $(C)$ is satisfied. Moreover, assumptions $(B4)-(B6)$ are satisfied by Lemma $5.3$ in \cite{AZ23}, so that the result follows by Corollary \ref{mixing_times_warm_hier}.
\end{proof}

\subsection{Proof of Proposition \ref{prop:feasible}}
Let $\mu, \pi \in \sP(\sX)$ and define
\begin{equation}\label{L2_start}
L_2(\mu, \pi) = \int_{\sX}\left( \frac{\d \mu }{\d \pi}(\x) \right)^2 \pi(\d \x) = \int_{\sX}\frac{\d \mu }{\d \pi}(\x) \mu(\d \x)\,,
\end{equation}
with $L_2(\mu, \pi)=\infty$ if $\mu$ is not absolutely continuous with respect to $\pi$. The next lemma shows that if $L_2(\mu, \pi)$ is small, then $\mu$ is close to a warm start in total variation.
\begin{lemma}\label{non_warm_start}
Assume $\mu \in \sP(\sX)$ and $L_2(\mu, \pi) \leq c$. Then, for every $r \in (0,1)$ there exists $\nu \in \sN\left(\pi, \frac{c}{r(1-r)} \right)$ such that
\[
\lTV \mu - \nu \rTV \leq r\frac{2-r}{1-r}.
\]
\end{lemma}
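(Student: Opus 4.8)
The plan is to construct $\nu$ by truncating the Radon--Nikodym density of $\mu$ at a suitable level and renormalizing. Write $g = \frac{\d \mu}{\d \pi}$, so that the hypothesis reads $\int_{\sX} g^2(\x)\,\pi(\d \x) = L_2(\mu,\pi) \leq c$. I would fix the truncation level $M = c/r$, set $\tilde g = \min(g, M)$, and let $\tilde\mu(\d\x) = \tilde g(\x)\,\pi(\d\x)$ be the associated sub-probability measure, of total mass $1-\beta$ where $\beta = \int_{\sX}(g-M)^+\,\d\pi$. The candidate warm start is then $\nu = \tilde\mu/(1-\beta)$.

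The first key step is to control the discarded mass $\beta$ using the $L_2$ assumption. On the set $\{g>M\}$ one has $g \leq g^2/M$, hence
\[
\beta = \int_{\{g>M\}} (g-M)\,\d\pi \leq \frac{1}{M}\int_{\{g>M\}} g^2\,\d\pi \leq \frac{c}{M} = r.
\]
This elementary $L_2\to L_1$ conversion is the crux of the argument. Given it, the warm-start property is immediate: the density of $\nu$ satisfies $\tilde g/(1-\beta) \leq M/(1-\beta) \leq M/(1-r) = \frac{c}{r(1-r)}$, using $\beta \leq r$, so by the equivalent description of $\sN$ in \eqref{equivalent_N_class} we conclude $\nu \in \sN\!\left(\pi, \frac{c}{r(1-r)}\right)$.

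It then remains to bound $\lTV \mu - \nu \rTV$. I would split $\mu - \nu = (\mu - \tilde\mu) + (\tilde\mu - \nu)$ and use the triangle inequality for the total variation norm. The first term is the nonnegative measure with density $(g-M)^+$, of total variation exactly $\beta$. The second term equals $-\frac{\beta}{1-\beta}\tilde\mu$, so its total variation is $\frac{\beta}{1-\beta}\lTV \tilde\mu \rTV \leq \frac{\beta}{1-\beta}$, since $\lTV \tilde\mu \rTV = 1-\beta \leq 1$. Summing the two contributions and using monotonicity of $\beta \mapsto \beta + \frac{\beta}{1-\beta}$ together with $\beta \leq r$ gives
\[
\lTV \mu - \nu \rTV \leq \beta + \frac{\beta}{1-\beta} \leq r + \frac{r}{1-r} = r\,\frac{2-r}{1-r},
\]
which is precisely the claimed bound.

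The main obstacle is the simultaneous control of both conclusions through the single renormalization factor $1-\beta$. Truncating at a high level $M$ keeps the density, and hence the warm-start constant $M/(1-\beta)$, small, but the renormalization needed to turn the sub-probability measure $\tilde\mu$ into a genuine probability measure is exactly what inflates the warm-start constant by $1/(1-\beta)$ and introduces the $\frac{\beta}{1-\beta}$ error term in total variation. Balancing these two effects is what forces the choice $M = c/r$ and makes the $(1-r)$ factor appear in both the warm-start constant and the total variation bound; the whole argument hinges on the $L_2$ hypothesis being strong enough to guarantee $\beta \leq r$ at this truncation level.
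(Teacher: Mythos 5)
Your proof is correct and follows essentially the same strategy as the paper's: identify the region where $\d\mu/\d\pi$ exceeds $c/r$, use the $L_2$ hypothesis in a Markov-type argument to show the associated mass is at most $r$, and renormalize, splitting the total variation error into a truncation term and a renormalization term. The only (immaterial) difference is that you cap the density at $c/r$ whereas the paper restricts $\mu$ to the set where the density is below $c/r$; both constructions yield the stated warm-start constant and total variation bound.
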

\begin{proof}
Fix $r \in (0,1)$ and let $A = \left\{\x \in \sX \, : \, \frac{\d \mu }{\d \pi}(\x) \geq c/r \right\}$. 
Then by definition of $L_2(\mu, \pi)$ and $A$ we have
\[
c \geq L_2(\mu, \pi) \geq \int_{A} \frac{\d \mu }{\d \pi}(\x) \mu(\d \x) \geq \frac{c}{r}\mu(A),
\]
which implies that $\mu(A) \leq r$. Define now $\nu \in \mathcal{P}(\sX)$ as $\nu(\cdot) = \mu(\cdot \cap A^c)/\mu(A^c)$. By definition of $A$ and the above, it holds
\begin{equation}\label{warm_nu}
\frac{\d \nu }{\d \pi}(\x) \leq \frac{c}{r\mu(A^c)} \leq \frac{c}{r(1-r)}
\end{equation}
for every $\x\in\sX$. 
Moreover, for every $B \subset \sX$, we have
\[
\begin{aligned}
\left \lvert \nu(B)-\mu(B) \right \rvert &= \left \lvert \mu(B\cap A^c)\left(\frac{1}{\mu(A^c)}-1 \right)-\mu(B \cap A) \right \rvert\\
& = \left \lvert \mu(B\cap A^c)\frac{\mu(A)}{\mu(A^c)}-\mu(B \cap A) \right \rvert 
\\&\leq
\mu(B\cap A^c)\frac{\mu(A)}{\mu(A^c)}+\mu(B \cap A)\\
& \leq \frac{\mu(A)}{\mu(A^c)}+\mu(A) \leq \frac{r}{1-r}+r = r\frac{2-r}{1-r},
\end{aligned}
\]
which implies $\lTV \mu - \nu \rTV \leq r\frac{2-r}{1-r}$, as desired.
\end{proof}
Define $\tilde{\pi}_J \in \sP\left(\R^{D+\ell J} \right)$ as
\begin{equation}\label{eq:pi_tilde_J_def}
\tilde{\pi}_J(\d\psi, \d\bm{\theta}) = N_{(S^*)}\left(\d\psi \mid \hat{\psi}_J, \frac{1}{J}\Fisher^{-1}(\psi^*) \right)\prod_{j = 1}^J\L(\d \theta_j \mid \psi, Y_J),    
\end{equation}
where $N_{(S^*)}$ is the normal distribution truncated on $S^*$ and $\Fisher(\psi^*)$ denotes the Fisher information matrix associated to the marginal likelihood of $\psi$, evaluated at the data-generating value $\psi^*$. We need another preliminary lemma.
\begin{lemma}\label{lemma:bounded_L2}
Under the same notation and assumptions of Proposition \ref{prop:feasible}, there exists a constant $C \geq 1$ such that
\[
Q^{(J)}_{\psi^*}\left(L_2(\mu_J, \tilde{\pi}_J) \leq C\right)\to 1
\]
as $J\to\infty$. The constant $C$ depends only on $c$ and $M$ used in the definition of $\mu_J$. 
\end{lemma}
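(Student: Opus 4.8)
The plan is to exploit the conditional product structure shared by $\mu_J$ and $\tilde{\pi}_J$. Both measures factorise as a marginal law on $\psi$ times $J$ independent conditionals on the $\theta_j$'s: writing $U_J=\text{Unif}(B_{cJ^{-1/2}}(\hat{\psi}_J))$, $\Pi_J=N_{(S^*)}(\hat{\psi}_J,\tfrac1J\Fisher^{-1}(\psi^*))$, $\mu_{j,\psi}=\nu_{j,\psi}(P^{\psi,Y_j})^{t_J}$ and $\pi_{j,\psi}=\L(\d\theta_j\mid\psi,Y_j)$, one has $\frac{\d\mu_J}{\d\tilde{\pi}_J}(\psi,\bm{\theta})=\frac{\d U_J}{\d\Pi_J}(\psi)\prod_{j=1}^J\frac{\d\mu_{j,\psi}}{\d\pi_{j,\psi}}(\theta_j)$. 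Substituting into \eqref{L2_start} and integrating out $\bm{\theta}$ gives the clean identity
\[
L_2(\mu_J,\tilde{\pi}_J)=\int \frac{\d U_J}{\d\Pi_J}(\psi)\,\Big(\prod_{j=1}^J L_2(\mu_{j,\psi},\pi_{j,\psi})\Big)\,U_J(\d\psi)\,,
\]
so it suffices to bound the density ratio $\rho_\psi:=\frac{\d U_J}{\d\Pi_J}(\psi)$ and the conditional factors uniformly over $\psi$ in the support of $U_J$.

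For the conditional factors I would use the conductance hypothesis. On the event $E_J=\{B_{cJ^{-1/2}}(\hat{\psi}_J)\subseteq S^*\}$ every $\psi$ in the support of $U_J$ lies in $S^*$; there the standing reversibility and positive semi-definiteness assumptions guarantee that $P^{\psi,Y_j}$ is $\pi_{j,\psi}$-reversible and positive semi-definite, while assumption (C) together with \eqref{eq:bound_cond_cond_hier} gives $\kappa(P^{\psi,Y_j})\ge\kappa$. Cheeger's inequality then yields a spectral gap at least $\kappa^2/2$, and the standard $L^2$ contraction for reversible positive semi-definite kernels gives $L_2(\mu_{j,\psi},\pi_{j,\psi})-1=\chi^2(\mu_{j,\psi},\pi_{j,\psi})\le(1-\kappa^2/2)^{2t_J}\,\chi^2(\nu_{j,\psi},\pi_{j,\psi})$. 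Since $\nu_{j,\psi}$ is $M$-warm we have $\chi^2(\nu_{j,\psi},\pi_{j,\psi})\le M-1\le M$, and the definition $t_J=\log(J)/(-\log(1-\kappa^2/2))$ makes $(1-\kappa^2/2)^{t_J}=1/J$, so each factor is at most $1+M/J^2$ and $\prod_{j=1}^J L_2(\mu_{j,\psi},\pi_{j,\psi})\le(1+M/J^2)^J\le e^{M}$, a bound constant in $J$.

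For $\rho_\psi$ I would simply compute both Lebesgue densities. The uniform density on the ball equals $J^{D/2}/(V_D c^D)$, with $V_D$ the volume of the unit ball, while on $S^*$ the truncated Gaussian has density $Z_J^{-1}(2\pi)^{-D/2}J^{D/2}\det\Fisher(\psi^*)^{1/2}\exp(-\tfrac J2(\psi-\hat{\psi}_J)^\top\Fisher(\psi^*)(\psi-\hat{\psi}_J))$, where $Z_J\le1$ is the truncation mass. The two $J^{D/2}$ factors cancel, which is the crucial point, and on the ball the quadratic form is at most $c^2\lambda_{\max}(\Fisher(\psi^*))/2$; hence $\rho_\psi\le (2\pi)^{D/2}(V_Dc^D)^{-1}\det\Fisher(\psi^*)^{-1/2}e^{c^2\lambda_{\max}(\Fisher(\psi^*))/2}=:C_1$, a constant independent of $J$. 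Combining the two bounds gives $L_2(\mu_J,\tilde{\pi}_J)\le C_1 e^{M}=:C$ on $E_J$.

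Finally I would verify $Q^{(J)}_{\psi^*}(E_J)\to1$: consistency of the maximum marginal likelihood estimator, a consequence of $(B1)$--$(B3)$ via the Bernstein--von Mises machinery already invoked in the proof of Theorem \ref{theorem_one_level_nested}, gives $\hat{\psi}_J\to\psi^*$ in $Q^{(\infty)}_{\psi^*}$-probability, and since $\psi^*$ lies in the interior of $S^*$ while the radius $cJ^{-1/2}\to0$, the ball is eventually contained in $S^*$ with probability tending to one. The main obstacle I anticipate is the uniform control of $\rho_\psi$: the bound is finite only because the normalisation of the shrinking Gaussian ($\propto J^{D/2}$) exactly matches that of the uniform on a ball of radius $\propto J^{-1/2}$, so one must track both normalising constants (and the truncation mass $Z_J$) carefully to see the $J$-dependence cancel. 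The product bound in the second step is where the $J$-free conclusion really originates, relying essentially on $t_J$ growing only logarithmically, so that $J$ factors each within $1+O(J^{-2})$ of one multiply to a constant.
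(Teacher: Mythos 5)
Your proof is correct and follows essentially the same route as the paper's: the same factorisation of $L_2(\mu_J,\tilde{\pi}_J)$ into the $\psi$-marginal density ratio times the product of the $J$ conditional $L_2$ factors, with each factor controlled to within $1+O(M/J)$ via the logarithmic choice of $t_J$ so that the product stays below $e^M$. The only differences are in the sub-arguments: for the $\psi$-marginal the paper cites Lemma C.45 of \cite{AZ23} (asserting $R(c)$-warmness of the uniform with respect to the truncated Gaussian) where you compute the two densities directly and track the cancelling $J^{D/2}$ normalisations, and for the conditional factors the paper uses the total-variation bound of Lemma \ref{conductance_convergence} in place of your $\chi^2$/spectral-gap contraction --- both yield the same $e^M$ bound.
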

\begin{proof}
By Lemma $C.45$ in \cite{AZ23} there exist a constant $R=R(c)\in [1,\infty)$ such that 
$$Q^{(J)}_{\psi^*}\left(\mu_J^{(-2)} \in \sN(\tilde{\pi}_J^{(-2)}, R)\right) \to 1\,,$$ 
as $J \to \infty$. 
Thus, with probability converging to $1$ as $J\to\infty$ under $Q^{(J)}_{\psi^*}$, we have
\begin{equation}\label{first_L2}
\begin{aligned}
L_2(\mu_J, \tilde{\pi}_J) 
&=
\int \frac{\d \mu_J}{\d \tilde{\pi}_J}(\psi, \bm{\theta}) \d \mu_J(\d \psi, \d \bm{\theta}) 
=
\int \frac{\d \mu_J^{(-2)}}{\d \tilde{\pi}_J^{(-2)}}(\psi) 
\left(\prod_{j = 1}^J
\frac{\d \mu_J(\theta_j\mid \psi)}{\d \pi_J(\theta_j\mid \psi)} \right)
\d \mu_J(\d \psi, \d \bm{\theta})\\
& 
\leq R\int_{S^*} 
\left(\prod_{j = 1}^J\int \frac{\d \mu_J(\theta_j\mid \psi)}{\d \pi_J(\theta_j\mid \psi)}\, 
\mu_J(\d\theta_j \mid \psi)\right)\mu_J^{(-2)}(\d \psi),
\end{aligned}
\end{equation}
by \eqref{equivalent_N_class} and \eqref{eq:pi_tilde_J_def}. 
Moreover, since $\nu_{j,\psi}  \in \sN\left(\pi_J\left(\theta_j\mid \psi \right), M \right)$, $\mu_J(\d\theta_j \mid \psi)=\nu_{j, \psi}(P^{\psi, Y_j})^{t_J}(\d\theta_j)$, and using assumption (C) and Lemma \ref{conductance_convergence}, we have
\[
\left \lvert \int \frac{\d \mu_J(\theta_j \mid \psi)}{\d \pi_J(\theta_j \mid \psi)}\, \mu_J(\d\theta_j \mid \psi) - 1 \right \rvert \leq M \lTV\nu_{j, \psi}(P^{\psi, Y_j})^{t_J}(\d \theta_j)- \pi_J(\d \theta_j \mid \psi)\rTV \leq M\left(1-\frac{\kappa^2}{2} \right)^{t_J}.
\]
Using the definition of $t_J$ we obtain
\begin{equation}\label{eq:exp_inequality}
\prod_{j = 1}^J\int \frac{\d \mu_J(\theta_j \mid \psi)}{\d \pi_J(\theta_j \mid \psi)}\, \mu_J(\d\theta_j \mid \psi) \leq \left( 1+\frac{M}{J}\right)^J \leq e^{M},    
\end{equation}
for every $\psi \in S^*$ and $J\geq 1$. 
By \eqref{first_L2} and \eqref{eq:exp_inequality} we obtain the desired statement with $C=Re^M$.
\end{proof}
\begin{proof}[Proof of Proposition \ref{prop:feasible}]
Since the conditional distribution of $\bm{\theta}$ given $\psi$ under $\pi_J$ and $\tilde{\pi}_J$ coincide, we have
\begin{equation}\label{distance_pi_first}
\delta_J := \lTV \pi_J - \tilde{\pi}_J \rTV = \lTV \pi_J^{(-2)}-\tilde{\pi}^{(-2)}_J\rTV,
\end{equation}
where $\pi_J^{(-2)}$ and $\tilde{\pi}^{(-2)}_J$ are the marginal distributions of $\psi$ under $\pi_J$ and $\tilde{\pi}_J$, respectively.
Thus, by the Bernstein-von Mises Theorem (e.g. Theorem $10.1$ in \cite{V00}), whose requirements are met thanks to assumptions $(B1)$-$(B3)$, we have
\begin{equation}\label{distance_pi}
\delta_J 
\to 0,
\end{equation}
as $J \to \infty$ in $Q_{\psi^*}^{(\infty)}$-probability. 
Consider now the kernel $\tilde{P}_J$ targeting $\tilde{\pi}_J$, defined as
\[
\tilde{P}_J = \frac{1}{2}\tilde{G}_{1,J}+\frac{1}{2}P_{2, J},
\]
with $P_{2,J}$ as in \eqref{two_blocks_MH_nested} and $\tilde{G}_{1,J}$ Gibbs update on $\tilde{\pi}_J(\psi, \bm{\theta})$. By construction and Lemma \ref{lemma:Delta_GS}, for every $J$ and $N \geq 1$ we have
\begin{equation}\label{distance_P}
\begin{aligned}
\Delta\left( \tilde{P}_J, P_J, N\right) &= \sup_{\mu \in \sN(\tilde{\pi}_J, N)} \lTV \mu \tilde{P}_J-\mu P_J \rTV\\
& = \frac{1}{2}\sup_{\mu \in \sN(\tilde{\pi}_J, N)} \lTV \mu \tilde{G}_{1,J}-\mu G_{1,J} \rTV \leq N\delta_J\,.
\end{aligned}
\end{equation}
Combining \eqref{distance_P} with Theorem \ref{theorem:distance_operators}, for every $s \geq \delta_J$ we get
\begin{equation}\label{eq:cond_bound_with_3}
    \Phi_s(\tilde{P}_J) 
\geq
\Phi_{s-\delta_J}(P_J)-\Delta\left( \tilde{P}_J, P_J, 1/s\right)-\frac{2}{s}\delta_J
\geq
\Phi_{s-\delta_J}(P_J)-\frac{3}{s}\delta_J.
\end{equation}
Therefore, by \eqref{eq:cond_bound_with_3}, \eqref{distance_pi} and Theorems \ref{theorem_one_level_nested} and \ref{theorem: gibbs_one_level_nested},  for every $s>0$ there exists a constant $h(s) > 0$ such that
\begin{equation}\label{conductance_Ptilda}
Q_{\psi^*}^{(J)}\left(\Phi_s(\tilde{P}_J)  \geq h(s) \right) \to 1,
\end{equation}
as $J \to \infty$. 
Note that the value $h(s)$ depends on the model and data generating process under consideration, and thus in particular also on $\psi^*$.
Now, for every $\nu_J \in \sN(\tilde{\pi}_J, N)$ with $N \geq 1$ and $s>0$, applying the triangular inequality followed by  \eqref{distance_P} and Lemma \ref{conductance_convergence}, we have
\[
\begin{aligned}
\lTV \nu_J P_J^t - \pi_J \rTV &\leq 
\lTV \nu_J P_J^t - \nu_J \tilde{P}_J^t \rTV
+
\lTV \nu_J \tilde{P}_J^t - \tilde{\pi}_J \rTV
+
\lTV \tilde{\pi}_J - \pi_J \rTV
\\
&\leq \Delta\left( \tilde{P}_J, P_J, N\right)+\lTV \nu_J \tilde{P}_J^t - \tilde{\pi}_J \rTV+ \delta_J
\\&
\leq
(N+1)\delta_J+Ns + N\left(1-\frac{\Phi^2_s(\tilde{P}_J)}{2} \right)^t\,.
\end{aligned}
\]
By Lemma \ref{lemma:bounded_L2} there exists a constant $C \geq 1$ such that 
$L_2(\mu_J, \tilde{\pi}_J) \leq C$
with $Q_{\psi^*}^{(J)}$-probability going to $1$ as $J\to\infty$.
Thus, for every $r\in(0,1)$, by  Lemma \ref{non_warm_start}, there exist $\nu_J \in \sN\left(\tilde{\pi}_J, \frac{C}{r(1-r)} \right)$ such that $\lTV \mu_J - \nu_J \rTV \leq r\frac{2-r}{1-r}$ as $J \to \infty$ in $Q_{\psi^*}^{(\infty)}$-probability. 
Thus, for every fixed $t\in\mathbb{N}$ and $s,r\in(0,1)$ we have
\begin{align*}
\lTV \mu_J P_J^t - \pi_J \rTV&\leq 
\lTV \mu_J - \nu_J \rTV
+
\lTV \nu_J P_J^t - \pi_J \rTV
\\&\leq
r\frac{2-r}{1-r}
+\left(\frac{C}{r(1-r)}+1\right)\delta_J
+\frac{Cs}{r(1-r)}
+ \frac{C}{r(1-r)}\left(1-\frac{h(s)^2}{2} \right)^t
\end{align*}
as $J \to \infty$ in $Q_{\psi^*}^{(\infty)}$-probability, with $C$ being the constant of Lemma \ref{lemma:bounded_L2}.
Fix $\epsilon>0$.
Then one can that $r=r(\epsilon)\in (0,1)$, $s=s(\epsilon)\in(0,1)$ and $t=t(\epsilon)\in\mathbb{N}$, all depending on $\epsilon$, such that 
$r\frac{2-r}{1-r}< \epsilon/4$,
$\frac{Cs}{r(1-r)}< \epsilon/4$ and 
$\frac{C}{r(1-r)}\left(1-\frac{h(s)^2}{2} \right)^t< \epsilon/4$.
Combining with $\delta_J\to 0$ as $J \to \infty$ in $Q_{\psi^*}^{(\infty)}$-probability it follows that 
$\lTV \mu_J P_J^{t(\epsilon)} - \pi_J \rTV<\epsilon$
as $J \to \infty$ in $Q_{\psi^*}^{(\infty)}$-probability and thus 
$Q_{\psi^*}^{(J)}\left(t_{mix}(P_J, \epsilon, \mu_J)\leq t(\epsilon)\right) \to 1$ as $J \to \infty$ as desired. 
The results follows with $T\left(\psi^*, \epsilon \right)=t(\epsilon)$ where the dependence on $\psi^* $ comes from $h(s)$.
\end{proof}

\subsection{Proof of Proposition \ref{MwG_logistic_regression}}
\begin{proof}
The result follows immediately by combining Proposition \ref{prop:RWM_logconcave} with Corollary \ref{cor:bound_conductance}, choosing $K = S\times \R^d$.
\end{proof}
\subsection{Proof of Lemma \ref{dimensionality_reduction_logistic}}
\begin{proof}
Notice that by definition of $T=\bm{\theta}^\top\Sigma^{-1}\bm{\theta}$ we have
\[
\pi(\d \alpha \mid \bm{\theta}) = \tilde{\pi}(\d \alpha \mid T).
\]
Thus, for every $\epsilon > 0$ and $M>0$ by Lemma \ref{sufficient_lemma} it holds $t_{mix}(\tilde{G}, \epsilon, M) = t_{mix}(G, \epsilon, M)$. Thus, Lemma \ref{lemma:s_conductance_mix_time} with $\epsilon = 1/4$ implies
\begin{equation}\label{eq:first_bound_mixing}
\Phi_s(G)  \geq \frac{1}{4t_{mix}(G, 1/4, s^{-1})} = \frac{1}{4t_{mix}(\tilde{G}, 1/4, s^{-1})}\,.
\end{equation}
Moreover, by Lemma \ref{conductance_convergence} we have
\begin{equation}\label{eq:second_bound_mixing}
t_{mix}(\tilde{G}, 1/4, s^{-1}) \leq \frac{\log(8)-\log(s)}{-\log \left(1-\frac{\Phi^2_{\frac{s}{8}}(\tilde{G})}{2} \right)}.
\end{equation}
The result follows by combining \eqref{eq:first_bound_mixing} and \eqref{eq:second_bound_mixing}.
\end{proof}
\subsection{Proof of Proposition \ref{prop:conductance_diffusion}}
\begin{proof}
Denote with $\P_\theta$ the law of $Y_i$ induced by \eqref{eq:sde} for a fixed $\theta$ and with $\W(X_{t_{i-1}}, X_{t_i})$ the law of the corresponding Brownian bridge. By equation $(10)$ in \cite{roberts2001inference} we have
\begin{equation}\label{eq:RN_derivative}
\frac{\d \P_\theta}{\d \W(X_{t_{i-1}}, X_{t_i})}(Y_i) = G(Y_i, \theta)\frac{g(X_{t_{i-1}}, X_{t_i})}{f_\theta(X_{t_{i-1}}, X_{t_i})}, \quad g(x, y) = \frac{1}{\sqrt{2\pi}}e^{-\frac{1}{2\Delta}(y-x)^2}.
\end{equation}
By equation $(4)$ in \cite{beskos2006retrospective} thanks to $(C1)$ we have
\begin{equation}\label{eq:prop_RN}
G(Y_i, \theta) = \text{exp}\left\{-\frac{1}{2}\int_{t_{i-1}}^{t_i}\left[b^2(\theta, Y_{i,t})-b'(\theta, Y_{i,t})  \right] \, \d t \right\}.
\end{equation}
Combining \eqref{eq:RN_derivative} and \eqref{eq:prop_RN}, then for every $\theta$ by $(C2)$ there exists $M(\theta) > 0$ such that $G(Y_i, \theta) \leq M(\theta)$ for every $Y_i$. Thanks to $(C3)$ and continuity of $b$ with respect to $\theta$ we can find a constant $M > 0$ such that $M(\theta) \leq M$ for every $\theta \in K$. Finally, thanks to $(C3)$ and $(C4)$ we have
\[
\frac{g(X_{t_{i-1}}, X_{t_i})}{f_\theta(X_{t_{i-1}}, X_{t_i})} \leq R,
\]
for a suitable $R = R(c) >0$, for every $\theta \in S$ and $i = 1, \dots, N$. Thus, in the end we get
\[
\frac{\d \P_\theta}{\d \W(X_{t_{i-1}}, X_{t_i})}(Y_i) \leq MR
\]
Thus, by Proposition \ref{prop:conductance_ind} we have $\kappa(P_i^\theta) \geq m/M$ for every $\theta \in K$, which by  Lemma \ref{lemma_product_conductance} implies
\[
\kappa(P_2) \geq \frac{1}{4M^2R^2}.
\]
The result then follows by Corollary \ref{cor:bound_conductance} with $\kappa = \frac{1}{4M^2R^2}$.
\end{proof}

\end{appendices}

\end{document}